\title{Weighted $k$-Server Bounds via Combinatorial Dichotomies
\thanks{This work was supported by NWO grant 639.022.211 and ERC consolidator
grant 617951.}}
\author{
Nikhil Bansal\thanks{TU Eindhoven, Netherlands.
\texttt{\{n.bansal,m.elias,g.koumoutsos\}@tue.nl}},
Marek Eli\'a\v{s}\footnotemark[2],
Grigorios Koumoutsos\footnotemark[2]}
\newtheorem{theorem}{Theorem}[section]
\newtheorem{definition}[theorem]{Definition}
\newtheorem{observation}[theorem]{Observation}
\newtheorem{lemma}[theorem]{Lemma}
\newcounter{note}[section]
\DeclareMathOperator{\ALG}{ALG}
\DeclareMathOperator{\ADV}{ADV}
\DeclareMathOperator{\OPT}{OPT}
\DeclareMathOperator{\WF}{WF}
\DeclareMathOperator{\WFA}{WFA}
\DeclareMathOperator{\SWF}{SW}
\DeclareMathOperator{\cost}{cost}
\DeclareMathOperator{\cl}{cl}
\newcommand{\cI}{\ensuremath{\mathcal{I}}}
\newcommand{\cL}{\ensuremath{\mathcal{L}}}
\newcommand{\cS}{\ensuremath{\mathcal{S}}}
\newcommand{\cQ}{\ensuremath{\mathcal{Q}}}
\newcommand{\cE}{\ensuremath{\mathcal{E}}}
\newcommand{\cG}{\ensuremath{\mathcal{G}}}
\DeclareMathOperator*{\argmin}{arg\,min}
\begin{document}

\maketitle

\begin{abstract}

The weighted $k$-server problem is a natural generalization of the $k$-server problem where each server has a different weight. We consider the problem on uniform metrics, which corresponds to a natural generalization of paging. Our main result is a doubly exponential lower bound on the competitive ratio of any deterministic online algorithm, that essentially matches the known upper bounds for the problem and closes  a large and long-standing gap.

The lower bound is based on relating the weighted $k$-server problem to a certain combinatorial problem and proving a Ramsey-theoretic lower bound for it.
 This combinatorial connection also reveals several structural properties of low cost feasible solutions to serve a sequence of requests. We use this to show that the generalized Work Function Algorithm achieves an almost optimum competitive ratio, and to obtain new refined upper bounds on the competitive ratio for the case of $d$ different weight classes.
\end{abstract}

\newpage
\tableofcontents
\newpage

\section{Introduction}
The $k$-server problem is one of the most natural and fundamental online problems and its study has been quite influential in the development of competitive analysis (see e.g.~\cite{BEY98,KP95,Kou09,Sit14,BBMN15}).
The problem is almost settled in the deterministic case: no algorithm can be better than $k$-competitive in any metric space of more than $k$ points \cite{MMS90},
and in their breakthrough result, Koutsoupias and Papadimitriou \cite{KP95} showed that the Work Function Algorithm (WFA) is $(2k-1)$-competitive in any metric space. Tight $k$-competitive algorithms are also known for several special metrics \cite{ST85,CKPV91,CL91,KP96}.

Despite this progress, several natural variants and generalizations of the $k$-server problem are very poorly understood.
In particular, they exhibit very different and intriguing behavior and the techniques for the standard $k$-server problem do not seem to apply to them
(we describe some of these problems and results in Section \ref{sec:rel_work}). 
Getting a better understanding of such problems is a natural step towards building a deeper theory of online computation.

\paragraph{Weighted $k$-server.}
Perhaps the simplest such problem is the weighted $k$-server problem on uniform metrics, that was first introduced and studied by Fiat and Ricklin~\cite{FR94}. Here, there are $k$ servers located at points of a uniform metric space. In each step a request arrives at some point and must be served by moving some server there. Each server $s_i$ has a positive weight $w_i$ and it costs $w_i$ to move $s_i$ to another point. The goal is to minimize the total cost for serving the requests.

Note that in the unweighted case where each $w_i=1$, this is the classic and extensively studied paging/caching problem \cite{ST85}, for which several tight $k$-competitive deterministic and $O(\log k)$-competitive randomized algorithms are known \cite{BEY98}.
Indeed, one of the motivations of \cite{FR94} for studying the weighted $k$-server problem was that it corresponds to paging where each memory slot has a different replacement cost.\footnote{We crucially note that this problem should not be confused by the related, but very different, weighted paging problem where the weights are on the pages instead of the servers. Weighted paging corresponds to (unweighted) $k$-server on weighted star metrics and is very well understood. In particular, tight $k$-competitive deterministic and $O(\log k)$-competitive randomized algorithms are known~\cite{CKPV91,Young94,BBN12}.}

Throughout this paper, we only consider the uniform metric, and by weighted $k$-server we always mean the problem on the uniform metric, unless stated otherwise.

\paragraph{Previous Bounds.}
There is surprisingly huge gap between the known upper and lower bounds on the
competitive ratio for weighted $k$-server.
In their seminal paper, Fiat and Ricklin \cite{FR94} gave the first deterministic  algorithm with a doubly exponential competitive ratio of about $2^{4^k} = 2^{2^{2k}}$.
They also showed a (singly) exponential lower bound of $(k+1)!/2$ on the competitive ratio of deterministic algorithms, which can be improved to
$(k+1)!-1$ by a more careful argument \cite{ChV13}.

More recently, 
 Chiplunkar and Vishwanathan \cite{ChV13} considered a simple memoryless randomized algorithm, where server $s_i$ moves to the requested point with some fixed probability $p_i$. They showed that there is always a choice of $p_i$ as function of the weights, for which this gives an $\alpha_k < 1.6^{2^k}$-competitive algorithm against adaptive online adversaries. Note that
$\alpha_k \in [2^{2^{k-1}}, 2^{2^k}]$. They also showed that this ratio is tight for such randomized memoryless algorithms.
By the simulation technique of Ben-David et al.~\cite{BBKTW94} that relates different adversary models, this gives an implicit  $\alpha_k^2 \leq 2^{2^{k+1}}$-competitive deterministic algorithm\footnote{A more careful analysis shows that the Fiat-Ricklin algorithm \cite{FR94} is also $2^{2^{k+O(1)}}$ competitive \cite{Chip-pc}.}.

\paragraph{Conjectured upper bound.}
Prior to our work, it was widely believed that the right competitive ratio should be $(k+1)!-1$. In fact, \cite{ChV13} mention that WFA is a natural candidate to achieve this.

There are several compelling reasons for believing this. First, for classical $k$-server, the lower bound of $k$ is achieved in metric spaces with $n=k+1$ points, where each request is at the (unique) point with no online server. The $(k+1)!-1$ lower bound for weighted $k$-server~\cite{FR94,ChV13} also uses $n=k+1$ points. More importantly, this is in fact the right bound for $n=k+1$. This follows as the weighted $k$-server problem on $n$ points is a Metrical Service System (MSS)\footnote{This is a Metrical Task System \cite{BLS92} where the cost in each state is either $0$ or infinite (called \textit{forcing task systems} in \cite{MMS90}).} with $N=\binom{n}{k}k!$ states, which correspond to the $k$-tuples describing the configuration of the servers. It is known that WFA is $(N-1)$-competitive for any MSS with $N$ states \cite{CL96}. As $N=(k+1)!$ for $n=k+1$, this gives the $(k+1)!-1$ upper bound. Moreover, Chrobak and Sgall~\cite{CS04} showed that WFA is exactly $(k+1)!-1 =3! -1 =5$-competitive for $k=2$ servers (with arbitrary $n$), providing strong coincidental evidence for the $(k+1)!-1$ bound for general $k$.

\subsection{Our Results}
In this paper, we study the weighted $k$-server problem systematically and
obtain several new results.
A key idea is to relate online weighted $k$-server to a natural {\em
offline} combinatorial question about the structure of all possible ``feasible
labelings'' for a hierarchical collection of intervals of depth $k$.
In particular, we show that the competitive ratio for weighted $k$-server is
closely related to a certain Ramsey-theoretic parameter of this combinatorial
problem. This parameter, let us call it $f(k)$ for the discussion here, 
reflects the amount of uncertainty that
adversary can create about the truly good solutions in an instance.
This connection is used for both upper and lower bound results in this paper.

\paragraph{Lower Bounds.} Somewhat surprisingly, we show that the doubly exponential upper bounds \cite{FR94,ChV13} for the problem
are essentially the best possible (up to lower order terms in the exponent).

\begin{theorem}
\label{thm:lb}
Any deterministic algorithm for the weighted $k$-server problem on uniform metrics has a competitive ratio at least $\Omega(2^{2^{k-4}})$.
\end{theorem}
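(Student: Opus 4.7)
The plan is to construct a recursive adversarial instance whose analysis reduces to a Ramsey-type statement about feasible labelings of a depth-$k$ hierarchy, and to show that this Ramsey parameter grows doubly exponentially. I would take the weights to grow sufficiently fast, say $w_i = W^i$ for a large constant $W$, so that any reasonable online algorithm never moves a heavier server when a lighter one suffices. The metric points would be organized as a rooted tree of depth $k$ in which the leaves correspond to server configurations and the level-$d$ branches correspond to where the server of weight $w_{k-d+1}$ sits. The base case $\sigma_1$ is the standard paging-style instance achieving a constant lower bound.

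The instance $\sigma_k$ is a concatenation of many disjoint ``blocks,'' each a copy of $\sigma_{k-1}$ placed in a different depth-$(k-1)$ subtree, interleaved with light separator requests that may force the heaviest server $s_k$ to move between subtrees. $\OPT$ for $\sigma_k$ commits to a single subtree, moves $s_k$ there once at cost $w_k$, and then pays $\OPT_{k-1}$ per block using the lighter servers, giving $\OPT_k \leq N\cdot\OPT_{k-1} + O(w_k)$ where $N$ is the number of blocks. An online $\ALG$, by contrast, is charged at least $r_{k-1}\cdot\OPT_{k-1}$ on each block, plus potentially one $s_k$-move per block transition if the separator requests catch it in the wrong subtree.

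The delicate step is forcing many $s_k$-moves. To each block I would associate the algorithm's \emph{signature}, namely the trajectory of its lighter servers on that block; since the lighter-server cost is bounded, this signature lies in a universe whose size is bounded as a function of $r_{k-1}$. Taking $N$ much larger than the number of signatures, a pigeonhole yields a large sub-collection of blocks on which $\ALG$ behaves identically. The adversary can then fix the level-$k$ separators only on this sub-collection, forcing $s_k$ to move $\Omega(N)$ times while $\OPT$ still pays only $O(w_k)$. This yields a recurrence of the form $r_k \geq c\cdot r_{k-1}^2$, and iterating from the constant base case gives $\log_2 r_k \geq 2^{k - O(1)}$, matching the claimed $\Omega(2^{2^{k-4}})$ bound after carefully tracking the constants lost at each level.

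The main obstacle is executing the pigeonhole in an \emph{online} manner: the adversary must design $\sigma_k$ knowing only the algorithm's past moves, so the sub-collection of ``identically behaving'' blocks has to be identified on the fly, before the level-$k$ separators are committed. Equivalently, one must prove that the combinatorial parameter capturing the adversary's uncertainty -- the number of distinct feasible labelings of a depth-$k$ interval hierarchy -- grows at least as $2^{2^{k-4}}$. Establishing this Ramsey-theoretic lower bound, and verifying that the blowup at each recursive level is truly multiplicative rather than merely additive, is the technically hardest part of the argument.
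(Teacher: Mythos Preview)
Your proposal correctly identifies the high-level architecture---a recursive adversary, a Ramsey-type combinatorial parameter, and a squaring recurrence---but the concrete mechanism you propose has a genuine gap, and the piece you flag as ``the main obstacle'' is in fact the entire content of the proof.

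The signature/pigeonhole idea does not work as stated. The ``signature'' of $\ALG$'s behavior on a block is not bounded by any function of $r_{k-1}$ alone: the number of distinct trajectories of the lighter servers is governed by the number of moves times the number of points, and the latter must already be doubly exponential for the construction to succeed. More fundamentally, an adaptive adversary cannot run many blocks, observe the signatures, and \emph{then} retroactively insert separators on a well-chosen sub-collection; every request must be issued before the next online move. So the post-hoc pigeonhole you describe cannot be executed online, and you acknowledge this without offering a fix.

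What the paper does instead is dispense with pigeonhole entirely. The key construction is a \emph{mask}: given a candidate set $P'$ of size $n_{i+1}$, one chooses $M\subset P'$ with $|M|=\lfloor n_i/2\rfloor+1$, partitions $P'\setminus M$ into blocks $Q_q$ indexed by $q\in M$, and builds one child subtree on each set $P_q=(M\setminus\{q\})\cup Q_q$ of size $n_i$. This arrangement has two properties: every $p\in P'$ is absent from some $P_q$, and every $p\in P'$ has a companion $\bar p$ such that each $P_q$ contains $p$ or $\bar p$. The first property lets the adversary always issue requests avoiding $\ALG$'s current heavy-server location; the second guarantees that \emph{every} $p\in P'$ remains a feasible label for the root interval throughout. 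Thus the adversary maintains all $n_{i+1}$ candidates simultaneously and in real time, with no retrospective selection. The recurrence $n_{i+1}\ge (n_i/2)^2$ is purely combinatorial and precedes any online argument; the online part then just observes that $\ALG$ must place $s_k$ on every candidate before the adversary is forced to commit. Your plan is missing exactly this construction, and without it the squaring recurrence is unsupported.
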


As usual, we prove Theorem \ref{thm:lb} by designing an adversarial strategy to produce an online request sequence dynamically (depending on the actions of the online algorithm), so that
(i) the online algorithm incurs a high cost, while (ii) the adversary can always guarantee some low cost offline solution in hindsight.
Our strategy is based on a recursive construction on $n \geq \exp(\exp(k))$
points (necessarily so, by the connection to MSS) and it is designed in a modular way using the combinatorial connection as follows:
First, we construct a recursive lower bound instance for the combinatorial
problem for which the Ramsey-theoretic parameter $f(k) \geq 2^{2^{k-4}}$.
Second, to obtain the online lower bound, we embed this construction into a recursive strategy to dynamically generate an adversarial request sequence with the properties described above.

Moreover, we show that the lower bound from Theorem~\ref{thm:lb}, can be
extended to general metric spaces. That means, in any metric space containing
enough points, the competitive ratio of deterministic algorithms for weighted $k$-server is at least $\Omega(2^{2^{k-4}})$. We describe the details in Appendix~\ref{sec:general_lb}.

\paragraph{Upper Bounds.} The combinatorial connection is also very useful for positive results.
We first show that the generalized WFA, a very generic online algorithm that is applicable to a wide variety of problems, is essentially optimum.

\begin{theorem}\label{thm:ub}
The generalized $\WFA$ is $2^{2^{k+O(\log k)}}$-competitive for weighted
$k$-server on uniform metrics.
\end{theorem}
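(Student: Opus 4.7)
The plan is to derive Theorem~\ref{thm:ub} by combining a matching upper bound on the Ramsey-theoretic parameter $f(k)$ (introduced in the combinatorial problem used for Theorem~\ref{thm:lb}) with a refined analysis of the generalized $\WFA$ that operates on a quotient of the configuration space whose effective size is governed by $f(k)$. Concretely, I would first prove $f(k) \leq 2^{2^{k+O(\log k)}}$ on the combinatorial side, and then show that the competitive ratio of the generalized $\WFA$ is at most $f(k)\cdot\operatorname{poly}(k)$.

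For the combinatorial upper bound, I would proceed by induction on the depth $k$. Fix the label at the root interval of the depth-$k$ hierarchy; the residual labeling then decomposes into feasible labelings on a bounded number of depth-$(k-1)$ sub-hierarchies. The structural constraints isolated by the combinatorial connection should bound both the number of admissible root labels (in terms of $k$ and a polynomial in $f(k-1)$) and the number of ways the sub-labelings may be combined consistently. A recurrence of the shape $f(k) \leq \bigl(k\cdot f(k-1)\bigr)^{\operatorname{poly}(k)}$ unrolls to the desired $2^{2^{k+O(\log k)}}$ bound.

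For the $\WFA$ analysis, the textbook fact that $\WFA$ is $(N-1)$-competitive on any Metrical Service System with $N$ states is too weak: on $n$ points there are $\binom{n}{k}k!$ configurations, which is unbounded in $n$. The key reduction is to group configurations into at most $f(k)$ classes that are indistinguishable from $\WFA$'s vantage point, in the sense that after any prefix of requests, two configurations in the same class have work function values that differ by an amount dominated by cheaper server weights traversed along the hierarchy. Applying the standard potential-function argument for $\WFA$ on this quotient system then yields competitive ratio $f(k)\cdot\operatorname{poly}(k) = 2^{2^{k+O(\log k)}}$.

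The main obstacle is making this quotient rigorous and showing it is preserved by the online dynamics. One must argue that the hierarchical structure of low-cost offline solutions exposed by the combinatorial connection is actually reflected in the work function values maintained by $\WFA$, and that requests that cut across hierarchical levels only refine the partition in a controlled way, so that the number of effective classes never exceeds $f(k)$. In addition, the server movements selected by the generalized $\WFA$ must be payable using the potential drop of the quotient MSS rather than the full configuration space. This coupling between the Ramsey-theoretic offline quantity and the online work function is the technically delicate part; once it is in place, the matching combinatorial upper bound immediately delivers the claimed competitive ratio.
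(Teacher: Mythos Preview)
Your combinatorial half is on the right track: the paper does prove a dichotomy theorem bounding the number of feasible root labels by $2^{2^{k+O(\log k)}}$ via an induction on the level, and your recurrence sketch is in the right spirit (though the actual induction tracks a refined quantity $n(\ell,t,h)$ counting tuples with prescribed elements, not just $f(k)$).

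The WFA half, however, has a genuine gap. You propose to quotient the configuration space into at most $f(k)$ classes that are ``indistinguishable from WFA's vantage point'' and then invoke the $(N-1)$-competitive MSS bound on the quotient. This does not work, and the paper does not do this. The set of good locations for the heavy server is not a fixed partition of size $f(k)$: which $f(k)$ points are ``good'' depends on the request subsequence so far and changes from phase to phase. There is no stable quotient MSS to hand to the black-box bound.

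What the paper actually does is an induction on $k$ layered over a phase analysis. A phase ends each time $\ALG$ moves its heaviest server $s_k$. Within a phase, one introduces a \emph{static} work function $\SWF_t(p)$ (optimal cost with $s_k$ pinned at $p$) and shows, by taking the common refinement of the cheap service patterns for all candidate points and applying the dichotomy theorem to that refined pattern, that at most $n_k$ points can be ``lucky'' (have small $\Delta\SWF$) in any single phase. One then argues that $s_k^{\ALG}$ must land on a lucky point at the start of each phase and make it unlucky, so after $n_k+1$ phases the minimum work function $M$ must have risen by $\Omega(w_k/n_k)$. Combining this lower bound on $\Delta M$ with an upper bound on $\ALG$'s cost per phase (using the inductive hypothesis that $\ALG$ is $r_{k-1}$-competitive with $k-1$ servers) yields $r_k = O(n_k^3\, r_{k-1})$, which unrolls to the claimed bound. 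The refinement-plus-dichotomy step and the inductive coupling to $r_{k-1}$ are the missing ideas in your proposal.
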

While previous algorithms \cite{FR94,ChV13} were also essentially optimum, this result is interesting as the generalized WFA is a generic algorithm and is not specifically designed for this problem at hand. In fact, as we discuss in Section~\ref{sec:rel_work}, for more general variants of $k$-server the generalized WFA seems to be only known candidate algorithm that can be competitive.

To show Theorem~\ref{thm:ub}, we first prove an almost matching upper bound of
$f(k) \leq 2^{2^{k+3\log k}}$ for the combinatorial problem. As will be clear later, we call such results {\em dichotomy theorems}.
Second, we relate the combinatorial problem to the dynamics of \textit{work functions} and use the dichotomy theorem recursively to bound the cost of the WFA on any instance.

This approach also allows us to extend and refine these results to the setting of $d$ different weight classes with $k_1,\ldots,k_d$ servers of each class. This corresponds to $d$-level caching where each cache has replacement cost $w_i$ and capacity $k_i$. As practical applications usually have few weight classes, the case where $d$ is a small constant independent of $k$ is of interest.
Previously, \cite{FR94} gave an improved $k^{O(k)}$ bound for $d=2$, but a major difficulty in extending their result is that their algorithm is phase-based and gets substantially more complicated for $d>2$.

\begin{theorem}\label{thm:ubd}
The competitive ratio of the generalized $\WFA$ for the weighted
$k$-server problem on uniform metrics with $d$ different weights is at most
$2^{O(d)\,k^3 \prod_{i=1}^d (k_i+1)}$, where $k_i$ is the number of servers
of weight $w_i$, and $k = \sum_{i=1}^d k_i$.
\end{theorem}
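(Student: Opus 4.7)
The plan is to mimic the two-step approach behind Theorem~\ref{thm:ub}: first refine the combinatorial dichotomy to exploit equal weights, and then plug the refined bound into the work-function recursion. The conceptual idea is that when $k_i$ consecutive servers in the hierarchy share the same weight, the adversary cannot leverage any weight gap between them, so these levels should contribute only a polynomial factor of $(k_i+1)$ to the labeling complexity, rather than the full doubling step that the generic dichotomy pays. This should yield $d$ ``across-class'' exponential blow-ups and a product $\prod_i (k_i+1)$ of ``within-class'' polynomial factors, which is exactly the shape of the stated bound.

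First I would introduce a refined parameter $f(k_1,\ldots,k_d)$ for the labeling problem restricted to hierarchies with weight profile $(\underbrace{w_1,\ldots,w_1}_{k_1},\ldots,\underbrace{w_d,\ldots,w_d}_{k_d})$. Within a single weight block of $k_i$ equal-weight levels, I would argue directly that the number of non-dominated labelings is $(k_i+1)^{O(1)}$: because these servers sit on the same uniform sub-metric and are permutation-symmetric, only the multiplicities of labels matter, and a short combinatorial case analysis suffices. Between adjacent weight blocks, the Ramsey-style argument from the proof of $f(k)\leq 2^{2^{k+3\log k}}$ applies essentially verbatim, now iterated only $d$ times.

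Second, I would establish a recursion of the form
\[
f(k_1,\ldots,k_d)\;\leq\;f(k_1,\ldots,k_{d-1})^{\,k_d+1}\cdot 2^{O(k^3)},
\]
by peeling off the outermost weight block and conditioning on each of its $(k_d+1)^{O(1)}$ possible non-dominated labelings. Unrolling this recursion telescopes to $f(k_1,\ldots,k_d)\leq 2^{O(d)\,k^3\,\prod_i(k_i+1)}$. Inserting this into the same work-function analysis used for Theorem~\ref{thm:ub} is then a black-box step: that analysis bounds the competitive ratio of the generalized $\WFA$ by the relevant combinatorial parameter, and it is indifferent to whether the underlying weights are all distinct.

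The main obstacle, I expect, is the intra-block analysis. Proving that $k_i$ equal-weight servers contribute only a polynomial $(k_i+1)^{O(1)}$ factor, rather than the $2^{\Theta(k_i)}$ factor that the unrefined dichotomy would give, requires carefully exploiting the permutation symmetry of servers within one weight class and the uniform metric structure, and doing so while preserving enough slack for the cross-block Ramsey recursion to go through unchanged. A secondary difficulty is bookkeeping: one must keep the additive $O(k^3)$ overheads from telescoping into an extra factor in the product, which forces one to recurse on $d$ rather than on $k$ and to choose the ``right'' peeling order (heaviest block first) so that the recursion's branching factor is exactly $k_d+1$.
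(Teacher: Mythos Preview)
Your high-level plan---refine the dichotomy, then feed it to the WFA recursion---is correct, but two of your steps are not actually as you describe, and one is a genuine gap.

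\textbf{The WFA step is not black-box.} In the all-distinct-weights proof, a phase ends when $s_k$ moves, a ``lucky'' object is a single point $p$, and there are only finitely many candidates. Here the heaviest class has $k_d$ servers, so the configuration of the top block is a $k_d$-\emph{tuple}. The set of lucky tuples can be infinite: if some $t$-tuple $T$ with $t<k_d$ is cheap, then every $k_d$-tuple containing $T$ is cheap too, and there are unboundedly many. The paper handles this by tracking, phase by phase, a \emph{poset of lucky tuples} ordered by inclusion, bounding only the number of its \emph{minimal} elements of each size via the refined dichotomy, and showing via a potential function that each phase strictly shrinks this poset (a minimal lucky tuple that the algorithm visits becomes unlucky, and its closure can fragment into at most $k_d\,n_d$ new minimal elements of each larger size). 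Without this machinery you cannot lower-bound $\Delta M$ over a bounded number of phases, and the recursion $r_d\le \text{poly}(n_d,k_d)\cdot r_{d-1}$ does not go through. This is the main missing idea.

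\textbf{The dichotomy refinement differs structurally.} Rather than keeping the $k$-level hierarchy and arguing permutation symmetry within a block, the paper \emph{collapses} each weight class into a single level whose intervals carry set-valued labels of size $\le k_i$. The $(k_i+1)$ factor then arises cleanly: passing from a joint child list to the parent list now removes $k_\ell$ pages instead of one (Observation~\ref{obs:Cd_to_I_list}), contributing a binomial $\binom{t+k_\ell-h}{k_\ell}$, and in the specification recursion this turns into the multiplicative $(k_{\ell-1}+1)$ in the exponent via the estimate $\sum_i(\bar d_i+k_{\ell-1})\le (t-h)(k_{\ell-1}+1)$. Your ``multiplicities suffice by symmetry'' sketch does not obviously give a polynomial in $k_i$: the interval tree within a block is not itself permutation-symmetric, so symmetry of servers alone does not bound the number of non-dominated labelings. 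The set-valued-label reformulation is what makes the within-block cost polynomial for free.

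Your proposed recursion shape $\log f(k_1,\dots,k_d)\lesssim (k_d{+}1)\log f(k_1,\dots,k_{d-1})+O(k^3)$ is morally right and unrolls to the stated bound, but getting there requires the two ingredients above rather than the shortcuts you describe.
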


For $k$ distinct weights, i.e~$k_i=1$ for each $i$, note that this matches the $2^{\textrm{poly}(k) \cdot 2^k}$ bound in Theorem \ref{thm:ub}.
For $d$ weight classes, this gives $2^{O(dk^{d+3})}$, which is
singly exponential in $k$ for $d=O(1)$.
To prove Theorem \ref{thm:ubd}, we proceed as before. We first prove a more refined dichotomy theorem (Theorem \ref{thm:dichotomy-d}) and use it recursively with the WFA.

\subsection{Generalizations of $k$-server and Related Work}
\label{sec:rel_work}

The weighted $k$-server problem on uniform metrics that we consider here is the simplest among the several generalizations of $k$-server that
are very poorly understood. An immediate generalization is the weighted $k$-server problem in general metrics. This seems very intriguing even for a line metric.
Koutsoupias and Taylor \cite{KT04} showed that natural generalizations of many successful $k$-server algorithms are not competitive.
Chrobak and Sgall \cite{CS04} showed that any memoryless randomized algorithm has
unbounded competitive ratio. In fact, the only candidate competitive algorithm
for the line seems to be the generalized WFA.
There are also other qualitative differences. While the standard $k$-server problem is believed to have the same competitive ratio in every metric, this is not the case for weighted $k$-server.
For $k=2$ in a line, \cite{KT04} showed that any deterministic algorithm
is at least $10.12$-competitive, while on uniform metrics the competitive
ratio is 5 \cite{CS04}.

A far reaching generalization of the weighted $k$-server problem is the {\em generalized} $k$-server problem \cite{KT04, SS06,SSP03,Sit14}, with various applications.
Here, there are $k$ metric spaces $M_1,\ldots,M_k$, and each server $s_i$ moves in its own space $M_i$. A request $r_t$ at time $t$ is specified by a $k$-tuple
$r_t = (r_t(1),\ldots,r_t(k))$ and must be served by moving server $s_i$ to
$r_t(i)$ for some $i \in [k]$. Note that the usual $k$-server corresponds to
very special case where the metrics $M_i$ are identical and each request, $r_t =(\sigma_t,\sigma_t,\ldots,\sigma_t)$, has all coordinates identical. Weighted $k$-server (in a general metric $M$) is also a very special case where each $M_i = w_i \cdot M$ and $r_t =(\sigma_t,\sigma_t,\ldots,\sigma_t)$.

In a breakthrough result, Sitters and Stougie \cite{SS06} gave an $O(1)$-competitive algorithm for the generalized $k$-server problem for $k=2$. Recently,
Sitters~\cite{Sit14} showed that the generalized WFA is also $O(1)$-competitive for $k=2$. Finding any competitive algorithm for $k>2$ is a major open problem, even in very restricted cases.
For example, the special case where each $M_i$ is a line, also called the  CNN problem, has received a lot of attention (\cite{KT04,Chr03,AG10,iw01,IY04}), but even here no competitive algorithm  is known for $k>2$.

\subsection{Notation and Preliminaries}\label{sec:prelim}
We now give some necessary notation and basic concepts, that will be crucial for
the technical overview of our results and techniques in Section \ref{sec:overview}.

\paragraph{Problem definition.} Let $M = (U,d)$ be a uniform metric space, where $U=\{1,\ldots,n\}$ is the set of points (we sometimes call them pages) and $d: U^2 \rightarrow \mathcal{R}$ is the distance function which satisfies $d(p,q) = 1$ for $p\neq q$, and $d(p,p) = 0$. There are $k$ servers $s_1,\dotsc,s_k$ with weights
$w_1 \leq w_2 \leq \dotsc \leq w_k$ located at points of $M$. The cost of moving server $s_i$ from the
point $p$ to $q$ is $w_i\cdot d(p,q) = w_i$. The input is a request sequence $\sigma = \sigma_1,\sigma_2,\ldots,\sigma_T$, where $\sigma_t \in U$ is the point requested at time $t$. At each time $t$, an online algorithm needs to have a server at $\sigma_t$, without the knowledge of future requests. The goal is to minimize the total cost for serving $\sigma$.

We think of $n$ and $T$ as arbitrarily large compared to $k$.  Note that if the weights are equal or similar, we can
use the results for the (unweighted) $k$-server problem with no or small loss, so $w_{\max}/w_{\min}$ should be thought of as arbitrarily large.
Also, if two weights are similar, we can treat them as same without much loss, and so in general it is useful to think of the weights as well-separated, i.e.~$w_i \gg w_{i-1}$ for each $i$.

\paragraph{Work Functions and the Work Function Algorithm.}
We call a map $C \colon \{1, \dotsc, k\} \to U$, specifying that server $s_i$ is at point $C(i)$, a {\em configuration} $C$.
Given a request sequence $\sigma = \sigma_1, \dotsc, \sigma_t$, let $\WF_t(C)$ denote the optimal cost to serve requests $\sigma_1, \dotsc, \sigma_t$ and end up in configuration $C$. The function $\WF_t$ is called \textit{work function} at time $t$.
Note that if the request sequence would terminate at time $t$, then $\min_C \WF_t(C)$ would be the offline optimum cost.

The Work Function Algorithm (WFA) works as follows: Let $C_{t-1}$  denote its configuration at time $t-1$. Then upon the request $\sigma_t$, WFA moves to the configuration $C$ that minimizes $\WF_t(C) + d(C,C_{t-1})$. Note that in our setting,  $d(C,C') = \sum_{i=1}^k w_i \mathbf{1}_{(C(i) \neq C'(i))}$.
Roughly, WFA tries to mimic the offline optimum while also controlling its movement costs. For more background on WFA, see~\cite{BEY98,CL96,KP95}.

The generalized Work Function Algorithm ($\WFA_{\lambda}$) is parameterized by a constant $\lambda \in (0,1]$, and at time $t$ moves to the configuration
$C_t = \argmin_{C} \WF_t(C) + \lambda d(C,C_{t-1}).$
For more on $\WFA_{\lambda}$, see~\cite{Sit14}.

\paragraph{Service Patterns and Feasible Labelings.}

We can view any solution to the weighted $k$-server problem as an interval
covering in a natural way.
For each server $s_i$ we define a set of intervals $\cI_i$ which
captures the movements of $s_i$ as follows: Let $t_1 < t_2 < t_3 <
\dotsb$ be the times when $s_i$ moves. For each move
at time $t_j$ we have an interval $[t_{j-1}, t_{j}) \in \cI_i$, which means that $s_i$
stayed at the same location during this time period.
We assume that $t_0=0$ and also add a final interval $[t_{\textrm{last}}, T+1)$, where $t_{\textrm{last}}$ is the last time
when server $s_i$ moved. So if $s_i$ does not move at all, $\cI_i$ contains the single interval $[0,T+1)$.
This gives a natural bijection between the moves of $s_i$ and the intervals in
$\cI_i$, and the cost of the solution equals to $\sum_{i=1}^k w_i(|\cI_i|-1)$. We call $\cI_i$ the $i$th level of intervals, and an interval in $I \in \cI_i$ a level-$i$ interval, or simply an $i$th level interval.

\begin{definition}[Service Pattern] We call the collection $\cI = \cI_1 \cup \dotsb\cup \cI_k$ a service pattern
if each $\cI_i$ is a partition of $[0, T+1)$ into half-open intervals.
\end{definition}
Figure \ref{fig:interval} contains an
example of such service pattern.

To describe a solution for a weighted
$k$-server instance completely, we label each interval $I\in \cI_i$ with a point
where $s_i$ was located during time period $I$. We can also decide to give no
label to some interval (which means that we don't care on which point the server is located). We call this a {\em labeled} service pattern.

\begin{definition}[Feasible Labeling] Given a service pattern $\cI$ and a request sequence $\sigma$, we say that a (partial) labeling $\alpha\colon \cI \to U$ is {\em feasible} with respect to $\sigma$, if for each time
$t\geq 0$ there exists an interval $I \in \cI$ which contains $t$ and
$\alpha(I) = \sigma_t$.
\end{definition}

We call a service pattern $\cI$ feasible with respect to $\sigma$
if there is some labeling $\alpha$ of $\cI$ that is feasible with respect to $\sigma$.
Thus the offline weighted $k$-server problem for request sequence $\sigma$ is equivalent to the problem of
finding the cheapest feasible service pattern for $\sigma$.

Note that for a fixed service pattern $\cI$, there may not exist any feasible labeling, or alternately there might exist many feasible labelings for it. Understanding the structure of the various possible feasible labelings for a given service pattern will play a major role in our results.

\begin{figure}[t!]
\hfill\includegraphics{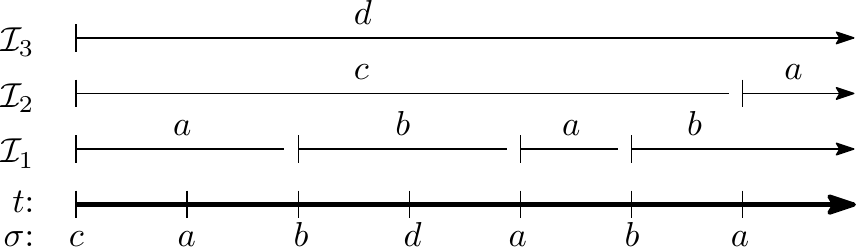}\hfill\ 
\caption{Illustration of a feasible service pattern for $k=3$. Each interval in
$\cI_i$ defines a location for server $s_i$. At each time $t$, some interval
covering $t$ should be labeled by the requested point $\sigma_t$.}
\label{fig:interval}
\end{figure}

\section{Overview}
\label{sec:overview}
We now give an overview of the technical ideas and the organization of the
paper.

Fix some request sequence $\sigma$.
Suppose that the online algorithm  knows the service pattern $\cI$ of some optimal offline solution, but not the actual labels for the intervals in $\cI$. Then intuitively, the online algorithm may still need to try out all possible candidate labels for an interval before figuring out the right one used by the offline solution.
So, a key natural question turns out to be: How does the structure of all possible
feasible labelings for $\cI$ look like?

Let us consider this more closely.
First, we can assume for simplicity that $\cI$ has a tree structure (i.e whenever an interval at level $i$ ends, all intervals at levels $1,\ldots,i-1$ end as well). Now, we can view $\cI$
as a collection of disjoint trees on different parts of $\sigma$, that do not interact with each other.
Focusing on some tree $T$ with a root interval $I$ (corresponding to the
heaviest server $s_k$), it now suffices to understand what is the number of
labels for $I$ in all feasible labelings with respect to $\sigma$.
This is because whenever we fix some label $a$ for $I$, we get a similar
question
about the depth-$(k-1)$ subtrees of $T$ on $\sigma$ with $a$ removed,
and we can proceed recursively.
This leads to the following problem.

\paragraph{The Combinatorial Problem.} Given an interval tree $T$ in a service
pattern $\cI$ on some request sequence $\sigma$. How many labels can the root
interval $I$ get over all possible feasible assignments to $\cI$?

We will show the following dichotomy result for this problem: (i) Either any label in $U$ works (i.e.~the location of $s_k$ does not matter), or (ii) there can be at most $f(k)$ feasible labels for $I$.
This might be somewhat surprising as the tree $T$ can be arbitrarily large
and the number of its subtrees of depth $k-1$ may not be even a function of $k$.

We prove two such dichotomy theorems in Section \ref{sec:dichotomy}. In Theorem
\ref{thm:dichotomy}, we show $f(k) = O(\exp(\exp(k))$ for arbitrary weights, and
in Theorem \ref{thm:dichotomy-d}, we give a more refined bound for the case
with $k_1,\ldots,k_d$ servers of weights $w_1,\ldots,w_d$.
These results are proved by induction but require some subtle technical details.
In particular, we need a stronger inductive hypothesis, where we track all the
feasible labels for the intervals on the path from the particular node towards
the root.
To this end, in Section \ref{sec:intervals} we describe some properties of
these path labelings and their interactions at different nodes.

\paragraph{Upper Bounds.}
These dichotomy theorems are useful to upper-bound the competitive ratio
as follows.
Suppose that the online algorithm knows the optimum service pattern $\cI$, but
not the actual labels.
Fix some tree $T \subseteq \cI$ with root interval $I$. We know that the offline
solution pays $w_k$ to move the server $s_k$ at the end of $I$,
and let  $\cost(k-1)$ denote the offline cost incurred during $I$ due to the
movement of the $k-1$ lighter servers.
Then, intuitively, the online algorithm has only $f(k)$ reasonable locations
\footnote{The situation in case (i) of the dichotomy theorem, where the location of $s_k$ does not matter, is much easier as the online algorithm can keep $s_k$ any location.}
to try during $I$.
Assuming recursively that its competitive ratio with $k-1$ servers
is $c_{k-1}$, its cost on $I$ should be at most
\[ f(k) \cdot (w_k + c_{k-1} \cdot \cost(k-1))
	\leq f(k) \cdot c_{k-1} (w_k + \cost(k-1))
	= f(k) \cdot c_{k-1} \cdot \OPT(I),
\]
which gives $c_k \leq f(k) c_{k-1}$, and hence $c_k \leq f(k) \cdots f(1)$.

Of course, the online algorithm does not know the offline service pattern $\cI$,
but we can remove this assumption by losing another factor $f(k)$.
The idea is roughly the following.
Consider some time period $[t_1,t_2]$, during which online incurs cost about
$c_{k-1}\,w_k$ and decides to move its heaviest server at time $t_2$.
We claim that there can be at most $f(k)$ locations for the heavy server
where the offline solution would pay less than $w_k/(4f(k))$ during $[t_1,t_2]$.
Indeed, suppose there were $m=f(k)+1$ such locations $p_1,\ldots,p_m$.
Then, for each $j=1,\ldots,m$,
take the corresponding optimum service pattern $\cI^j$ with $s_k$
located at $p_j$ throughout $[t_1,t_2]$, and consider a new pattern
$\cI'$ by taking the common refinement of $\cI^1,\ldots,\cI^m$
(where any interval in $\cI^j$ is a union of consecutive intervals in $\cI'$).
The pattern  $\cI'$ is quite cheap, its cost is at most
$m \cdot w_k/(4 f(k)) \leq w_k/2$, and we know that its root interval $I$ can
have $m=f(k)+1$ different labels. However, the dichotomy theorem implies that
any point is a feasible label for $I$, including the location of
the algorithm's heaviest server. But in such case, algorithm would not pay more
than $c_{k-1} \cost(\cI')$, what leads to a contradiction.

We make this intuition precise in Section \ref{sec:ubs} using work functions.
In particular, we use the idea above to show that
during any request sequence when $\WFA_{\lambda}$
moves $s_k$ about $f(k)$ times, any offline algorithm must pay $\Omega(w_k)$.

\paragraph*{Lower bound.}
In a more surprising direction, we can also use the combinatorial problem to
create a lower bound.
In Section \ref{sec:intervals}, we give a recursive combinatorial construction
of a request sequence $\sigma$ and a service pattern $\cI$ consisting of a
single interval tree, such that the number of feasible labelings for its root
can actually be about $r_k=2^{2^k}$.

Then in Section \ref{sec:lb}, we use the underlying combinatorial structure of this construction to
design an adversarial
strategy that forces any online algorithm to have a doubly-exponential
competitive ratio.
Our adversarial strategy reacts adaptively to the movements of the online
algorithm $\ALG$, enforcing the two key properties.
First, the adversary never moves a server $s_i$, where $i<k$,
unless $\ALG$ also moves some heavier server of weight at least $w_{i+1}$.
Second, the adversary never moves the heaviest server $s_k$ unless
$\ALG$ already moved $s_k$ to all $r_k$ possible feasible locations.
By choosing the weights of the servers well-separated, e.g.
$w_{i+1} \geq r_k \cdot \sum_{j=1}^i w_j$ for each $i$,
it is easy to see that the above two properties imply an $\Omega(r_k)$
lower bound on the competitive ratio.

\section{Service Patterns}
\label{sec:intervals}
In this section, we study the structure of feasible labelings. A crucial notion for this will be {\em request lists} of intervals.
We also define two Ramsey-theoretic parameters to describe the size of the request lists.
In Subsection~\ref{sec:int_lb}, we present a combinatorial lower bound
for these two parameters.

\paragraph{Hierarchical service patterns.}
We call a service pattern $\cI$ hierarchical, if each interval $I$ at level
$i < k$ has a unique parent $J$ at level $i+1$ such that $I \subseteq J$.
An arbitrary service pattern $\cI$ can be made hierarchical easily and at relatively
small cost: whenever an interval at level $i>1$ ends at time $t$,
we also end all intervals at levels $j=1, \dotsc, i-1$.
This operation adds at most $w_1 + \ldots w_{i-1} \leq  kw_i$, for each interval of weight $w_i$, so the overall cost
can increase by a factor at most $k$.
In fact, if the weights are well-separated, the loss is even smaller.

Henceforth, by service pattern we will always mean hierarchical service patterns, which we view as a disjoint collection of trees.
We adopt the usual terminology for trees. The
ancestors of $I$ are all intervals at higher levels containing $I$, and
descendants of $I$ are all intervals at lower levels which are contained in $I$.
We denote $A(I)$ the set of the ancestors of $I$ and $T_I$ the subtree of
intervals rooted at $I$ (note that $T_I$ includes $I$).

\paragraph{Composition of feasible labelings.}
In hierarchical service patterns, the labelings can be composed easily in
modular way.
Let $\sigma_I$ be the request sequence during the time
interval $I$, and $\sigma_J$ during some sibling $J$ of $I$.
If $\alpha_I$ and $\alpha_J$ are two feasible labelings with respect to
$\sigma_I$ and $\sigma_J$ respectively, and if they assign the same labels to
the ancestors of $I$ and $J$ (i.e.~$\alpha_I(A(I)) = \alpha_J(A(J))$),
we can easily construct a single $\alpha$ which is feasible with respect to both
$\sigma_I$ and $\sigma_J$:
Label the intervals in $T_I$ according to $\alpha_I$,
intervals in $T_J$ according to $\alpha_J$ and their ancestors according to
either $\alpha_I$ or $\alpha_J$.

\subsection{Structure of the feasible labelings}\label{sec:labelings}

Consider a fixed service pattern $\cI$ and some request sequence $\sigma$.
There is a natural inductive approach for understanding the structure of feasible labelings of $\cI$.
Consider an interval $I \in \cI$ at level $\ell < k$, and the associated request sequence $\sigma(I)$.
In any feasible labeling of $\cI$, some requests in $\sigma(I)$ will be covered (served) by the labels for the intervals
in  $T_I$, while others (possibly none) will be covered by labels assigned to ancestors $A(I)$ of $I$.
So, it is useful to understand how many different ``label sets" can arise for $A(I)$ in all possible feasible labelings.
This leads to the notion of request lists.

\paragraph{Request lists.}
Let $I$ be an interval at level $\ell<k$. We call a set of pages $S \subseteq U$ with $|S|\leq k-\ell$, a {\em valid tuple} for $I$, if upon assigning $S$ to ancestors of $I$ (in any order) there is some labeling of $T_I$ that is feasible for $\sigma_I$. Let $R(I)$ denote the collection of all valid tuples for $I$.

Note that if $S$ is a valid tuple for $I$, then all its supersets of
size up to $k-\ell$ are also valid. This makes the set of all valid tuples hard to work with, and so we only consider the inclusion-wise
minimal tuples.

\begin{definition}[Request list of an interval]
Let $I$ be an interval at level $\ell < k$.
The request list of $I$, denoted by $L(I)$, is the set of inclusion-wise minimal
valid tuples.
\end{definition}

{\em Remark.} We call this a request list as we view $I$ as requesting a tuple in $L(I)$ as ``help" from its ancestors in $A(I)$, to feasibly cover $\sigma(I)$. It is possible that there is a labeling $\alpha$ of $T_I$  that can already cover $\sigma(I)$ and hence $I$ does not
need any ``help" from its ancestors $A(I)$. In this case $L(I)=\{\emptyset\}$
(or equivalently every subset $S \subset U$ of size $\leq k-\ell$ is a valid tuple).

Tuples of size $1$ in a request list will play an important role, and we will call them singletons.

{\bf Example.}
Let $I\in \cI$ be an interval at level $1$, and $I_2,
\dotsc, I_k$ its ancestors at levels $2, \dotsc, k$.
If $P=\{p_1,\ldots,p_j\}$, where $j<k$, is the set of all pages requested in  $\sigma_I$, then one feasible labeling $\alpha$ with respect to $\sigma_I$ is  to
assign
 $\alpha(I_{i+1}) = p_i$ for $i=1, \dotsc, j$,
and no label for any other $J\in \cI$. So $P$ is a feasible tuple.
However, $P$ is not inclusion-wise minimal,
as $\{p_2, \dotsc, p_j\}$ is also valid tuple:
We can set $\alpha(I) = p_1$, $\alpha(I_i) = p_i$ for $i=2,\dotsc, j$ and
no label for other intervals.
Similarly, $P\setminus \{p_i\}$ for $i=2, \dotsc, j$, are also valid and inclusion-wise minimal.
So, we have
\[ L(I) = \big\{ P\setminus \{p_1\}, P\setminus \{p_2\}, \ldots,
P\setminus\{p_j\} \big\}. \]

\paragraph{Computation of Request Lists.}
Given a service pattern $\cI$ and request sequence $\sigma$, the request lists for each interval $I$ can be computed inductively.
For the base case of a leaf interval $I$ we already saw that $L(I) = \big\{ P\setminus \{p_1\}, P\setminus \{p_2\}, \ldots,
P\setminus\{p_j\} \big\}.$

For a higher level interval $I$, we will take the request lists of the children of $I$ and combine them suitably.
To describe this, we introduce the notion of  {\em joint request lists}.
Let $I$ be a level $\ell$ interval for $\ell>1$, and
let $C = \{J_1, \dotsc, J_m\}\subseteq \cI_{\ell-1}$ be the set of its child
intervals.
Note that $m$ can be arbitrarily large (and need not be a function of $k$).
We define the joint request list of the intervals in $C$ as follows.

\begin{definition}[Joint request list]
Let $I$ be an interval at level $\ell > 1$ and $C$ be the set of its children at
level $\ell-1$. The joint request list of $C$, denoted by $L(C)$,
is the set of inclusion-wise minimal tuples $S$ with $|S| \leq k-(\ell-1)$ for which
there is a labeling $\alpha$ that is feasible with respect to $\sigma_{I}$ and $ \alpha(\{I\}\cup A(I)) = S$.
\end{definition}

Let $R(C)$ denote collection of all valid tuples (not necessarily minimal) in the joint request list for $C$.
We note the following simple observation.

\begin{observation}\label{obs:list_prod}
A tuple $S$ belongs to $R(C)$ if and only if $S$ belongs to $R(J_i)$ for each
$i=1, \dotsc, m$. This implies that $S \in L(C)$ whenever it is an
inclusion-wise minimal tuple such that each $L(J_i)$ for $i \in [m]$ contains some tuple $S_i \subseteq S$.
\end{observation}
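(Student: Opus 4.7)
\medskip

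\noindent\textbf{Proof plan.} The plan is to prove the ``iff'' statement in two directions directly from the definitions, and then derive the $L(C)$ claim as an immediate corollary using inclusion-wise minimality.

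For the forward direction, suppose $S \in R(C)$. By definition there is a labeling $\alpha$ of $\cI$ that is feasible for $\sigma_I$ with $\alpha(\{I\}\cup A(I)) = S$. Fix any child $J_i \in C$. Since the service pattern is hierarchical, $J_i \subseteq I$ and the ancestors of $J_i$ are exactly $\{I\}\cup A(I)$. The restriction of $\alpha$ to $T_{J_i}\cup A(J_i)$ is therefore a labeling that assigns $S$ to the ancestors of $J_i$ and labels $T_{J_i}$ in a way that is feasible for $\sigma_{J_i}$ (because $\sigma_{J_i}$ is a sub-sequence of $\sigma_I$ and every time $t\in J_i$ is covered either by an ancestor in $A(J_i)$ or by an interval in $T_{J_i}$). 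This is exactly the definition of $S\in R(J_i)$.

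For the backward direction, suppose $S \in R(J_i)$ for every $i\in[m]$. For each $i$, pick a labeling $\alpha_i$ of $T_{J_i}$ such that, together with $S$ assigned in some order to the ancestors $\{I\}\cup A(I)$, the pair is feasible for $\sigma_{J_i}$. Since all the $\alpha_i$ agree on $\{I\}\cup A(I)$ (namely, they all use the set $S$, which we fix once and for all on these ancestor intervals), we can apply the composition of feasible labelings described earlier in this section: define $\alpha$ by setting $\alpha = \alpha_i$ on each $T_{J_i}$ and using the chosen labels $S$ on $\{I\}\cup A(I)$. By hierarchy, the children $J_1,\dots,J_m$ partition $I$, so every time $t\in I$ lies in some $J_i$, and by feasibility of $\alpha_i$ (plus the ancestor labels) for $\sigma_{J_i}$, there is an interval covering $t$ carrying the label $\sigma_t$. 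Hence $\alpha$ is feasible for $\sigma_I$ and $\alpha(\{I\}\cup A(I))=S$, i.e.\ $S\in R(C)$.

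For the ``implies'' claim, assume $S$ is an inclusion-wise minimal tuple with $|S|\le k-(\ell-1)$ such that for each $i$ there exists $S_i\in L(J_i)$ with $S_i\subseteq S$. Since $S_i\in R(J_i)$ and $|S|\le k-(\ell-1)$ equals the number of ancestors $\{I\}\cup A(I)$ of $J_i$, we can extend any feasible assignment witnessing $S_i\in R(J_i)$ by placing the elements of $S\setminus S_i$ on the (currently unused) ancestor intervals; this yields $S\in R(J_i)$. Applying the iff direction just proved gives $S\in R(C)$, and minimality of $S$ then promotes it to $L(C)$.

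The argument is essentially bookkeeping; the only subtlety is the gluing step, which relies crucially on the hierarchical structure (so that the children $J_i$ partition $I$ and share the same ancestor set $\{I\}\cup A(I)$) together with the composition property from the start of Section~\ref{sec:intervals}. This is the main place where hierarchy is used, but no new technical ideas are needed.
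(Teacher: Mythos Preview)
Your proof is correct and follows essentially the same approach as the paper's: restrict a feasible labeling to get the forward direction, and use the composition of sibling labelings (exactly as described at the start of Section~\ref{sec:intervals}) for the converse. You are somewhat more explicit than the paper in spelling out the ``implies'' clause about $L(C)$, which the paper leaves to the reader; your argument there (supersets of valid tuples are valid, then apply the iff, then use minimality) is exactly the intended one.
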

\begin{proof} Consider the feasible labeling $\alpha$ with respect
to $\sigma_I$, which certifies that $S \in R(C)$. This is also feasible with respect
to each $\sigma_{J_i}$ and certifies that $S \in R(J_i)$.
Conversely, let $\alpha_i$, for $i=1,\dotsc, m$, denote the labeling
feasible with respect to $\sigma_{J_i}$ which certifies that $S\in R(C)$.
As $J_1, \dotsc, J_m$ are siblings, the composed labeling $\alpha$ defined by
 $\alpha(J) = \alpha_i(J)$ if $J\in T_{J_i}$ and, say, $\alpha(J) = \alpha_1(J)$
if $J \in  I \cup A(I)$ is feasible for $\sigma_I$.
\end{proof}

Creation of the joint request list can be also seen as a kind of a product
operation. For example, if there are two siblings $J_1$ and
$J_2$ whose request lists are disjoint and contain only singletons, then their
joint request list $L(J_1, J_2)$ contains all pairs $\{p,q\}$ such that
$\{p\}\in L(J_1)$ and $\{q\}\in L(J_2)$.
By Observation~\ref{obs:list_prod}, all such pairs belong to $R(J_1, J_2)$
and they are inclusion-wise minimal.
The number of pairs in $L(J_1, J_2)$ equals to $|L(J_1)|\cdot |L(J_2)|$.
In general, if $L(J_1)$ and $L(J_2)$ are not disjoint or contain tuples of different
sizes, the product operation becomes more complicated, and therefore we use
the view from Observation~\ref{obs:list_prod} throughout this paper.

Finally, having obtained $L(C)$, the request list $L(I)$ is obtained using the following observation.

\begin{observation}\label{obs:C_to_I_list}
A tuple $S$ belongs to $R(I)$ if and only if
$S\cup\{p\}$ belongs to $R(C)$ for some $p\in U$.
\end{observation}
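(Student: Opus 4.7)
The plan is to prove both directions by moving a single label between $I$ and its ancestors $A(I)$, exploiting the fact that every interval in $\{I\}\cup A(I)$ spans the entire time range of $I$ and is therefore interchangeable with the others as far as covering requests in $\sigma_I$ is concerned.

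For the forward direction, I would start from a labeling $\alpha$ witnessing $S\in R(I)$: it is feasible on $T_I$ with respect to $\sigma_I$ when $S$ is assigned to $A(I)$. If $\alpha$ already labels $I$, set $p:=\alpha(I)$; otherwise pick an arbitrary $p\in U$ and extend $\alpha$ by labeling $I$ with $p$. Enlarging a labeling only gives additional covering intervals, so the extended $\alpha$ remains feasible for $\sigma_I$. Its labels on $\{I\}\cup A(I)$ now form the set $S\cup\{p\}$, which has size at most $k-\ell+1 = k-(\ell-1)$, so $S\cup\{p\}\in R(C)$ by definition.

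For the backward direction, I would take $\alpha$ witnessing $S\cup\{p\}\in R(C)$ and construct $\alpha'$ that agrees with $\alpha$ on every interval in $T_I\setminus\{I\}$, labels $I$ by $p$, and distributes the elements of $S$ across $A(I)$ in any order. The key observation is that at any time $t\in I$ every interval in $\{I\}\cup A(I)$ covers $t$, so the set of labels available on these intervals is exactly $S\cup\{p\}$ under both $\alpha$ and $\alpha'$; since the labels on the strict descendants of $I$ are unchanged, the multiset of labels available to cover the request at $t$ is identical under both labelings. Hence $\alpha'$ is feasible for $\sigma_I$, so $S\in R(I)$.

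I do not expect any serious obstacle. The one small subtlety is that $p$ may lie in $S$, in which case $S\cup\{p\}=S$ has size at most $k-\ell$, well within the size bound for $R(C)$; the same argument goes through. A second stylistic point is that the ``in any order'' clause in the definition of $R(I)$ is used implicitly in the backward direction, but it is justified by exactly the same observation that the chain $\{I\}\cup A(I)$ treats its labels as a flat set of labels available throughout $\sigma_I$.
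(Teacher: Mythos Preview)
Your proof is correct and follows essentially the same approach as the paper: in both directions the label $p$ is shuttled between $I$ and its ancestors, using that every interval in $\{I\}\cup A(I)$ spans all of $\sigma_I$. Your version is more explicit than the paper's (which simply asserts that a labeling with $\alpha(I)=p$ and $\alpha(A(I))=S$ can be found), and you additionally handle the corner cases where $\alpha(I)$ is undefined or $p\in S$, but the underlying idea is identical.
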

\begin{proof}
If $S \cup \{p\} \in R(C)$, then we find a feasible labeling $\alpha$ for $\sigma_I$ with $\alpha(I)=p$ and  $\alpha(A(I))=S$. Conversely, if $S \in R(I)$, then there must some feasible labeling $\alpha$ for $\sigma_I$ with
$\alpha(A(I))=S$, and we simply take $p=\alpha(I)$.
\end{proof}

So $L(I)$ can be generated by taking $S\setminus \{p\}$ for each
$p\in S$ and $S\in L(C)$, and eliminating all resulting tuples that are not
inclusion-wise minimal.

\vspace{2mm}
{\bf Example.} Consider the interval $I$ having two children
$J_1$ and $J_2$. We saw that if $L(J_1)$ and $L(J_2)$ are disjoint
and both contain only singletons, their joint list $L(J_1, J_2)$ contains
$|L(J_1)|\cdot |L(J_2)|$ pairs. Then, according to
Observation~\ref{obs:C_to_I_list}, $L(I)$ contains
$|L(J_1)| + |L(J_2)|$ singletons.

This observation that composition of request lists with singletons give request lists with singletons will be useful
in the lower bound below.

\paragraph{Sizes of request lists.}
Now we define the Ramsey-theoretic parameters of the service patterns.
Let us denote $f(\ell, t)$ the maximum possible numbers of $t$-tuples in the
request list $L(I)$ of any interval $I$ at level $\ell$ from any service pattern $\cI$.
Similarly, we denote $n(\ell,t)$ the maximum possible number of $t$-tuples
contained in a joint request list $L(C)$,
where $C$ are the children of some $\ell$th level interval $I$.
The examples above show that $n(\ell,2)$ can be of order
$f^2(\ell,1)$, and $f(\ell+1,1)\geq 2 f(\ell,1)$.
In the following subsection we show that the $f(\ell,1)$ and $n(\ell,1)$ can grow doubly
exponentially with $\ell$.

\subsection{Doubly-exponential growth of the size of request lists}
\label{sec:int_lb}
In the following theorem we show a doubly exponential lower bound on $n(k,1)$,
which is the maximum number of singletons in a joint request list of children of
any $k$th level interval.
In particular, we construct a request sequence $\sigma$, and a service pattern
such that each level-$\ell$ interval has a request list of $\Omega(2^{2^{\ell - 3}})$ singletons.
This construction is the key ingredient of the lower bound
in Section~\ref{sec:lb}.

\begin{theorem}\label{thm:int_lb}
The numbers $n(\ell,1)$ and $f(\ell-1,1)$ grow doubly-exponentially with $\ell$.
More specifically, for level $k$ intervals we have
\[ n(k,1) \geq f(k-1,1) \geq 2^{2^{k - 4}}. \]
\end{theorem}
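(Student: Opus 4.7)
The plan is to prove $f(k-1,1) \geq 2^{2^{k-4}}$ by constructing, inductively on $\ell$, a hierarchical service pattern $\cI_\ell$ together with a request sequence $\sigma_\ell$, such that the root interval $I$ at level $\ell$ has $L(I)$ containing at least $2^{2^{\ell-3}}$ singletons. The other inequality $n(k,1) \geq f(k-1,1)$ is immediate: placing the constructed $\cI_{k-1}$ as the sole child of a fresh level-$k$ interval makes the corresponding joint request list $L(C)$ coincide with $L(I_{k-1})$. For the base case, small $\ell$ is handled directly (e.g.\ $\ell=3$ with two singletons, matching $2^{2^0}=2$).

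For the inductive step, I aim to construct the level-$\ell$ pattern so that the singleton count roughly squares, $N_\ell \geq N_{\ell-1}^2$, which is exactly what the doubly-exponential growth demands. A naive attempt --- two disjoint copies of the level-$(\ell-1)$ construction as children of a new root --- uses joint composition (Observation~\ref{obs:list_prod}) to produce $N_{\ell-1}^2$ pairs in $L(C)$, but the projection step (Observation~\ref{obs:C_to_I_list}) then collapses them to only $2N_{\ell-1}$ singletons in $L(I_\ell)$, namely the pair endpoints. To achieve squaring, the construction must maintain richer combinatorial data at every level: in addition to singletons, the level-$\ell$ tree carries a large family of pairs in its request list whose endpoints span many distinct pages. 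The recursion combines subtrees of both flavours as children, so that their joint request list contains minimal triples whose projections populate the next level's $L(I)$ with many new singletons (or many new pair endpoints on alternate steps), and the compounding effect across $\ell$ levels yields the $2^{2^{\ell-3}}$ bound.

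Verifying the construction has two parts for each proposed singleton $\{p\}$: establishing $\{p\} \in R(I)$ by exhibiting an explicit feasible labeling of $T_I$ that assigns $p$ to an ancestor of $I$ (using the inductive hypothesis on the child subtrees), and ruling out $\emptyset \in R(I)$ --- which would force $L(I)=\{\emptyset\}$ and trivialize the construction --- by embedding in $\sigma_\ell$ a forcing sub-request that no labeling of $T_I$ alone can cover. The main obstacle is engineering the recursion so that the counts actually square from one level to the next rather than merely doubling; this forces the construction to thread two kinds of structural data (singletons and carefully placed pairs with many endpoints) through the induction, and the bookkeeping of inclusion-wise minimality of the freshly produced singletons --- avoiding parasitic smaller tuples in $R(I)$ created by the joint composition --- is where I expect the combinatorial heart of the argument to lie.
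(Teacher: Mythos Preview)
Your diagnosis of the central obstacle is exactly right: two disjoint level-$(\ell-1)$ copies as children produce $N_{\ell-1}^2$ pairs in $L(C)$, but projection via Observation~\ref{obs:C_to_I_list} collapses these to only $2N_{\ell-1}$ singletons, so the count merely doubles. However, your proposed remedy --- carrying pairs (and then triples) through the induction as auxiliary data alongside singletons --- is not worked out, and there is an immediate tension you do not address: request lists contain only inclusion-wise minimal tuples, so once $\{p\}\in L(I)$ no pair containing $p$ can also lie in $L(I)$. Making many singletons and many-endpoint pairs coexist in the same $L(I)$ is therefore highly constrained, and you give no construction that achieves it.

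The paper sidesteps this entirely by keeping $L(I)$ purely singleton at every level and arranging the children's singleton sets to \emph{overlap} heavily rather than be disjoint. Concretely: given a target set $P'$ of roughly $(n_i/2)^2$ pages, pick a mask $M\subset P'$ of size about $n_i/2$, partition $P'\setminus M$ into blocks $Q_q$ indexed by $q\in M$, and set $P_q=(M\setminus\{q\})\cup Q_q$, each of size $n_i$. The level-$i$ root gets one child $J_q$ per $q\in M$, each a level-$(i-1)$ copy of the construction on $P_q$. Two easy properties now do all the work: (i) every $p\in P'$ is absent from some $P_q$, so no singleton survives in $R(C)$ and hence $\emptyset\notin R(I)$; (ii) every $p\in P'$ has a companion $\bar p$ (take $\bar p=q$ when $p\in Q_q$, and any $\bar p\in Q_p$ when $p\in M$) such that each $P_q$ contains $p$ or $\bar p$, so $\{p,\bar p\}\in R(C)$ and thus $\{p\}\in R(I)$. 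Hence $L(I)=\{\{p\}:p\in P'\}$ with $|P'|\ge (n_i/2)^2$. The squaring you were seeking comes from the overlap structure of the mask, not from enriching the tuple sizes threaded through the induction.
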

\begin{proof}
We construct a request sequence and a hierarchical service pattern with a single $k$th level interval, such that any interval $I$ at level $1 \leq \ell < k$ has a request list $L(I)$ consisting of $n_{\ell+1}$ singletons, where $n_2=2$ and

 \[ n_{i+1} = (\lfloor n_i/2\rfloor + 1)
	+ (\lfloor n_i/2\rfloor + 1) \lceil n_i/2 \rceil \geq (n_i/2)^2.
\]

Note that $n_2=2, n_3=4, n_4=9, \ldots$ and in general as $n_{i+1} \geq (n_i/2)^2$ it follows that for $\ell \geq 4$, we have $n_{\ell} \geq 2^{2^{\ell-4}+2} \geq 2^{2^{\ell-4}}$.

We describe our construction inductively, starting at the first level. Let $I$ be an interval at level $1$ and $\sigma_I$ be a request sequence consisting of $n_2=2$ distinct pages $p$ and $q$. Clearly, $L(I) = \big\{\{p\},\{q\}\big\}$. We denote the subtree at $I$ together with the request sequence $\sigma_I$ as $T_1(\{p,q\})$.

Now, for $i \geq 2$, let us assume that we already know how to construct a tree $T_{i-1}(P)$ which has a single $(i-1)$th level interval $J$, its request sequence $\sigma_J$ consists only of
pages in $P$, and for each $p\in P$, $\{p\}$ is contained as a singleton in $L(J)$. Let $n_i$ denote the size of $P$.

We show how to construct $T_{i}(P')$,
for an arbitrary set $P'$ of $n_{i+1}$ pages, such that $T_i$ has a single $i$th level interval $I$, and all pages $p \in P'$ are contained in the request list $L(I)$ as singletons.

First, we create a set of pages $M \subset P'$ called {\em mask}, such that $|M| = \lfloor n_i/2\rfloor + 1$. Pages that belong to $M$ are arbitrarily chosen from $P'$. Then, we partition $P' \setminus M$ into $|M|$ disjoint sets of size $\lceil n_i/2\rceil$ and we associate each of these sets with a page $q \in M$. We denote by $Q_q$ the set associated to page $q \in M$. For each $q \in M$, let $P_q = (M\setminus \{q\}) \cup Q_q$. Note that $|P_q|=n_i$. The interval tree $T_{i}(P')$ is created as follows. It consists of a single interval $I$ at level $i$ with $\lfloor n_i/2\rfloor + 1$ children $J_q$. For each $J_q$ we inductively create a level $i-1$ subtree $T_{i-1}(P_q)$.
See Figure~\ref{fig:interval_lb} for an example.

\begin{figure}
\hfill\includegraphics{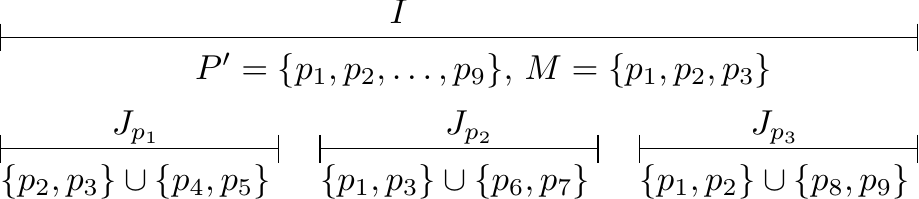}\hfill\ 
\caption{
Construction of an interval at level $i=3$ with request list of $n_4=9$ singletons, using intervals of level $i-1=2$ with request lists of $n_3 = 4$ singletons. The set $P'$ is decomposed into a mask of size $ \lfloor n_3/2\rfloor + 1  =3$, $M = \{ p_1,p_2,p_3 \}$ and sets $Q_{p_1} = \{ p_4,p_5 \}$, $Q_{p_2} = \{ p_6,p_7 \}$ and $Q_{p_3} = \{ p_8, p_9 \} $. For each $q \in M$, the requested set of interval $J_q$ is $P_q = (M \setminus q) \cup Q_q$.}
\label{fig:interval_lb}
\end{figure}

This construction has two important properties:
\begin{lemma}
\label{lem:2props}
First, for each $p\in P'$ there exists a subtree $T_{i-1}(P_q)$
	such that $p\notin P_q$.
Second, for each $p\in P'$ there exists a page $\bar{p}\in P'$ such that each $P_q$ contains either $p$ or $\bar{p}$.
\end{lemma}
\begin{proof}
If page $p\in M$, then it
belongs to all sets $P_q$ except for $P_p$.
If $p \notin M$, then $p \in Q_q$ for some $q$ and hence $p$ only lies in $P_q$.
 This proves the first property.

For the second property, if $p\in M$, we can choose an arbitrary
$\bar{p} \in P_p$, and note that $p$ lies in every $P_q$ for $q \in M \setminus\{p\}$.
 On the other hand, if $p \notin M$, let $q \in M$ be such that $p\in Q_q$ (and hence $p \in P_q$) and we define $\bar{p}=q$. Then by construction, $q$ is contained in all other sets $P_{q'}$ for $q'\neq M \setminus \{q\}$.
\end{proof}

Using the above lemma, we can now understand the structure of request lists.
\begin{lemma} The request list $L(I)$ consists of all singletons in $P'$, i.e.~$L(I)=\{ \{p\}\;|\, p\in P'\}$.
\end{lemma}
\begin{proof}
Let us assume by inductive hypothesis that for each child $J_q$ of $I$ we have
$L(J_q) = \{ \{p\}\;|\, p\in P_q\}$. As discussed above, this is true for the base case of
intervals at level $1$.

By Observation~\ref{obs:list_prod} and by the first property in Lemma \ref{lem:2props}, no singleton
belongs to $R(C)$. By the second property we also know that each $p\in P'$
appears in some pair $\{p, \bar{p}\}$ in $R(C)$.
Therefore, by Observation~\ref{obs:C_to_I_list},
we know that $R(I)$ contains a singleton
$\{p\}$ for each $p\in P'$, and also that $R(I)$ does not contain empty set,
since $R(C)$ contains no singletons. So, we have $ L(I) = \big\{ \{p\}\;|\, p\in P'\big\}.$
\end{proof}

This completes the construction of the tree $T_{k-1}(P)$ with $P$ of size $n_{k}$
and with a single interval $J$ at level $k-1$. To finish the service pattern, we
create a single $k$th level interval $I$ having $J$ as its only child.
By the discussion above,
$L(J)$ contains a singleton $\{p\}$ for each $p\in P$ and trivially
the joint request list of the children of $I$ is simply $L(J)$. Therefore we
have $ n(k,1)\geq f(k-1,1) \geq n_{k} \geq 2^{2^{k-4}}.$
\end{proof}
 
\section{Online Lower bound}
\label{sec:lb}
In this section we transform the combinatorial construction of
Theorem~\ref{thm:int_lb} into a lower bound for any deterministic algorithm,
proving Theorem~\ref{thm:lb}. Throughout this section we denote $s^{\ALG}_1,
\dotsc, s^{\ALG}_k$ the servers of the online algorithm and
$s^{\ADV}_1, \dotsc, s^{\ADV}_k$ the servers of the adversary.

Here is the main idea.
Let $\ALG$ be a fixed online algorithm. We create a request sequence adaptively, based on the decisions of $\ALG$, which consists of arbitrary number of \textit{phases}. During each phase, the heaviest server of the adversary $s^{\ADV}_k$ stays at some fixed location. Whenever a phase ends, the adversary might move all its servers (including $s^{\ADV}_k$) and a new phase may start. During a phase, requests are determined by a recursive construction, using \textit{strategies} which we define later on. At a high-level, the goal of the strategies is to make sure that the following two properties are satisfied:
\begin{enumerate}[(i)]
\item \label{prtA} For $i=1,\ldots,k-1$, $\ADV$ never moves server $s_i^{\ADV}$,
unless $\ALG$ moves some heavier server $s_j^{\ALG}$ for $j > i$ at the same
time.
\item \label{prtB} During each phase, $\ALG$ moves its heaviest server $s_k^{\ALG}$ at least $n_k$ times.
\end{enumerate}
 These two properties already imply a lower bound on the competitive ratio of
$\ALG$ of order $n_k$, whenever the weights of the servers are well separated, i.e.
 $w_{i+1} \geq n_k \cdot \sum_{j=1}^i w_j$ for each $1\leq i<k$. Here $n_k \geq 2^{2^{k-4}}$ is the number of candidate points for $s_k^{\ADV}$. 

In the following section we show how each phase is defined using strategies. We
conclude the proof of Theorem~\ref{thm:lb} in Section \ref{sec:lb_thm}.

\subsection{Definition of Strategies}
\label{sec:lb-str}

Each phase is created using $k$ adaptive \textit{strategies} $S_{1},\dotsc,S_k$, where $S_1$ is the simplest one and $S_{i+1}$ consists of several executions of $S_i$. An execution of strategy $S_i$ for $1 \leq i < k$ corresponds to a subsequence of requests where $\ALG$ moves only servers $s_1,\dotsc,s_{i}$. Whenever $\ALG$ moves some server $s_j$, for $j >i$, the execution of $S_i$ ends. An execution of $S_k$ ends only when $\ALG$ moves its heaviest server to the location of $s_k^{\ADV}$.

We denote by $S_i(P)$ an execution of strategy $S_i$, with a set of requested
points $P$, where $|P| = n_{i+1}$. We start by defining the strategy of the highest level $S_k$. An execution $S_k(P)$ defines a \textit{phase} of the request sequence. We make sure that if $p$ is the location of $s_k^{\ALG}$ when the execution starts, then $p \notin P$.

\begin{algorithm2e}
\SetAlgoRefName{}
\SetAlgorithmName{Strategy $S_k(P)$}{}{}
\caption{\ }
partition $P$ into $T$ and $B$ of size $n_k$ each arbitrarily\;
$B' := \emptyset$\;
\While{$T \neq \emptyset$}{	
	Run $S_{k-1}(T\cup B')$ until $\ALG$ moves $s_k$\;
	$p := \text{new position of $s_k^{\ALG}$}$\;
	$T := T \setminus \{p\}$\;
	$B' := \text{arbitrary subset of $B \setminus \{p\}$
		of size $n_{k}-|T|$}$\;
}
Terminate Phase
\end{algorithm2e}

Intuitively, we can think of $T$ as the set of candidate locations for $s_k^{\ADV}$.
The set $B'$ is just a padding of new pages to construct a set $T \cup B'$ of size
$n_k$ as an argument for $S_{k-1}$. Whenever $s_k^{\ALG}$ is placed
at some point $p \in T$, we remove it from $T$, otherwise $T$ does not change. We then update $B'$ such that $|T \cup B'| = n_k$ and $p \notin B'$. This way, we make sure that $p$ is never requested as long as $s_k^{\ALG}$ stays there.

We now define the strategies $S_i$ for $1 < i < k$. An execution of $S_i(P)$ executes several consecutive instances of $S_{i-1}$. We first describe how do we choose the set $P'$ for each execution of $S_{i-1}(P')$, such that $P' \subset P$ and $|P'| = n_i$.

We use the construction described in Section~\ref{sec:int_lb}. In particular, we choose an arbitrary set of points $M \subset P$ called {\em mask}, and we partition $P \setminus M$ into $|M|$ disjoint sets of equal size, each one associated with a point $q \in M$. We denote by $Q_q$ the set associated to point $q \in M$. All executions of strategy $S_{i-1}$, have as an argument a set $P_q = (M \setminus \{q\}) \cup Q_q $, for some $q \in M$. We discuss about the size of $M$ and $Q_q$ later on. Before describing the strategies, we observe that Lemma~\ref{lem:2props} implies the following two things regarding those sets:

\begin{observation}
For each point $p\in P$ there is a set $P_q$ such that $p \notin P_q$. If $p\in M$, this set is $P_p$. Otherwise, $p$ belongs to $Q_q$ for some $q\in M$ and then we can choose any $P_{q'}$ for $q'\ne q$.
\end{observation}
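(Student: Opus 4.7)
The statement is essentially a transcription of the first property of Lemma~\ref{lem:2props} into the language of the strategies, with the additional book-keeping of \emph{which} set $P_q$ is the desired one. My plan is to simply verify it directly from the definitions of $M$, the $Q_q$'s, and $P_q = (M\setminus\{q\})\cup Q_q$, splitting on whether $p\in M$ or not. No new idea beyond Lemma~\ref{lem:2props} is required; the work is really just recording which index $q$ serves as the witness.

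First I would recall that $M\subseteq P$ is chosen arbitrarily and the sets $\{Q_q\}_{q\in M}$ form a partition of $P\setminus M$, each $Q_q$ being disjoint from $M$. This immediately implies the following two facts that I would state and use: (a) for $p\in M$ and any $q\in M$, we have $p\in P_q$ iff $p\in M\setminus\{q\}$, since $p\notin Q_q$; and (b) for $p\notin M$, there is a unique $q\in M$ with $p\in Q_q$, and then $p\in P_{q'}$ iff $q'=q$, since $p\notin M$ rules out membership in $M\setminus\{q'\}$.

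Given (a), if $p\in M$ then setting $q=p$ yields $p\notin M\setminus\{p\}$ and $p\notin Q_p$, so $p\notin P_p$, as claimed. Given (b), if $p\notin M$ then the unique $q$ with $p\in Q_q$ satisfies $p\in P_q$, but for every $q'\in M$ with $q'\neq q$ we have $p\notin P_{q'}$, so any such $q'$ works. This covers all points of $P$ and matches the case analysis in the statement verbatim.

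\textbf{Expected obstacle.} There really isn't one: the observation is a direct consequence of the partition structure, and the only thing to be careful about is not confusing $p$'s membership in $M$ with its role as the index $q$ (in particular, in case (b) the witness $q'$ is not uniquely determined, and I would explicitly note this to match the phrasing ``we can choose any $P_{q'}$ for $q'\neq q$'').
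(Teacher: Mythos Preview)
Your proposal is correct and matches the paper's approach exactly. The observation in the paper carries no separate proof beyond the case split already written into its statement, and the underlying argument is precisely the one given for the first property of Lemma~\ref{lem:2props}: if $p\in M$ then $p$ lies in every $P_q$ except $P_p$, and if $p\notin M$ then $p$ lies only in the single $P_q$ with $p\in Q_q$.
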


\begin{observation}\label{obs:lb_2pages}
For any $p\in P$ there is a point $\bar{p}$ such that each $P_q$ contains either $p$ or
$\bar{p}$. In particular, if $p\in Q_q$ for some $q$, we choose $\bar{p}=q$, otherwise $p\in M$ and then we can choose any $\bar{p} \in Q_p$. 
\end{observation}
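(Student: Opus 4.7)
The plan is to verify Observation~\ref{obs:lb_2pages} by direct case analysis on whether $p \in M$ or $p \in P \setminus M$, mirroring the argument used for the second property of Lemma~\ref{lem:2props}. The sets $\{P_q\}_{q \in M}$ in the strategy construction are built from a mask $M \subseteq P$ and the partition $\{Q_q\}_{q \in M}$ of $P \setminus M$ by the same recipe $P_q = (M \setminus \{q\}) \cup Q_q$ used in Section~\ref{sec:int_lb}, so the observation is structurally the same statement and admits the same proof.

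First, suppose $p \in P \setminus M$. Then $p \in Q_q$ for a unique $q \in M$, and I would take $\bar{p} = q$. For any $q' \in M$, I need to argue that $P_{q'}$ contains $p$ or $\bar{p}$: if $q' = q$, then $P_{q'} = (M \setminus \{q\}) \cup Q_q$ contains $p \in Q_q$; if $q' \ne q$, then $q \in M \setminus \{q'\}$, so $P_{q'}$ contains $\bar{p} = q$. Second, suppose $p \in M$. I would pick any $\bar{p} \in Q_p$ (non-empty by construction). For $q' = p$, the set $P_p = (M \setminus \{p\}) \cup Q_p$ contains $\bar{p} \in Q_p$; for any other $q' \in M \setminus \{p\}$, we have $p \in M \setminus \{q'\}$, so $P_{q'}$ contains $p$. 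In either case every $P_{q'}$ contains at least one of $p,\bar{p}$.

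There is no substantive obstacle to overcome here; the verification is a short combinatorial check and the ``hardest'' part is simply making sure the two cases described in the observation's own formulation exhaust all possibilities (they do, since $M$ and $\bigcup_{q \in M} Q_q$ partition $P$). The reason this innocuous-looking statement is worth stating separately is its downstream role: it is the combinatorial guarantee that lets the adversary in the lower bound always protect the position $p$ of its heavy server $s_k^{\ADV}$ by falling back on $\bar{p}$ whenever the recursive strategy $S_{i-1}$ is invoked on a set $P_q$ that happens to miss $p$.
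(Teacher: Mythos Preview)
Your proof is correct and is essentially the same two-case verification as the paper's argument for the second property of Lemma~\ref{lem:2props}, which is exactly what this observation restates in the context of the strategies $S_i$. There is nothing to add.
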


\begin{algorithm2e}[H]
\SetAlgoRefName{}
\SetAlgorithmName{Strategy $S_i(P)$, where $1<i<k$}{}{}
\caption{\ }
Decompose $P$ into mask $M$ and sets $Q_q$\;
For each $q \in M$, denote $P_q := (M \setminus \{q\}) \cup Q_q$\;
\Repeat{$\ALG$ moves $s_{i+1}$ or some heavier server}{
	$p := \text{position of $s_i^{\ALG}$}$\;
	Choose any $P_q$, s.t. $p\notin P_q$,
	and	run $S_{i-1}(P_q)$ until $\ALG$ moves $s_i$\;
}
\end{algorithm2e}

Last, strategy $S_1$ takes as an argument a set of $n_2=2$ points and requests them in an alternating way.

\begin{algorithm2e}[H]
\SetAlgoRefName{}
\SetAlgorithmName{Strategy $S_1(\{p,q\})$}{}{}
\caption{\ }
\Repeat{$\ALG$ moves $s_2$ or some heavier server}{
	If $s_1^{\ALG}$ is at $q$: request $p$\;
	Otherwise: request $q$\;
}
\end{algorithm2e}

Observe that, an execution of a strategy $S_i$, for $1 \leq i < k$, ends only if $\ALG$ moves some heavier server. This means that if $\ALG$ decides not to move any heavier server, then the execution continues until the end of the request sequence. Moreover, it is crucial to mention that, by construction of the strategies $S_1, \dotsc, S_k$, we have the following:

\begin{observation}
For $1 \leq i \leq k$, if server $s_i^{\ALG}$ is located at some point $p$, then $p$
is never requested until $s_i^{\ALG}$ moves elsewhere.
\end{observation}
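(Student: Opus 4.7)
The plan is a routine structural induction on the nested strategy calls, using two supporting invariants. The first is a \emph{containment invariant}: every request generated inside an execution of $S_j(P)$ belongs to $P$. This follows by induction on the level $j$. The base case $S_1(\{p,q\})$ requests only $p$ or $q$. For the inductive step, $S_i(P)$ for $1<i<k$ invokes $S_{i-1}$ only on sets of the form $P_q = (M\setminus\{q\}) \cup Q_q \subseteq P$, and $S_k(P)$ invokes $S_{k-1}$ only on $T \cup B' \subseteq P$, so the claim propagates through the nesting.

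The second is an \emph{avoidance invariant}: whenever a sub-call $S_{i-1}(P')$ is launched from the current iteration of $S_i$'s loop (for $i \geq 2$), the set $P'$ does not contain the position of $s_i^{\ALG}$ at the moment of the launch. For $1<i<k$ this is explicit in the pseudocode (``Choose any $P_q$ with $p \notin P_q$''). For $i=k$ it follows from the initial condition that $s_k^{\ALG}$'s starting position $p$ lies outside $P$, together with the fact that after each move of $s_k^{\ALG}$ the updates $T := T \setminus \{p\}$ and $B' \subseteq B \setminus \{p\}$ remove the new $p$ from the argument passed to the next $S_{k-1}$ call.

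Combining the two invariants proves the observation. Requests are issued only by the innermost $S_1$. For $i=1$, the $S_1$ code explicitly requests whichever of its two pages does not hold $s_1^{\ALG}$. For $i \geq 2$, the key point is that inside the currently running sub-call $S_{i-1}(P')$ launched from $S_i$'s current iteration, server $s_i^{\ALG}$ has not moved: $S_{i-1}$ (and every strategy nested inside it) terminates the instant $\ALG$ moves $s_i$ or a heavier server, so a move of $s_i^{\ALG}$ would immediately end $S_{i-1}$, return control to $S_i$, and trigger a fresh iteration with a new $p$. Hence the present position of $s_i^{\ALG}$ still equals the $p$ picked when the currently active $S_{i-1}(P')$ was launched. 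By avoidance $p \notin P'$, and by containment the current request lies in $P'$, so it differs from $p$.

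I do not expect any real obstacle: the only thing that requires some care is the $S_k$ case, where one has to unwind the bookkeeping of the while loop (with $T$, $B$ and $B'$) to verify that $p$ is excluded from every argument passed to $S_{k-1}$ across iterations, and to check that the termination conditions propagate correctly through the nesting so that $s_i^{\ALG}$ really cannot move during an active $S_{i-1}$ sub-call. Beyond that, the argument is entirely mechanical.
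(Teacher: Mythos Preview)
Your proof is correct. The paper does not actually prove this observation; it simply asserts it as an immediate consequence of how the strategies $S_1,\dotsc,S_k$ are defined. Your containment and avoidance invariants, together with the termination-propagation argument showing that $s_i^{\ALG}$ cannot move while the active $S_{i-1}$ sub-call is still running, are precisely the details one has to check to make the claim rigorous, and they faithfully unpack what the paper leaves implicit.
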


\paragraph{Cardinality of sets.} We now determine the size of the sets used by $S_i$ for $1 < i \leq k $. For $2 \leq i \leq k$, recall that all arguments of $S_{i-1}$ should have size $n_i$. In order to satisfy this, we choose the sizes as in Section~\ref{sec:int_lb}, i.e. $|M| = \lceil n_i/2\rceil + 1$ and $|Q_q| = \lfloor n_i/2 \rfloor $. It is clear that $ |P_q| = |(M\setminus \{q\})| + |Q_q|  = n_i  $. Recall that $P = M \cup (\bigcup_{q\in M} Q_q)$, and therefore we have

 \begin{equation*}
n_{i+1} = |P| = (\lceil n_i/2\rceil + 1)
	+ (\lceil n_i/2\rceil + 1)\lfloor n_i/2 \rfloor
	= (\lceil n_i/2\rceil + 1)(\lfloor n_i/2 \rfloor+1)
	\geq n_i^2/4.
\end{equation*}

Therefore, by choosing $n_2 = 2$ we have $n_3=4$ and for $k \geq 4$
\begin{equation}
\label{eq:nklb}
n_k \geq 2^{2^{k-4}+2}  \geq 2^{2^{k-4}}.
\end{equation}

Last, for strategy $S_k$ we have that $n_{k+1} = 2 n_k$.

\paragraph{Service pattern associated with the request sequence.} We associate
the request sequence with a service pattern $\mathcal{I}$,
which is constructed as follows: For each execution of strategy $S_i$, we create
one interval $I \in \mathcal{I}_i $. We define $\mathcal{I} = \mathcal{I}_1 \cup
\cdots \cup \mathcal{I}_k$. Clearly, $\mathcal{I}$ is hierarchical.
Next lemma gives a characterization of request lists of intervals $I \in \mathcal{I}$.

\begin{lemma}\label{lem:pages_useful}
Let $I$ be the interval corresponding to a particular instance $S_i(P)$.
Given that there is an interval $J$ at level $j > i$ such that $J$ is labeled
by some point $p\in P$, then there is a feasible labeling for $I$ and its
descendants covering all requests issued during the lifetime of $I$.

In other words, all points $p \in P$ are contained in the request list $L(I)$ as singletons.
\end{lemma}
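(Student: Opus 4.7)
My plan is to prove this by induction on the level $i$, using the statement itself as the inductive hypothesis. The hypothesis is exactly that whenever some ancestor of the interval at a strictly higher level is labeled by a point of the relevant set, the subtree rooted at the interval admits a feasible labeling covering all of its requests.

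For the base case $i=1$, the strategy $S_1(\{p,q\})$ issues requests only at $p$ and $q$, and $I$ is a leaf. Given an ancestor labeled by one of these two points, I simply label $I$ itself with the other point; every request during $I$ is then served either by $I$ or by that ancestor, so the labeling is feasible.

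For the inductive step with $i>1$, let $I$ correspond to $S_i(P)$ and suppose some ancestor $J$ at level $j>i$ carries label $p\in P$. The children of $I$ are the intervals $\{J_q : q\in M\}$ produced by the nested calls $S_{i-1}(P_q)$ with $P_q=(M\setminus\{q\})\cup Q_q$, and their lifetimes partition that of $I$. The key step is to apply Observation~\ref{obs:lb_2pages} to obtain a point $\bar p\in P$ such that every $P_q$ contains $p$ or $\bar p$, and then to assign the label $\bar p$ to $I$ itself. Now for each child $J_q$: if $p\in P_q$, the ancestor $J$ of $J_q$, still at level $j>i-1$, already carries $p\in P_q$; otherwise $\bar p\in P_q$, and the newly labeled $I$, at level $i>i-1$, provides the required higher-level ancestor. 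In either case the inductive hypothesis applied to $J_q$ yields a feasible labeling of $T_{J_q}$. Pasting these per-child labelings together with the label $\bar p$ on $I$ gives a feasible labeling of $T_I$ covering every request during $I$. The final ``in other words'' clause follows by specialising $J$ to the immediate parent of $I$: assigning the singleton $\{p\}$ to an ancestor at level $i+1$ extends to a feasible labeling of $T_I$, so $\{p\}\in L(I)$ for every $p\in P$.

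The main obstacle is selecting a single label for $I$ that simultaneously helps all children $J_q$ for which $p\notin P_q$; this is exactly what Observation~\ref{obs:lb_2pages} buys, mirroring the role it played in the combinatorial construction of Theorem~\ref{thm:int_lb}. Beyond this, the only bookkeeping is to verify that the two candidate ancestors invoked at level $i-1$, namely $J$ at level $j>i$ and $I$ at level $i$, both lie strictly above level $i-1$, which is automatic.
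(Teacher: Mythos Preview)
Your proof is correct and follows essentially the same approach as the paper: induction on $i$, the base case handled by labeling $I$ with the other of the two points, and the inductive step handled by invoking Observation~\ref{obs:lb_2pages} to find $\bar p$, labeling $I$ with $\bar p$, and noting that every child $S_{i-1}(P_q)$ then has an ancestor labeled by a point of $P_q$. Your write-up is slightly more detailed than the paper's (you explicitly split the two cases $p\in P_q$ versus $\bar p\in P_q$ and spell out the pasting and the ``in other words'' clause), but the argument is the same.
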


\begin{proof}
We prove the lemma by induction.
The lemma holds trivially for
$S_1$:
The requests are issued only in two different points $p$ and $q$.
Whenever $J$ has assigned $p$, we assign $q$ to $I$ and vice versa. In both
cases all the requests during the lifetime of $I$ are covered either by $I$ or
by $J$.

Now, assuming that lemma holds for level $i-1$, let us prove it also for
$i$. Let $p \in P$ be the point assigned to $J$ the ancestor of $I$.
By Observation~\ref{obs:lb_2pages}, we know that there is a $\bar{p}\in P$
such that each $P_q$ contains either $p$ or $\bar{p}$. We assign $\bar{p}$ to
$I$ and this satisfies the condition of the inductive hypothesis for all
children of $I$, as all those instances have one of $P_q$ as an input.
\end{proof}

Moreover, by construction of strategy $S_k$, we get the following lemma.

\begin{lemma}
\label{lem:lb-feas}
The service pattern $\mathcal{I}$ associated with the request sequence is feasible.
\end{lemma}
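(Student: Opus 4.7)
I will reduce the feasibility of $\mathcal{I}$ to a phase-by-phase argument and then label each phase's top-level interval using Lemma~\ref{lem:pages_useful}. Since distinct phases yield disjoint top-level intervals at level $k$, it suffices, for each phase interval $I \in \mathcal{I}_k$ corresponding to a call $S_k(P)$, to exhibit a single label $p^* \in P$ for $I$ that satisfies the hypothesis of Lemma~\ref{lem:pages_useful} at every child of $I$. Concretely, let $J_1, J_2, \ldots$ be the children of $I$, where $J_j \in \mathcal{I}_{k-1}$ corresponds to the $j$th call $S_{k-1}(P_j)$ with $P_j := T_j \cup B'_j$ (the values at the start of iteration $j$ of the while loop inside $S_k$). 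Whenever $I$ is labeled by some $p \in P_j$, Lemma~\ref{lem:pages_useful} applied at level $k-1$ yields a feasible labeling of the subtree rooted at $J_j$ that covers all requests during $J_j$'s lifetime. So the task reduces to finding one $p^* \in P$ that lies in $P_j$ for every iteration $j$.

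The key observation is that the set $T$ is monotone non-increasing across the iterations of $S_k(P)$: a point leaves $T$ only when $\ALG$ moves $s_k^{\ALG}$ onto it, and $T$ never gains points. Hence the intersection $\bigcap_j T_j$ is non-empty, and any point in it lies in $T_j \subseteq P_j$ for every iteration $j$. If the while loop terminates at some iteration $j^*$, then necessarily $T_{j^*} = \{p_{j^*}\}$ (a single point, removed at iteration $j^*$), and taking $p^* := p_{j^*}$ gives $p^* \in T_j$ for all $1 \le j \le j^*$. If the loop never terminates, then $T$ is never exhausted, so some point in $T$ is never moved to by $\ALG$ with $s_k^{\ALG}$; take this as $p^*$ and again $p^* \in T_j$ for every $j$. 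In either case, one labels $I$ with $p^*$ and invokes Lemma~\ref{lem:pages_useful} at each child to obtain feasible labelings of the children's subtrees; stitching these together yields a feasible labeling of the phase, and repeating across phases proves that $\mathcal{I}$ is feasible.

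The main obstacle is recognising that one cannot take $p^* \in B$: since $B'_1 = \emptyset$, the first sub-execution has $P_1 = T$, and no point of $B$ lies in $P_1$, so Lemma~\ref{lem:pages_useful} would fail at $J_1$. This forces the choice $p^* \in T$ and makes the monotonicity of $T$ the crucial structural fact. Once this is recognised, the rest of the argument is routine bookkeeping.
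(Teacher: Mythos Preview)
Your proof is correct and follows essentially the same approach as the paper: reduce to a single phase, choose $p^*$ to be the last element remaining in $T$, use the monotonicity of $T$ to conclude $p^*\in T_j\subseteq P_j$ for every child $J_j$, and invoke Lemma~\ref{lem:pages_useful}. Your write-up is in fact a bit more careful than the paper's, since you explicitly treat the case where the while loop of $S_k$ does not terminate within the finite request sequence and you point out why a choice from $B$ would fail at the first child.
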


\begin{proof}
Let $\cI_P$ the service pattern associated with an execution of $S_k(P)$. We show that $\cI_P$ is feasible. Since the request sequence consists of executions of $S_k$, the lemma then follows.
 Let $p$ be the last point which remained in $T$ during the execution of
$S_k(P)$. Then, all the former children of $S_k$ were of form $S_{k-1}(P')$ where
$p \in P'$. By Lemma~\ref{lem:pages_useful}, by assigning $p$ to the $k$th level interval, there is a feasible
labeling for each children interval and all their descendants. We get that the service pattern $\mathcal{\cI_P}$ associated with strategy $S_k(P)$ is feasible.
\end{proof}

\subsection{Proof of Theorem~\ref{thm:lb}}\label{sec:lb_thm}

We now prove Theorem~\ref{thm:lb}. We first define the moves of $\ADV$ and then we proceed to the final calculation of the lower bound on the competitive ratio of $\ALG$. Recall that the request sequence consists of arbitrary many phases, where each phase is an execution of strategy $S_k$.

\paragraph{Moves of the adversary.} Initially, we allow the adversary to move all its servers, to prepare for the first phase. The cost of this move is at most $\sum_{i=1}^{k} w_i $. Since the request sequence can be arbitrarily long, this additive term does not affect the competitive ratio and we ignore it for the rest of the proof. It remains to describe the moves of the adversary during the request sequence.

   Consider the service pattern $\mathcal{I}$ associated with the request sequence. By lemma \ref{lem:lb-feas}, $\mathcal{I}$ is feasible. We associate to the adversary a feasible assignment of $\mathcal{I}$. This way, moves of servers $s^{\ADV}_i$ for all $i$ are completely determined by $\mathcal{I}$. We get the following lemma.

\begin{lemma}
\label{lem:prta}
At each time $t$, $\ADV$ does not move server $s_i^{\ADV}$, for $i=1,\ldots,k-1$, unless $\ALG$ moves some heavier server $s_j^{\ALG}$ for $j > i$.
\end{lemma}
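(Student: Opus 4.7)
The plan is to trace any movement of $s_i^{\ADV}$ back to an interval boundary in the service pattern $\mathcal{I}$, and then to trace that boundary back to a forced move of $\ALG$ using the stopping conditions of the strategies.

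First, I would note that the adversary fixes a feasible labeling of $\mathcal{I}$ (which exists by Lemma~\ref{lem:lb-feas}) and uses it to drive its servers: throughout a single level-$i$ interval $I$, server $s_i^{\ADV}$ sits at the label $\alpha(I)$. Hence if $s_i^{\ADV}$ moves at time $t$, then two distinct level-$i$ intervals meet at $t$, i.e.\ some level-$i$ interval ends at time $t$ and a new one begins. So it suffices to show that for $1\le i<k$, whenever a level-$i$ interval ends, $\ALG$ simultaneously moves some server $s_j^{\ALG}$ with $j>i$.

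Second, I would recall that by construction each level-$i$ interval corresponds to exactly one execution of strategy $S_i$, and that the executions are nested in accordance with the hierarchy of $\mathcal{I}$: an execution of $S_i$ lives inside the unique ongoing execution of $S_{i+1}$, which lives inside the ongoing $S_{i+2}$, and so on. Now inspect the termination rule of $S_i$ for $1\le i<k$: the pseudocode explicitly says that an execution of $S_i$ continues until $\ALG$ moves $s_{i+1}^{\ALG}$ or a heavier server. So an execution of $S_i$ terminates only at a time step when $\ALG$ moves some $s_j^{\ALG}$ with $j\ge i+1$, which is exactly what we need.

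Third, I would handle the "cascading" end of nested strategies consistently. Suppose $\ALG$ moves $s_j^{\ALG}$ at time $t$ for some $j>i$: this triggers termination of the currently running $S_{j-1}$ (since $j>j-1$), which in turn means that its parent $S_{j-1}$-interval ends, and by hierarchicality all descendant intervals at levels $i,i+1,\dotsc,j-2$ end as well. But this is also consistent with the stopping rule of each $S_\ell$ for $i\le\ell\le j-1$, since in each case $\ALG$ moved a server $s_j^{\ALG}$ heavier than $s_{\ell+1}^{\ALG}$. Thus every level-$i$ interval boundary in $\mathcal{I}$ is attributable to a simultaneous move of some $s_j^{\ALG}$ with $j>i$, and combined with the first step this yields the lemma.

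I do not expect serious technical obstacles here; the main thing to be careful about is being precise about the correspondence between strategy executions and intervals, and about the fact that $s_i^{\ADV}$ changes position only when the level-$i$ interval it lives in terminates.
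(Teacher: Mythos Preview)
Your proposal is correct and follows essentially the same argument as the paper: both identify a move of $s_i^{\ADV}$ with the end of a level-$i$ interval, identify that with the termination of an execution of $S_i$, and then invoke the stopping rule of $S_i$ to conclude that $\ALG$ moved some $s_j^{\ALG}$ with $j>i$. Your third paragraph on cascading is a nice clarification but is not needed for the lemma as stated, since each level is handled independently by its own stopping rule.
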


\begin{proof}
Consider the service pattern $\mathcal{I}$ associated with the request sequence. Each execution of strategy $S_i$ is associated with an interval $I_i \in \mathcal{I}$ at level $i$. The adversary moves $s_i^{\ADV}$ if and only if interval $I_i$ ends. By construction, interval $I_i$ ends if and only if its corresponding execution of $S_i$ ends. An execution of $S_i$ ends if and only if $\ALG$ moves some heavier server $s_j$ for $j >i$.
We get that at any time $t$, server $s_i^{\ADV}$ moves if and only if $\ALG$ moves some server $s_j$ for $j >i$.
\end{proof}

\paragraph{Calculation of the lower bound.} Let $\cost(s_i^{\ALG})$ and $\cost(s_i^{\ADV})$ denote the cost due to moves of the $i$th server of $\ALG$ and $\ADV$ respectively. Without loss of generality\footnote{It is easy to see that, if $\ALG$ uses only its lightest server $s_1^{\ALG}$, it is not competitive: the whole request sequence is an execution of $S_2(p,q)$, so the adversary can serve all requests at cost $w_2 + w_1$ by moving at the beginning $s_2^{\ADV}$ at $q$ and $s_1^{\ADV}$ at $p$. $\ALG$ pays 1 for each request, thus its cost equals the length of the request sequence, which implies an unbounded competitive ratio for $\ALG$.}, we assume that $ \sum_{i=2}^{k} \cost(s_i^{\ALG}) > 0 $. Recall that we assume a strong separation between weights of the servers. Namely, we have $w_1 = 1$ and $w_{i+1} = n_k \cdot \sum_{j=1}^i w_i$. This, combined with lemma (\ref{lem:prta}) implies that
\begin{equation}
\label{eq:prta}
\sum_{i=1}^{k-1} \cost(s_i^{\ADV}) \leq \big( \sum_{i=2}^{k} \cost(s_i^{\ALG}) \big)/n_k .
\end{equation}
Moreover, by construction of strategy $S_k$, a phase of the request sequence ends if and only if $\ALG$ has moved its heaviest server $s_k^{\ALG}$ at least $n_k$ times. For each phase $\ADV$ moves $s_k^{\ADV}$ only at the end of the phase. Thus we get that
\begin{equation}
\label{eq:prtb}
\cost(s_k^{\ADV}) \leq \cost(s_k^{\ALG})/n_k.
\end{equation}
Overall, using \eqref{eq:prta} and \eqref{eq:prtb} we get
\begin{align*}
\cost(\ADV) & =  \sum_{i=1}^{k-1} \cost(s_i^{\ADV}) + \cost(s_k^{\ADV})   \leq  \big( \sum_{i=2}^{k} \cost(s_i^{\ALG}) \big)/n_k + \cost(s_k^{\ALG})/ n_k  \\
                  & =   \big( \sum_{i=2}^{k-1} \cost(s_i^{\ALG}) + 2 \cost(s_k^{\ALG}) \big) /n_k  \leq 2 \cdot \cost(\ALG)/n_k.
\end{align*}
Therefore, the competitive ratio of $\ALG$ is at least $n_k/2$, which by \eqref{eq:nklb} is $\Omega(2^{2^{k-4}})$.
\hfill\qedsymbol

\section{Dichotomy theorems for service patterns}
\label{sec:dichotomy}
The theorems proved in this section are matching counterparts to
Theorem~\ref{thm:int_lb} ---
they provide an upper bound for the size of the request lists in a fixed
service pattern.
The first one, Theorem~\ref{thm:dichotomy}, shows that the parameter $n(k,1)$,
as defined in Section~\ref{sec:intervals}, is at most doubly exponential in $k$.
This bound is later used in Section~\ref{sec:ub} to prove an upper bound
for the case of the weighted
$k$-server problem where all the servers might have a different weight.

Moreover, we also consider a special case when there are only $d<k$ different
weights $w_1, \dotsc, w_d$. Then, for each $i=1, \dotsc, d$, we have
$k_i$ servers of weight $w_i$. This situation can be modeled using a service
pattern with only $d$ levels, where each interval at level $i$ can be labeled by
at most $k_i$ pages. For such service patterns, we can get a stronger upper
bound, which is singly exponential in $d$, $k^2$ and the product $\prod k_i$,
see Theorem~\ref{thm:dichotomy-d}. This theorem is later used in
Section~\ref{sec:ubd} to prove a performance guarantee for $\WFA$ in this
special setting.

\subsection{General setting of weighted $k$-server problem}
\label{sec:dichotomy_k}

Recall the definitions from Section~\ref{sec:intervals},
where we denote $n(k,1)$ the maximum possible number of singletons contained in
a joint request list of children of some $k$th level interval.

\begin{theorem}[Dichotomy theorem for $k$ different weights]
\label{thm:dichotomy}
Let $\cI$ be a service pattern of $k$ levels and $I\in \cI$ be an arbitrary interval
at level $k$. Let $Q\subseteq U$ be the set of feasible labels for $I$.
Then either $Q = U$, or
$|Q| \leq n(k,1)$ and $n(k,1)$ can be at most $2^{2^{k+3\log k}}$.
\end{theorem}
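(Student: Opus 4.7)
The plan is to prove by strong induction on the level $\ell$ a stronger dichotomy for every level-$\ell$ interval $I$: either $L(I) = \{\emptyset\}$, or for every $t \le k-\ell$ the number of $t$-tuples in $L(I)$ is at most some bound $F(\ell,t)$, and analogously $|L(C) \cap \binom{U}{t}| \le G(\ell,t)$ for the joint list of the children of a level-$\ell$ interval. The theorem follows by specializing to $\ell = k$, $t = 1$: the joint list $L(C)$ of the children of a level-$k$ interval consists of tuples of size $0$ or $1$ only, so either $L(C) = \{\emptyset\}$ (in which case $\{p\} \in R(C)$ for every $p\in U$, giving $Q=U$) or $L(C)$ is a nonempty collection of singletons and $|Q| = |L(C)| \le G(k,1) = n(k,1)$.

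\paragraph{Two recurrences.}
Two local relations drive the induction. From Observation~\ref{obs:C_to_I_list}, every minimal $t$-tuple $S \in L(I)$ is either already a minimal $t$-tuple in $L(C)$, or arises by deleting one element from a minimal $(t{+}1)$-tuple in $L(C)$; this yields $F(\ell,t) \le G(\ell,t) + (t{+}1)\,G(\ell,t{+}1)$. From Observation~\ref{obs:list_prod}, the joint list is precisely the set of minimal transversals of the family $\{L(J_i)\}_i$ of children's request lists. The dichotomy enters here: if every child has $L(J_i) = \{\emptyset\}$ then $L(C) = \{\emptyset\}$ and the trivial case propagates; otherwise at least one child enforces a genuine constraint and we must bound $|L(C)|$ using the inductive hypothesis.

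\paragraph{Witness argument for the joint list.}
The core step bounds $G(\ell,t)$ given inductive control on each $|L(J_i)|$. For any minimal $S \in L(C)$ of size $t$ and every $p \in S$, minimality forces $S\setminus\{p\} \notin R(C)$, so there exists a child $J_{i(p)}$ and a witness tuple $T_p \in L(J_{i(p)})$ satisfying $p \in T_p \subseteq S$. Hence $S = \bigcup_{p\in S} T_p$ is the union of at most $t$ witness tuples, each drawn from some child's request list. Grouping the (arbitrarily many) children by the identity of their request lists and applying the inductive bound on each list's size, one controls the total set of candidate witness tuples and obtains a bound of the form $G(\ell,t) \le \bigl(\mathrm{poly}(k)\cdot F_{\ell-1}\bigr)^{t}$, where $F_{\ell-1} := \max_s F(\ell{-}1,s)$; this feeds into the other recurrence.

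\paragraph{Main obstacle and solving the recurrence.}
Combining the two recurrences produces a relation of the form $F_\ell \le \mathrm{poly}(k) \cdot F_{\ell-1}^{O(k)}$, which iterated from $F_1 = O(k)$ yields $F_k \le 2^{2^{k + O(\log k)}}$, matching the claimed $2^{2^{k+3\log k}}$. The main obstacle lies in the witness argument: although each individual child has a small request list by induction, the children may present arbitrarily many \emph{distinct} lists, and one must bound the collective pool of witness tuples they contribute. This is exactly where the stronger inductive hypothesis described in Section~\ref{sec:intervals} is essential: one tracks the feasible labelings along the entire root-to-$I$ path, so that different children's constraints can be compared relative to shared ancestor labels and collapsed appropriately. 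Carefully accounting the polynomial factors introduced at each of the $k$ levels, so that the total exponent is $k + 3\log k$ rather than a larger polynomial in $k$, is the final delicate bookkeeping step.
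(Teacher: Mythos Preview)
Your high-level framework is right: induct on the level, bound both the individual and joint request-list sizes, and use Observations~\ref{obs:list_prod} and~\ref{obs:C_to_I_list} to pass between them. But the core step---your ``witness argument'' bounding $G(\ell,t)$---is not actually carried out, and the fix you gesture at does not work.

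The decomposition $S=\bigcup_{p\in S}T_p$ with $T_p\in L(J_{i(p)})$ is correct, and you correctly identify the obstacle: the children may present arbitrarily many distinct request lists, so the pool $\bigcup_i L(J_i)$ of candidate witnesses is unbounded. Your proposed resolution, ``tracking labelings along the root-to-$I$ path,'' is simply the definition of a request list; it provides no new leverage. The paper resolves the obstacle by two ingredients you are missing. First, a \emph{pigeonhole/volume argument}: among the $2^t$ subsets of $S$, the one occurring as a tuple in the most children's lists occurs in at least a $2^{-t}$ fraction of them; since any single list holds at most $f(\ell{-}1,t_1,0)$ tuples of size $t_1$, at most $2^t f(\ell{-}1,t_1,0)$ tuples can be this popular. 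This is what tames the unbounded number of children. Second, a \emph{third parameter $h$} counting predetermined pages: once the first piece $A_1$ is chosen its pages are fixed inside $S$, so the next piece $A_2$ has only $\bar d_2$ free coordinates and is bounded by $f(\ell{-}1,t_2,t_2-\bar d_2)$ rather than $f(\ell{-}1,t_2,0)$. The induction is actually on $n(\ell,t,h)$, and the telescoping $\sum_i \bar d_i = t-h$ is what keeps the double exponent linear in~$k$.

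Your recursion arithmetic is also off. A relation $F_\ell \le \mathrm{poly}(k)\cdot F_{\ell-1}^{O(k)}$ (or even the more careful $\log F_\ell \le (k-\ell+1)\log F_{\ell-1}+O(\log k)$ coming from your bound $G(\ell,t)\le(\mathrm{poly}(k)F_{\ell-1})^t$ with $t\le k-\ell+1$) iterates to $\log F_k$ of order $(k-1)!\cdot\log k = 2^{\Theta(k\log k)}$, hence $F_k\le 2^{2^{\Theta(k\log k)}}$, not $2^{2^{k+O(\log k)}}$. Getting the double exponent down to $k+3\log k$ genuinely requires the $h$-parameter so that the product over the pieces of the specification telescopes rather than compounds.
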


First, we need to extend slightly our definitions of $f(\ell,t)$ and $n(\ell,t)$
from Section~\ref{sec:intervals}.
Let $I$ be an arbitrary interval at level $\ell$ and $J_1, \dotsc, J_m$ be its
children at level $\ell-1$.
We define $f(\ell, t, P)$ to be the maximum possible number of $t$-tuples in the
request list $L(I)$ such that all those $t$-tuples contain some
predefined set $P$, and we define $f(\ell, t, h)$ as a maximum such number over
all sets $P$ with $|P|= h$ pages. For example, note that we have $f(\ell, t, t) = 1$
for any $\ell$ and $t$.
In a similar way, we define $n(\ell, t, h)$ the maximum possible number of
$t$-tuples in $L(J_1, \dotsc, J_m)$ each containing a predefined set of $h$
pages. The key part of this section is the proof of the following lemma.

\begin{lemma}\label{lem:n(l,t,h)}
Let $I$ be an interval at level $\ell\geq 2$, and let $J_1, \dotsc, J_m$ be its
children.
The number $n(\ell,t,h)$ of distinct $t$-tuples in the joint request list
$L(J_1, \dotsc, J_m)$, each containing $h$ predefined pages, can be bounded as
follows:
\[ n(\ell,t,h) \leq 2^{(\ell-1)(\ell-1+t)^2 \cdot 2^{\ell-1+t-h}}. \]
\end{lemma}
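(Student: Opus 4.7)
The plan is to establish the bound via double induction: an outer induction on the level $\ell$, and a nested inner induction on the slack $s := (\ell-1) + t - h$, which controls how much room the $t$-tuple $S$ has to vary beyond its $h$ predetermined pages in $P$. The doubling in the exponent $2^{\ell-1+t-h}$ arises from the inner induction: each inner step roughly halves this factor while paying a polynomial-in-$(\ell+t)$ overhead. Because the two layers must interleave (reducing $\ell$ will typically unfold into several slack reductions), I would state the hypothesis uniformly in $(\ell', t', h')$ with $\ell' \leq \ell$ and $s(\ell', t', h') \leq s(\ell, t, h)$.

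For the base cases, when $t = h$ we trivially have $n(\ell, t, h) \leq 1$ since $S$ must equal $P$, so the bound is immediate. For the outer base $\ell = 2$, the children $J_i$ are leaves, and each $L(J_i) = \{Q_i \setminus \{p\} : p \in Q_i\}$ where $Q_i$ is the page set requested during $J_i$ (from the example in Section~\ref{sec:labelings}). Then Observation~\ref{obs:list_prod} forces $|Q_i \setminus S| \leq 1$ for every $i$, hence $|Q_i| \leq t+1$, and $S$ is determined up to a polynomial-in-$t$ number of choices by the collection of pairs $(Q_i, p_i)$ with $S = Q_i \setminus \{p_i\}$ or $S \supseteq Q_i$; a direct enumeration respecting $P \subseteq S$ yields the claimed bound.

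For the inductive step, I would pick a page $p^* \in U \setminus P$ and split the $t$-tuples $S \supseteq P$ in $L(C)$ into (i) those containing $p^*$, whose count is bounded by $n(\ell, t, h+1)$ via the inner induction (slack reduced by one), and (ii) those avoiding $p^*$. For case (ii), by Observation~\ref{obs:list_prod} each child $J_i$ must have a witness $S_i \subseteq S$ in $L(J_i)$; by minimality of $S$ in $R(C)$, for every $p \in S \setminus P$ at least one child forces $p$ in all its witnesses, so $\bigcup_i S_i = S$. I would enumerate over the profile of witness sizes $t' \leq t$ and the specific minimal witnesses, where the number of witnesses of size $t'$ in a single $L(J_i)$ containing a fixed subset is bounded by $f(\ell-1, t', h')$. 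The latter is controlled by the outer inductive hypothesis on $n(\ell-1, \cdot, \cdot)$ through Observation~\ref{obs:C_to_I_list}, at a polynomial cost of at most $(t-h+1)$ from the removed element. Combining these bounds with the inner-induction recurrence closes the count.

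The main obstacle will be carefully tracking the accumulated polynomial overheads so that they remain absorbed within the $(\ell-1)(\ell-1+t)^2$ prefactor in the exponent and do not leak into the doubly-exponential factor $2^{\ell-1+t-h}$. Concretely, the recurrence has roughly the shape $n(\ell, t, h) \leq n(\ell, t, h+1) + \mathrm{poly}(\ell, t) \cdot (\text{product of }f(\ell-1, \cdot, \cdot)\text{ terms over a witness profile})$, and showing that this solves to the stated closed form requires verifying that each doubling step of slack contributes at most a multiplicative $(\ell-1+t)^2$ inside the exponent, while the outer induction contributes the factor of $(\ell-1)$. I expect the algebraic bookkeeping to be the heaviest part; the conceptual ideas (Observations~\ref{obs:list_prod} and~\ref{obs:C_to_I_list}, minimality, and the slack parameter) are already in hand.
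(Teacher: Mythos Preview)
Your proposal has a genuine gap in the inductive step. The core difficulty, which the paper's proof addresses head-on and your plan does not, is that the number of children $m$ can be arbitrarily large (not a function of $k$). Your case split on a single page $p^*\in U\setminus P$ does not make progress: without a data-dependent choice of $p^*$, case~(ii) (tuples $S$ avoiding $p^*$) is no smaller a problem than the original, and your proposed handling of it --- ``enumerate over the profile of witness sizes $t'$ and the specific minimal witnesses'' --- implicitly ranges over the $m$ children. Bounding the number of witnesses \emph{per child} via $f(\ell-1,t',h')$ is not enough; you also need to bound, independently of $m$, the number of distinct tuples $A_1\subseteq S$ that can serve as the witness coming from \emph{some} child. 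Nothing in your outline does that, so the recurrence you sketch does not close.

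The paper supplies exactly this missing ingredient via a popularity/volume argument packaged as a \emph{specification}. For a fixed $t$-tuple $A$, every child's list contains some subset of $A$, and there are only $2^t$ such subsets; by pigeonhole one subset $A_1$ appears in at least a $2^{-t}$ fraction of the children's lists, so the number of candidates for $A_1$ is at most $2^t\, f(\ell-1,|A_1|,0)$ --- a bound with no dependence on $m$. Iterating this at most $t$ times (each step covering a positive fraction of the remaining children and contributing at least one new page) yields a product $\prod_i 2^t f(\ell-1,t_i,t_i-\bar d_i)$ with $\sum_i \bar d_i = t-h$. After plugging in the inductive hypothesis for $n(\ell-1,\cdot,\cdot)$ via Observation~\ref{obs:f_vs_n}, the inequality $\sum_i 2^{\bar d_i}\le 2^{\sum_i \bar d_i}=2^{t-h}$ is precisely what generates the factor $2^{\ell-1+t-h}$ in the exponent. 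Your slack parameter and the intuition that ``each step halves the exponent'' point in the right direction, but without the volume argument you have no mechanism to take even the first step.
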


First, we show that this lemma directly implies Theorem~\ref{thm:dichotomy}.

\begin{proof}[Proof of Theorem~\ref{thm:dichotomy}]
Let us denote $J_1, \dotsc, J_m$ the set of children of $I$.
If their joint request list contains only empty set, then there is a feasible
assignment $\alpha$ which gives no label to $I$. In this case, $I$ can be
feasibly labeled by an arbitrary page, and we have $Q = U$.
Otherwise, the feasible label for $I$ are precisely the pages which are
contained in $L(J_1, \dotsc, J_m)$ as $1$-tuples (singletons), whose number is
bounded by $n(k,1)$.
Therefore, using Lemma~\ref{lem:n(l,t,h)}, we get
\[
n(k,1) = n(k,1,0) \leq 2^{(k-1)(k-1+1)^2\, 2^{(k-1+1)}}
	\leq 2^{2^{k+3\log k}}.
\]
The last inequality holds because
$(k-1)k^2 \leq k^3 \leq 2^{3\log k}$.
\end{proof}

Lemma~\ref{lem:n(l,t,h)} is proved by induction in $\ell$.
However, to establish a relation between $n(\ell,t,h)$ and $n(\ell-1,t,h)$,
we use $f(\ell-1,t,h)$ as an intermediate step.
We need the following observation.

\begin{observation}\label{obs:f_vs_n}
Let $I$ be an interval at level $\ell\geq 2$, and let $J_1, \dotsc, J_m$ denote
all its children. Then we have
\begin{equation}\label{eq:f_vs_n}
f(\ell,t,h) \leq (t-h+1)\,n(\ell,t+1,h).
\end{equation}
\end{observation}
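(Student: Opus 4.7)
The plan is to construct a map $\phi$ sending each $t$-tuple $T \in L(I)$ with $P \subseteq T$ to a $(t+1)$-tuple $\phi(T) \in L(C)$ containing $P$, in such a way that every image has at most $t-h+1$ preimages.

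The first key step is the disjointness claim $L(I) \cap L(C) = \emptyset$. If $T \in L(C)$, then $T \in R(C)$ and for any $p \in T$, Observation~\ref{obs:C_to_I_list} applied to $T = (T \setminus \{p\}) \cup \{p\} \in R(C)$ yields $T \setminus \{p\} \in R(I)$; hence $T$ is not minimal in $R(I)$ and cannot lie in $L(I)$.

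The second step is the extension claim: every $T \in L(I)$ of size $t$ admits some $(t+1)$-tuple $T \cup \{p\} \in L(C)$ with $p \notin T$. Observation~\ref{obs:C_to_I_list} supplies a $p \in U$ with $T \cup \{p\} \in R(C)$; disjointness forces $p \notin T$, for otherwise $T \in R(C)$ and picking a minimal $S \in L(C)$ with $S \subseteq T$ gives $S \in R(I)$ via $R(C) \subseteq R(I)$, so $S = T$ by minimality of $T$ in $R(I)$, putting $T \in L(C)$ against disjointness. Thus $|T \cup \{p\}| = t+1$. The minimality of $T \cup \{p\}$ in $R(C)$ follows by the same kind of argument: any proper $S \subsetneq T \cup \{p\}$ in $R(C)$ produces $S \setminus \{p\} \in R(I)$ (via Observation~\ref{obs:C_to_I_list} when $p \in S$, otherwise via $R(C) \subseteq R(I)$) strictly contained in $T$, with the corner case $S = T$ eliminated by the same disjointness device; this contradicts minimality of $T$ in $L(I)$.

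Finally, set $\phi(T) := T \cup \{p\}$. A fixed $(t+1)$-tuple $S \in L(C)$ with $P \subseteq S$ can equal $\phi(T)$ only when $T = S \setminus \{q\}$ for some $q \in S$ with $P \subseteq S \setminus \{q\}$, i.e.\ $q \in S \setminus P$. Since $|S \setminus P| = t - h + 1$, each image of $\phi$ has at most $t-h+1$ preimages, so
\[
|\{T \in L(I) : |T| = t,\ P \subseteq T\}| \;\leq\; (t-h+1)\, n(\ell, t+1, h),
\]
and maximizing the left-hand side over all $h$-element sets $P$ yields the stated bound. I expect the disjointness claim to be the only genuinely subtle point; without it, the case $p \in T$ would be possible and would inject an additional additive $n(\ell, t, h)$ term into the bound, weakening the dichotomy considerably.
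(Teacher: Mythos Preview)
Your proof is correct and follows the same approach as the paper: show that each $t$-tuple in $L(I)$ is contained in some $(t+1)$-tuple of $L(C)$, then count how many $t$-subsets of a fixed $(t+1)$-tuple can contain the prescribed $h$-set $P$. Your treatment is in fact more careful than the paper's in explicitly ruling out $p\in T$ via the disjointness claim (the paper simply asserts that the witness $B\in R(C)$ is a $(t+1)$-tuple); one small imprecision is that the inclusion $R(C)\subseteq R(I)$ you invoke only holds for tuples of size at most $k-\ell$, but in every place you use it the tuple is contained in $T$ and hence has size at most $t\le k-\ell$, so the argument goes through.
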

\begin{proof}
Observation~\ref{obs:C_to_I_list} already shows that a $t$-tuple
$A$ belongs to $R(I)$ if and only if there is a $(t+1)$-tuple
$B \in R(J_1, \dotsc, J_m)$ such that $A\subset B$.
If $B$ is not an inclusion-wise minimal member of $R(J_1, \dotsc, J_m)$,
then there is $B'\subsetneq B$ in $R(J_1, \dotsc, J_m)$ and a point
$p$ such that $(B'\setminus \{p\}) \subsetneq A$ also belongs to $R(I)$. This
implies that $A$ does not belong to $L(I)$. Therefore we know that each
$t$-tuple $A\in L(I)$ is a subset of some $B \in L(J_1, \dotsc, J_m)$.

On the other hand, it is easy to see that each $B \in L(J_1, \dotsc, J_m)$
contains precisely $t+1$ distinct $t$-tuples, each created by removing one page
from $B$. If we want all of them to contain
a predefined set $P$ of $h$ pages, then surely $P$ has to be contained in $B$,
and there can be precisely $t+1-h$ such $t$-tuples, each of them
equal to $B\setminus \{p\}$ for some $p\in B\setminus P$.
Therefore we have $f(\ell,t,h)\leq (t+1-h)\,n(\ell,t+1,h)$.
\end{proof}

Therefore, our main task is to bound
$n(\ell,t,h)$ with respect to the values of $f(\ell-1,t',h')$.

\paragraph{Two simple examples and the basic idea.}
Let $I$ be an interval at level $\ell$ and $J_1, \dotsc, J_m$ its children.
Each $t$-tuple in the joint request list $L(J_1, \dotsc, J_m)$
needs to be composed of smaller tuples from $L(J_1), \dotsc, L(J_m)$
(see Observation~\ref{obs:list_prod})
whose numbers are bounded by function $f(\ell-1,t',h')$.
However, to make use of the values $f(\ell-1,t',h')$, we need to consider the
ways in which a $t$-tuple could be created.
To illustrate our basic approach, we consider the following simple
situation.

Let us assume that $L(J_1), \dotsc, L(J_m)$ contain only singletons.
Recall that a pair $\{p,q\}$ can belong to $L(J_1, \dotsc, J_m)$
only if each list $L(J_i)$ contains either $p$ or $q$ as a singleton,
see Observation~\ref{obs:list_prod}.
Therefore, one of them must be contained in
at least half of the lists and we call it a ``popular'' page.
Each list has size at most $f(\ell-1,1,0)$, and therefore
there can be at most $2 f(\ell-1, 1,0)$ popular pages contained in the lists
$L(J_1), \dotsc, L(J_m)$.
A fixed popular page $p$, can be extended to a pair $\{p,q\}$ by at most
$f(\ell-1,1,0)$ choices for $q$, because $q$ has to lie in all the lists
not containing $p$.
This implies that there can be at most $2f(\ell-1,1,0)\cdot f(\ell-1,1,0)$
pairs in $L(J_1, \dotsc, J_m)$.

Here is a bit more complicated example.
We estimate, how many $t$-tuples $A$ can be contained in
$L(J_1, \dotsc, J_m)$, such that the following holds:
$A = A_1 \cup A_2$, where $A_1$ is a $t_1$-tuple from
lists $L(J_1), \dotsc, L(J_{m-1})$ and $A_2$ is a $t_2$-tuple from $L(J_m)$.
We denote $h := |A_1\cap A_2|$.
Then, the number of $t$-tuples in $L(J_1, \dotsc, J_m)$ created
from $L(J_1), \dotsc, L(J_m)$ in this way
cannot be larger than $f(\ell-1, t_1, 0) \cdot f(\ell-1, t_2, h)$,
since the choice of $A_1$ already determines the $h$ pages in $A_2$.

However, the $t$-tuples in $L(J_1, \dotsc, J_m)$
can be created in many complicated ways.
To make our analysis possible, we classify each tuple according to its
{\em specification}, which describes the way it was generated
from $L(J_1), \dotsc, L(J_m)$.
The main idea of our proof is to
bound the number of $t$-tuples which correspond to a given
specification. Then, knowing the number of specifications and having the bounds
for each $L(J_i)$ from the induction, we can get an upper
bound for the overall number of $t$-tuples.

\paragraph{Specifications of $t$-tuples.}
For a fixed $t$-tuple $A\in L(J_1, \dotsc, J_m)$,
we construct its specification $S$ as follows.
First, we sort the pages in $A$ lexicographically, denoting them
$p_1, \dotsc, p_t$.
Let $A_1$ be the subset of $A$ contained in the largest number of lists
$L(J_1), \dotsc, L(J_m)$ as a tuple.
Then, by pigeon-hole principle, $A_1$ lies in at least
$1/2^t$ fraction of the lists, since there are only $2^t$ subsets of $A$
and each list has to contain at least one of them.
We define $T_1$ as the set of indices of the pages in $A_1$, i.e.,
$T_1 = \{i\;|\: p_i\in A_1\}$. Set $T_1$ becomes the first part of the
specification $S$.
Having already defined $A_1, \dotsc, A_j$, we choose $A_{j+1}$ from the lists
which do not contain any subset of $A_1\cup \dotsb \cup A_j$.
We choose $A_{j+1}$ to be the tuple which is contained in the largest number of
them and set $T_{j+1} = \{i\;|\: p_i \in A_{j+1}\}$.

This way we get two important properties.
First, $A_{j+1}$ contains at least one page which is not present in
$A_1\cup\dotsb\cup A_j$.
Second, at least $1/2^t$ fraction of lists which do not contain any subset of
$A_1\cup\dotsb\cup A_j$, contain $A_{j+1}$ as a tuple.
We stop after $n_S$ steps, as soon as $A_1\cup\dotsb\cup A_{n_S} = A$
and each of the
lists contains some subset of $A_1\cup\dotsb\cup A_{n_S}$.
We define the specification $S$ as an ordered tuple $S=(T_1, \dotsc, T_{n_S})$.
Note that $n_S\leq t$, since each $A_{j+1}$ contains a page not yet present in
$A_1\cup \dotsb \cup A_j$.

Let us denote $\cS_t$ the set of all possible specifications of $t$ tuples.
The size of $\cS_t$ can be bounded easily: there are at most $t$ sets contained
in each specification, each of them can be chosen from at most $2^t$
subsets of $\{1, \dotsc, t\}$,
implying that $|\cS_t| \leq (2^t)^t = 2^{t^2}$.
Let us denote $n(\ell, S, h)$ the number of $t$-tuples in $L(J_1, \dotsc, J_m)$
having $h$ pages predefined which correspond to the specification $S$.
Since each $t$-tuple $A$ has a (unique) specification, we have the following
important relation:
\begin{equation}\label{eq:n(l,t,h)}
n(\ell, t, h) \leq \sum_{S\in \cS_t} n(\ell, S, h).
\end{equation}

\paragraph{Number of $t$-tuples per specification.}
First, let us consider a simpler case when $h = 0$.
Let $S = (T_1, \dotsc, T_{n_S})$ be a fixed specification of $t$-tuples.
For each $j=1,\dotsc, n_S$, we define
$t_j := |T_j|$, and $d_j := |T_j \setminus \bigcup_{i=1}^{j-1} T_i|$ the number
of new indices of $T_j$ not yet contained in the previous sets $T_i$.
There can be at most $2^t f(\ell-1, t_1, 0)$ choices for a
$t_1$-tuple $A_1$ corresponding to the indices of $T_1$.
This can be shown by a volume argument: each such tuple has to be
contained in at least $1/2^t$ fraction of $L(J_1),\dotsc, L(J_m)$, and each
list can contain at most $f(\ell-1, t_1, 0)$ $t_1$-tuples.
By choosing $A_1$, some of the request lists are already covered, i.e., the
ones that contain $A_1$ or its subset as a tuple.
According to the specification, $A_2$ has to be contained in at least $1/2^t$
fraction of the lists which are not yet covered by $A_1$.
However, the choice of $A_1$ might have already determined some pages of
$A_2$ unless $t_2=d_2$.
Therefore, the number of choices for $A_2$ can be at
most $2^t f(\ell-1, t_2, t_2-d_2)$.
In total, we get the following bound:
\begin{equation}
n(\ell, S, 0) \leq \prod_{i=1}^{n_S} 2^t f(\ell-1, t_i, t_i-d_i).
\end{equation}

For the inductive step, we also need to consider the case when some
pages of the tuple are fixed in advance.
Let $P$ be a predefined set of $h$ pages.
We want to bound the maximum possible number of $t$-tuples
containing $P$ in $L(J_1, \dotsc, J_m)$.
However, two different $t$-tuples containing $P$ can have the pages of $P$
placed at different indices, which affects the number of pre-fixed indices in
each $T_i$.
Therefore, we first choose the set $C$ of $h$ indices which
will be occupied by the pages of $P$. There are $\binom{t}{h}$ choices for $C$,
and, by definition of the specification,
the pages of $P$ have to be placed at those indices in alphabetical order.
For a fixed $C$, we denote $\bar{d}_i$ the number of not predetermined indices
contained in $T_i$, i.e.
$\bar{d}_i := |T_i \setminus (C \cup T_1 \cup \dotsb \cup T_{j-1})|$.
We get the following inequality:
\begin{equation}\label{eq:n(l,S,h)}
n(\ell, S, h)
	\leq \sum_{C\in \binom{[t]}{h}}
		\prod_{i=1}^{n_S} 2^t f(\ell-1, t_i, t_i - \bar{d}_i)
	\leq \sum_{C\in \binom{[t]}{h}} 2^{t^2}
		\prod_{i=1}^{n_S} f(\ell-1, t_i, t_i - \bar{d}_i).
\end{equation}
Now we are ready to prove Lemma~\ref{lem:n(l,t,h)}.

\paragraph{Proof of Lemma~\ref{lem:n(l,t,h)}.}
Combining equations~\eqref{eq:n(l,t,h)} and~\eqref{eq:n(l,S,h)},
we can bound $n(\ell, t, h)$ with
respect to $f(\ell-1, t', h')$. We get
\begin{equation}\label{eq:n(l,t,h)_final}
n(\ell, t, h) \leq \sum_{S\in\cS_t} \sum_{C\in \binom{[t]}{h}}
	2^{t^2} \prod_{i=1}^{n_S} f(\ell-1, t_i, t_i-\bar{d}_i).
\end{equation}
Now, we proceed by induction.
We bound this quantity using Observation~\ref{obs:f_vs_n} with respect to values
of $n(\ell-1,t',h')$, which we know from the inductive hypothesis.

In the base case $\ell=2$, we use \eqref{eq:n(l,t,h)_final}
to bound the value of $n(2,t,h)$.
Here, we have $f(1,t',h') = t'-h'+1$ because of the following reason.
If a leaf interval $I$ has a $t'$-tuple in its request list
$L(I)$, there must be a set $Q$ of $t'+1$ distinct pages requested during the
time interval of $I$. Then, $L(I)$ contains precisely $t'+1$ distinct
$t'$-tuple depending on which page becomes a label of $I$. Those $t$-tuples are
$Q\setminus \{q\}$ for $q\in Q$.
However, if we count only $t'$-tuples which contain some predefined
set $P\subseteq Q$ of $h'$ pages, there can be only $t'+1-h'$ of them,
since $Q\setminus \{q\}$ contains
$P$ if and only if $q$ does not belong to $P$.
Thereby, for any choice of $S$ and $C$, we have
$f(1,t_i, t_i-\bar{d}_i) = t_i - (t_i-\bar{d}_i) + 1 \leq t-h+1$,
since $t-h = \sum_{i=1}^{n_S}\bar{d}_i$. If $t=h$, we clearly have
$n(\ell,t,h) = 1$.
Otherwise, use \eqref{eq:n(l,t,h)_final} with the following estimations applied:
the size of $\cS_t$ is at most $2^{t^2}$,
the number of choices for $C$ is at most $t^h$,
and $n_S$ is at most $t$. We get
\[ n(2, t, h)\leq 2^{t^2}\, t^h\, 2^{t^2}\, (t-h+1)^t
	\leq 2^{4t^2} \leq 2^{t^2\, 2^{1+t-h}},
\]
where the first inequality holds since both $t^h$ and $(t-h+1)^t$ can be bounded
by $2^{t^2}$. The last inequality follows, since $2^{1+t-h}\geq 4$,
and this concludes the proof of the base case.

Now we proceed to the case $\ell > 2$.
For a fixed $S$ and $C$, we bound the product inside
equation~\eqref{eq:n(l,t,h)_final}, and our goal is to get a bound independent
on the particular choice of $S$ and $C$. Using Observation~\ref{obs:f_vs_n},
we get
\[ \prod_{i=1}^{n_S} f(\ell-1, t_i, t_i-\bar{d}_i)
	\leq \prod_{i=1}^{n_S} (t_i+1)\,n(\ell-1, t_i+1, t_i-\bar{d}_i).
\]
Now, we take the logarithm of this inequality and apply the
inductive hypothesis. We get
\[ \log \prod_{i=1}^{n_S} f(\ell-1, t_i, t_i-\bar{d}_i)
	\leq \sum_{i=1}^{n_S} \log(t_i+1)
		+ \sum_{i=1}^{n_S} (\ell-2)(\ell-2+t_i+1)^2\,
		2^{(\ell-2) + (t_i+1) - (t_i-\bar{d}_i)}.
\]
This is at most
$t\log(t+1) + (\ell-2)(\ell-1+t)^2\,2^{\ell-1} \sum_{i=1}^{n_S} 2^{\bar{d}_i}$,
where the last sum cannot be larger than $2^{t-h}$, since we have
$\sum_{i=1}^{n_S} \bar{d}_i = t-h$ and $\sum 2^{x_i} \leq 2^{\sum x_i}$.
Now, we can get rid of all $t_i$ and $\bar{d}_i$ which are dependent on the
choice of $S$ and $C$. We can write the preceding inequality as follows:
\begin{equation}\label{eq:log_prod_f}
\log \prod_{i=1}^{n_S} f(\ell-1, t_i, t_i-\bar{d}_i)
	\leq t\log(t+1) + (\ell-2)(\ell-1+t)^2\,2^{\ell-1 + t - h}.
\end{equation}
To finish the proof, we plug the bound from \eqref{eq:log_prod_f} to
\eqref{eq:n(l,t,h)_final}:
\[ n(\ell, t, h) \leq |\cS_t|\cdot t^h \cdot 2^{t^2} \cdot
	2^{t\log(t+1) + (\ell-2)(\ell-1+t)^2\,2^{\ell-1 + t - h}},
\]
where the size of $\cS_t$ is at most $2^{t^2}$. Taking the
logarithm of this inequality, we get
\[ \log n(\ell,t,h) \leq t^2 + h \log t + t^2 + t\log(t+1)
		+ (\ell-2)(\ell-1+t)^2\,2^{\ell-1 + t - h}
	\leq (\ell-1)(\ell-1+t)^2\,2^{\ell-1 + t - h},
\]
what already implies the statement of the lemma.
To see why the last inequality holds,
note that $(\ell-1+t)^2$ is at least $t^2$, and
$2^{\ell-1 + t - h}$ is always greater than $2^2$, since we have
$\ell \geq 3$.
Therefore, the sum of the four smaller-order terms can be bounded by
$(\ell-1+t)^2\,2^{\ell-1 + t - h}$.
\hfill\qedsymbol

\subsection{Case of $d$ different weights}
\label{sec:d-dichotomy}
Now we prove a dichotomy theorem for the case of $d$ different weight classes.
For each $i= 1, \dotsc, d$, let  $k_i$ denote the number of server of weight
$w_i$, so that $k = k_1 + \dotsb + k_d$.
In the rest of this section we assume that $k_1, \dotsc, k_d$ are fixed
and our estimations of $f(\ell, t, h)$ and $n(\ell, t, h)$ will implicitly
depend on their values.
We consider a service pattern $\cI$ consisting of $d$
levels $\cI_1, \dotsc, \cI_d$, where an interval at level $i$ has a label
consisting of at most $k_i$ different pages describing the position of the $k_i$
servers of weight $w_i$.
The cost of $\cI$ is computed as $\sum_{i=1}^d k_iw_i (|\cI_i|-1)$, and the
assignment function $\alpha$ labels each interval in $\cI_i$ with a set
$C_i$ of at most $k_i$ points.

The definition of the request sets and the request lists stays similar to the
general setting.
We say that a tuple of pages $S$ belongs to the request set $R(I)$ of an
interval $I$, if there is a feasible assignment $\alpha$ which labels the
ancestors of $I$ only using pages of $S$, and again we define $L(I)$ to be the
set of inclusion-wise minimal tuples from $R(I)$.
Observation~\ref{obs:list_prod} holds as it is stated in
Section~\ref{sec:intervals}. For the Observation~\ref{obs:C_to_I_list}, we have
the following variant.
\begin{observation}\label{obs:Cd_to_I_list}
Let $J_1, \dotsc, J_m$ denote all the children of some $\ell$th level
interval $I$.
A tuple $Q$ belongs to $R(I)$ if and only if there is a set $C$ of $k_\ell$
pages such that  $Q \cup C$ belongs to $R(J_1, \dotsc, J_m)$.
\end{observation}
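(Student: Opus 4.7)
My plan is to mirror the argument used for the single-weight Observation~\ref{obs:C_to_I_list}, generalizing the ``extra label'' from a single page $p$ to a set $C$ of up to $k_\ell$ pages, which is now the label of interval $I$ at level $\ell$ in the $d$-weight model. The proof has a natural forward and backward direction, both of which should reduce to unpacking the definitions of $R(I)$ and $R(J_1, \dotsc, J_m)$ and invoking the already-extended Observation~\ref{obs:list_prod}.

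For the forward direction, I would start from a feasible labeling $\alpha$ certifying $Q \in R(I)$, so that $\alpha(A(I))$ uses exactly the pages of $Q$, distributed across the ancestors subject to the per-level capacities $k_m$ for $m > \ell$. Setting $C := \alpha(I)$ gives a set with $|C| \leq k_\ell$, and the same $\alpha$ remains feasible with respect to each $\sigma_{J_i}$, while $\alpha(\{I\} \cup A(I)) = Q \cup C$. Observation~\ref{obs:list_prod} then immediately yields $Q \cup C \in R(J_1, \dotsc, J_m)$.

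For the backward direction, given a labeling $\alpha$ witnessing $Q \cup C \in R(J_1, \dotsc, J_m)$ with $|C| \leq k_\ell$, I would construct $\alpha'$ by setting $\alpha'(I) := C$ and redistributing the pages of $Q$ among the ancestors in $A(I)$ subject to the per-level capacities $k_m$. The hierarchical structure of $\cI$ ensures that every ancestor of $I$ covers the whole time interval $I$, so at any time $t \in I$ the set of pages available on $\{I\} \cup A(I)$ is still exactly $Q \cup C$ and is unchanged by the rearrangement; hence $\alpha'$ remains feasible on $\sigma_I$ and witnesses $Q \in R(I)$. The main subtlety I anticipate, which was absent in the single-weight case, is justifying that $Q$ can actually be packed into the per-level capacities of $A(I)$ after removing $C$ from the picture. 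I expect this to follow from the fact that $Q$ being a valid tuple for $I$ already forces $|Q| \leq \sum_{m>\ell} k_m$, and a concrete apportionment can be read off from the original $\alpha$ by moving the pages of $C$ from the ancestors down to $I$ while shifting the rest of the labels correspondingly, preserving the capacity constraints at every level.
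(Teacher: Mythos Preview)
The paper does not actually prove this observation; it is stated as the direct $d$-weight analogue of Observation~\ref{obs:C_to_I_list} and left to the reader. Your plan to mirror that earlier two-line argument is the intended one, and your forward direction is fine: take the witness $\alpha$ for $Q\in R(I)$, set $C:=\alpha(I)$ (padded to size $k_\ell$ if necessary), and the same $\alpha$ certifies $Q\cup C\in R(J_1,\dotsc,J_m)$.

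There is, however, a small but genuine gap in your backward direction. Setting $\alpha'(I):=C$ and placing $Q$ on the ancestors requires $|Q|\le\sum_{m>\ell}k_m$, and your first justification for this (``$Q$ being a valid tuple for $I$ already forces $|Q|\le\sum_{m>\ell}k_m$'') is circular, since $Q\in R(I)$ is exactly what you are proving. More concretely, the construction $\alpha'(I):=C$ can fail: take $k_\ell=2$, a single ancestor of capacity $1$, a witness $\alpha$ using pages $U'=\{p_1,p_2,p_3\}\subseteq Q$ with $\alpha(I)=\{p_1,p_2\}$ and $\alpha(A(I))=\{p_3\}$, and $C=\{p_4,p_5\}$ disjoint from $U'$. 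Then $Q\cup C\supseteq U'$ certifies $Q\cup C\in R(J_1,\dotsc,J_m)$, but putting $C$ at $I$ forces the lone ancestor to cover all of $\{p_1,p_2,p_3\}$. Your ``swap'' idea at the end is the right repair, but it should \emph{replace} the construction rather than justify it: let $U'$ be the set of pages $\alpha$ actually uses on $\{I\}\cup A(I)$, so $U'\subseteq Q\cup C$ and $|U'|\le k_\ell+\sum_{m>\ell}k_m$. Since $U'\setminus Q\subseteq C$ has at most $k_\ell$ elements, put $U'\setminus Q$ on $I$, fill the remaining slots of $I$ from $U'\cap Q$, and spread the leftover of $U'\cap Q$ (at most $|U'|-k_\ell\le\sum_{m>\ell}k_m$ pages, all in $Q$) over $A(I)$. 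This $\alpha'$ is feasible and has $\alpha'(A(I))\subseteq Q$, witnessing $Q\in R(I)$.
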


The statement of the theorem for this case is a bit more complicated due to the
following phenomenon.
Let $I$ be a top-level interval such that the joint request list of its children
contains precisely one singleton $\{p\}$.
Then any $k_d$-tuple can be feasibly assigned to
$I$, whenever it contains $p$. This way there is potentially infinite number of
feasible labels for $I$, but the labels are not yet
arbitrary and they all have to contain $p$
what makes them easy to identify.
Therefore we state the theorem in the following way.

\begin{theorem}[Dichotomy theorem for $d$ different weights]
\label{thm:dichotomy-d}
Let $\cI$ be a service pattern, $I\in \cI$ be an arbitrary interval at
level $d$ and let us denote $\cQ = \cQ_1 \cup \dotsb \cup \cQ_{k_d}$ a set
of labels for $I$ satisfying the following:
\begin{itemize}
\item Each $\cQ_t$ contains feasible labels $T$ for $I$, such that $|T|=t$.
\item Whenever $T$ is in $\cQ_t$, no $T'$ containing $T$ as a subset
	belongs to any $\cQ_j$ for $j>t$.
\end{itemize}
Then, either $Q_1 = U$, or $|Q_t| \leq n(d,t)$ for each $t$, where
$n(d,t) \leq 2^{4dk^2 t \prod_{j=1}^{d-1}(k_j+1)}$.
\end{theorem}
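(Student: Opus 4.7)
The plan is to parallel the proof of Theorem~\ref{thm:dichotomy}, generalizing the specification machinery to handle labels of up to $k_\ell$ pages at level $\ell$. First I would extend the parameters from Section~\ref{sec:dichotomy_k}: let $f(\ell,t,h)$ (resp.\ $n(\ell,t,h)$) be the maximum number of $t$-tuples in the request list $L(I)$ (resp.\ joint request list $L(J_1, \dotsc, J_m)$ of the children of $I$) for an interval $I$ at level $\ell$, such that every such tuple contains a predefined set of $h$ pages. The dichotomy for $I$ at level $d$ reduces to bounding $n(d,t)$: by Observation~\ref{obs:Cd_to_I_list}, inclusion-wise minimal feasible labels of size $t$ for $I$ correspond to inclusion-wise minimal $t$-tuples in $L(J_1,\dotsc,J_m)$, so $|\cQ_t| \leq n(d,t)$, unless the empty set already lies in the joint request list, in which case every page of $U$ labels $I$ feasibly and $\cQ_1 = U$.

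The main new ingredient is an analogue of Observation~\ref{obs:f_vs_n}: by Observation~\ref{obs:Cd_to_I_list}, every $t$-tuple $A \in L(I)$ at level $\ell$ has the form $A = B \setminus C$ for some $(t+k_\ell)$-tuple $B \in L(J_1,\dotsc,J_m)$ and some $C \subseteq B$ of size $k_\ell$ (the label of $I$), which gives
\[
f(\ell, t, h) \;\leq\; \binom{t+k_\ell}{k_\ell}\, n(\ell, t+k_\ell, h).
\]
With this in hand, I would re-run the specification argument of Section~\ref{sec:dichotomy_k} essentially unchanged: sort the pages of a candidate $t$-tuple lexicographically, greedily pick subsets $A_1, A_2, \dotsc$ each contained in at least a $1/2^t$ fraction of the still-uncovered child request lists, record the specification $S = (T_1, \dotsc, T_{n_S})$ and, for a choice $C$ of the $h$ prefixed indices, obtain
\[
n(\ell, t, h) \;\leq\; \sum_{S \in \cS_t}\sum_{C \in \binom{[t]}{h}} 2^{t^2}\prod_{i=1}^{n_S} f(\ell-1, t_i, t_i - \bar{d}_i),
\]
with $|\cS_t| \leq 2^{t^2}$, exactly as in equation~\eqref{eq:n(l,t,h)_final}.

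The final step is induction on $\ell$. For the base case $\ell=1$, the request list of a level-1 interval with requested page set $P$ consists of all $(|P|-k_1)$-subsets of $P$, so $f(1,t,h) \leq \binom{t+k_1-h}{k_1} \leq (t+k_1)^{k_1}$, and the claim for $n(1,\cdot,\cdot)$ follows directly. For the inductive step, I would combine the two displayed bounds, apply the inductive hypothesis $\log n(\ell-1,t',h') \leq 4(\ell-1)k^2 t' \prod_{j=1}^{\ell-2}(k_j+1)$, and use $\sum_i \bar{d}_i = t - h$ together with $n_S \leq t$ to collect terms after taking logarithms. The main obstacle will be the careful bookkeeping in this last step: one has to verify that the factors $\binom{t+k_\ell}{k_\ell}$, $\binom{t}{h}$, $(t_i+k_{\ell-1})^{k_{\ell-1}}$, and $|\cS_t|$ contribute only a $\mathrm{poly}(k)\cdot t$ term inside the exponent, so that each level multiplies the exponent by exactly a factor of $(k_{\ell-1}+1)$ (up to the $k^2 t$ polynomial overhead) and the cumulative bound after $d$ levels is $2^{4dk^2 t \prod_{j=1}^{d-1}(k_j+1)}$, matching the statement of the theorem.
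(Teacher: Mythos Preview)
Your overall strategy matches the paper's: reduce the dichotomy to bounding $n(d,t)$, establish the analogue of Observation~\ref{obs:f_vs_n} (this is the paper's Observation~\ref{obs:fd_vs_n}, with the slightly sharper coefficient $\binom{t+k_\ell-h}{k_\ell}$), and re-run the specification argument to obtain the recursion~\eqref{eq:n(ld,t,h)-final}. That part is fine.

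The gap is in your inductive hypothesis. You propose to prove $\log n(\ell,t,h) \leq 4\ell k^2\, t\,\prod_{j=1}^{\ell-1}(k_j+1)$, with no dependence on $h$. When you feed this through your $f$-to-$n$ bound into the product $\prod_i f(\ell-1,t_i,t_i-\bar d_i)$, each factor contributes an exponent proportional to $t_i+k_{\ell-1}$, and summing over $i=1,\dotsc,n_S$ (with $n_S$ as large as $t$) yields an exponent of order $t^2$ rather than $t(k_{\ell-1}+1)$. The identity $\sum_i \bar d_i = t-h$ you plan to use is then irrelevant, because your hypothesis never sees the third argument. The induction does not close.

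The fix the paper uses (Lemma~\ref{lem:n(ld,t,h)}) is to strengthen the hypothesis to
\[
\log n(\ell,t,h)\;\leq\;4\ell k^2\,(t-h)\,\prod_{j=1}^{\ell-1}(k_j+1).
\]
Now each factor $n(\ell-1,\,t_i+k_{\ell-1},\,t_i-\bar d_i)$ contributes an exponent proportional to $(t_i+k_{\ell-1})-(t_i-\bar d_i)=\bar d_i+k_{\ell-1}$; moreover $f(\ell-1,t_i,t_i-\bar d_i)=1$ whenever $\bar d_i=0$, so at most $t-h$ indices contribute and their total is $\sum_i(\bar d_i+k_{\ell-1})\leq (t-h)(k_{\ell-1}+1)$. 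This is precisely what makes each level multiply the exponent by a single factor of $(k_{\ell-1}+1)$, giving the claimed $\prod_{j=1}^{d-1}(k_j+1)$ after $d$ levels.
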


If $\cQ_1 = U$, any label $C \in \binom{U}{k_d}$ can be feasibly
assigned to $I$. The crucial part of the proof is the following lemma
that bounds the size of the request list in each level.
It is proved similarly as Lemma~\ref{lem:n(l,t,h)},
although the recursion and the resulting
bounds have a different form.

\begin{lemma}\label{lem:n(ld,t,h)}
Let $I$ be an interval at level $\ell\geq 2$ and $J_1, \dotsc, J_m$
be its children.
The number $n(\ell, t, h)$ of distinct
$t$-tuples in their joint list $L(J_1, \dotsc, J_m)$ having $h$ pages
fixed satisfies:
\[ n(\ell, t, h)
	\leq 2^{\ell\cdot 4k^2 (t-h)\prod_{i=1}^{\ell-1}(k_i + 1)}. \]
\end{lemma}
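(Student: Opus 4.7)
The plan is to prove Lemma~\ref{lem:n(ld,t,h)} by induction on $\ell$, closely mirroring the proof of Lemma~\ref{lem:n(l,t,h)} but carefully adapted to the $d$-weight setting in which intervals at level $\ell$ are labeled by $k_\ell$-tuples of pages rather than single pages. The first ingredient I would establish is a variant of Observation~\ref{obs:f_vs_n}. By Observation~\ref{obs:Cd_to_I_list}, each minimal $t$-tuple $A\in L(I)$ arises as $A\subseteq B$ for some minimal $B\in L(J_1,\dots,J_m)$ with $|B|\leq t+k_\ell$; partitioning by $j=|B|-t\in\{0,\dots,k_\ell\}$ gives
$$f(\ell,t,h) \leq \sum_{j=0}^{k_\ell}\binom{t+j}{j}\,n(\ell,t+j,h).$$
Thus each conversion from $n$ to $f$ at the same level costs only a polynomial factor of degree $k_\ell$ in $t$.

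The second ingredient is the specification machinery of Section~\ref{sec:dichotomy_k}: greedily decompose any $t$-tuple $A\in L(J_1,\dots,J_m)$ into subtuples $A_1,\dots,A_{n_S}$, whose supports $T_1,\dots,T_{n_S}\subseteq[t]$ form a specification $S\in\cS_t$, with $|\cS_t|\leq 2^{t^2}$ and $n_S\leq t$. Together with $\binom{t}{h}$ placements of the pre-fixed pages, this yields the master inequality
$$n(\ell,t,h) \leq \binom{t}{h}\sum_{S\in\cS_t}\prod_{i=1}^{n_S} 2^t\,f(\ell-1,t_i,t_i-\bar d_i),$$
with $\sum_i\bar d_i=t-h$. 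Plugging in the first ingredient, applying the inductive hypothesis to each $n(\ell-1,t_i+j,t_i-\bar d_i)$ for $0\leq j\leq k_{\ell-1}$, and taking logarithms, the dominant contribution becomes $4(\ell-1)k^2\bigl(n_Sk_{\ell-1}+(t-h)\bigr)\prod_{m=1}^{\ell-2}(k_m+1)$, plus polynomial overheads.

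The base case $\ell=2$ is handled directly from $f(1,t',h')\leq\binom{t'+k_1-h'}{k_1}$, which fits comfortably inside $2^{8k^2(t-h)(k_1+1)}$; the degenerate case $t=h$ trivially gives $n(\ell,t,h)=1$. The main obstacle will be the careful bookkeeping in the inductive step: showing that $n_Sk_{\ell-1}+(t-h)$ is dominated by $(t-h)(k_{\ell-1}+1)$ once the slack between $4(\ell-1)k^2$ and $4\ell k^2$ is used to absorb the residual $k(\ell h-t)$ gap along with the polynomial overheads --- the $2^{t^2}$ specifications, the $\binom{t}{h}$ placements, the $2^{tn_S}$ volume factors, and the $(k_{\ell-1}+1)\binom{t+k_{\ell-1}}{k_{\ell-1}}$ factors from each $f$-to-$n$ conversion. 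A qualitative difference from Lemma~\ref{lem:n(l,t,h)} is the linear dependence on $(t-h)$ in the exponent rather than on $2^{t-h}$: here the $k_\ell$-page labels at each level mean the inductive hypothesis's dependence on the size parameter is already linear, so the induction propagates a clean multiplicative factor $(k_{\ell-1}+1)$ per level without the geometric blow-up of the singleton-labeled case.
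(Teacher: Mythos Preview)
Your overall plan matches the paper's proof closely: the same specification machinery, the same master inequality, the same base case $f(1,t',h')=\binom{t'+k_1-h'}{k_1}$, and the same induction on $\ell$. Your $f$-to-$n$ conversion via a sum $\sum_{j=0}^{k_\ell}\binom{t+j}{j}\,n(\ell,t+j,h)$ is a minor variant of the paper's single-term bound $\binom{t+k_\ell-h}{k_\ell}\,n(\ell,t+k_\ell,h)$; either form leads to the same dominant exponent $\bar d_i+k_{\ell-1}$ when plugged into the inductive hypothesis.

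However, there is a genuine gap in your inductive bookkeeping. You correctly compute that, after applying the inductive hypothesis, the logarithm of the product contributes a term proportional to $\sum_{i=1}^{n_S}(\bar d_i + k_{\ell-1}) = (t-h) + n_S\,k_{\ell-1}$, and you state that the main obstacle is bounding this by $(t-h)(k_{\ell-1}+1)$. But $n_S$ can be as large as $t$, not $t-h$, so this inequality is simply false in general. Your proposed fix --- absorbing the shortfall into the slack between $4(\ell-1)k^2$ and $4\ell k^2$ --- does not work arithmetically: the deficit is of order $(\ell-1)\,h\,k_{\ell-1}\prod_{m=1}^{\ell-2}(k_m+1)$, whereas the available slack is $(t-h)\prod_{m=1}^{\ell-1}(k_m+1)$, and for, say, $h=t-1$ the former dwarfs the latter.

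The missing observation, which the paper uses explicitly, is that $f(\ell-1,t_i,t_i-\bar d_i)=1$ whenever $\bar d_i=0$: if every index of a tuple is already predetermined, there is exactly one such tuple. Hence only those factors with $\bar d_i>0$ contribute to the product, and since $\sum_i \bar d_i = t-h$, there are at most $t-h$ such indices. Restricting the sum to this set $A=\{i:\bar d_i>0\}$ gives $\sum_{i\in A}(\bar d_i+k_{\ell-1})\leq (t-h)+|A|\,k_{\ell-1}\leq (t-h)(k_{\ell-1}+1)$, which is exactly what you need. With this in hand, the remaining polynomial overheads ($2^{t^2}$, $\binom{t}{h}$, the $2^{t n_S}$ volume factors, and the binomial coefficients from the $f$-to-$n$ conversion) are indeed absorbed by the increment from $\ell-1$ to $\ell$ in the exponent, as you anticipated.
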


First, let us show that this lemma already implies
Theorem~\ref{thm:dichotomy-d}.

\begin{proof}[Proof of Theorem~\ref{thm:dichotomy-d}]
Let us denote $J_1, \dotsc, J_m$ the children of $I$.
If their joint request list contains only empty set, then there is a feasible
assignment $\alpha$ which gives no label to $I$. In this case, $I$ can be
feasibly labeled by an arbitrary singleton and we have $\cQ_1 = U$.
Otherwise, a feasible label for $I$ can only be some tuple which is contained
in $L(J_1, \dotsc, J_m)$, and there can be at most $n(k,t)$ $t$-tuples in
$L(J_1, \dotsc, J_m)$.
Lemma~\ref{lem:n(ld,t,h)} implies, that the number $n(k,t)$ fulfills the bound
stated by the theorem:
$ n(k, t) = n(k,t,0) \leq 2^{d\cdot 4k^2 t \prod_{i=1}^{d-1}(k_i + 1)}. $
\end{proof}

To prove Lemma~\ref{lem:n(ld,t,h)}, we proceed by induction in $\ell$.
First, we establish the relation between $f(\ell,t,h)$ and $n(\ell,t,h)$.
\begin{observation}\label{obs:fd_vs_n}
Let $I$ be an interval at level $\ell\geq 2$, and $J_1, \dotsc, J_m$ its
children. Then we have
\begin{equation}
f(\ell,t,h) \leq \binom{t+k_\ell-h}{k_\ell}\,n(\ell,t+1,h).
\end{equation}
\end{observation}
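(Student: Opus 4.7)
The plan is to follow the template of Observation~\ref{obs:f_vs_n}, adapting it to the fact that in the $d$-weight setting each interval's label is a $k_\ell$-set rather than a single page; the specialization $k_\ell=1$ should recover the earlier statement with $\binom{t+1-h}{1}=t-h+1$ matching the factor there, and the second factor should be understood as counting joint-list tuples of size $t+k_\ell$.

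First, I would use Observation~\ref{obs:Cd_to_I_list} to associate to each $t$-tuple $A\in L(I)$ (containing the predefined $h$-set $P$) a $k_\ell$-set $C$ with $A\cup C\in R(J_1,\dotsc,J_m)$, and normalize it so that $A\cap C=\emptyset$: any $c\in A\cap C$ may be swapped for a fresh page outside $A\cup C$, which only enlarges $A\cup C$ and thus keeps it in $R$ by upward closure under supersets. Then $B:=A\cup C$ is a tuple of size $t+k_\ell$ containing $P$.

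The main technical step is to show that $C$ may further be chosen so that $B$ is inclusion-wise minimal, i.e., $B\in L(J_1,\dotsc,J_m)$. Suppose not and pick $B'\subsetneq B$ in $R$. If $C\subseteq B'$, then $B'=C\cup A'$ with $A'\subsetneq A$, and Observation~\ref{obs:Cd_to_I_list} directly gives $A'\in R(I)$, contradicting $A\in L(I)$. If instead $A\not\subseteq B'$ and $C\not\subseteq B'$, I would pick $p\in A\setminus B'$ and $q\in C\setminus B'$ and form $C'':=(C\cup\{p\})\setminus\{q\}$; then $(A\setminus\{p\})\cup C''=B\setminus\{q\}\supseteq B'$ lies in $R$, so $A\setminus\{p\}\in R(I)$, again contradicting minimality of $A$. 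The remaining case $A\subseteq B'$ is handled by choosing $C$ to minimize $|A\cup C|$ over all valid $k_\ell$-sets disjoint from $A$; at such a minimizer only the previous two cases can arise.

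Finally, a fixed minimal $B$ of size $t+k_\ell$ containing $P$ produces at most $\binom{t+k_\ell-h}{k_\ell}$ pre-images, since $A=B\setminus C$ for some $k_\ell$-set $C\subseteq B\setminus P$. Multiplying by the bound on the number of such $B$'s gives the claimed inequality. The hardest step is the minimality argument: unlike the $k_\ell=1$ case, removing one element from $C$ does not automatically expose a smaller $A'\in R(I)$, so one must coordinate the swap between $A$ and $C$ carefully using both upward closure of $R(J_1,\dotsc,J_m)$ and the fixed size constraint $|C|=k_\ell$.
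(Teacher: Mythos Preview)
Your overall plan coincides with the paper's: associate to each minimal $A\in L(I)$ containing $P$ a minimal $B\in L(J_1,\dotsc,J_m)$ of size $t+k_\ell$ with $A\subset B$, then count preimages. The disjointness normalization and the final counting step are fine.

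The gap is in your handling of the third case. Your proposed device, ``choose $C$ to minimize $|A\cup C|$ over all valid $k_\ell$-sets disjoint from $A$'', is vacuous: if $C$ is disjoint from $A$ then $|A\cup C|=t+k_\ell$ for \emph{every} such $C$, so there is nothing to minimize and the case $A\subseteq B'$ is not ruled out. Fortunately your own swap argument from case~2 already handles it: whenever $C\not\subseteq B'$ (which holds in both case~2 and case~3, since $B'\subsetneq B$), pick $q\in C\setminus B'$ and \emph{any} $p\in A$ (not necessarily in $A\setminus B'$), set $C''=(C\cup\{p\})\setminus\{q\}$, and conclude $(A\setminus\{p\})\cup C''=B\setminus\{q\}\supseteq B'\in R$. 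Thus the two non-trivial cases collapse into one, and together with case~1 ($C\subseteq B'$) you are done.

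For comparison, the paper's proof is shorter because it avoids the case split altogether: given any $B'\subsetneq B$, it directly picks a $k_\ell$-set $C$ containing $B'\setminus A$ (which has size at most $k_\ell$), padding if necessary with an element of $A$, so that $B'\setminus C\subsetneq A$ and $(B'\setminus C)\cup C\supseteq B'$ lies in $R$. Note also that the second factor should be $n(\ell,t+k_\ell,h)$ (as the paper itself uses downstream), not $n(\ell,t+1,h)$; this is a typo in the statement and your reading of it as ``tuples of size $t+k_\ell$'' is the correct one.
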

\begin{proof}
Observation~\ref{obs:Cd_to_I_list} already shows that a $t$-tuple
$A$ belongs to $R(I)$ if and only if there is a $(t+k_\ell)$-tuple
$B \in R(J_1, \dotsc, J_m)$ such that $A\subset B$.
If $B$ is not an inclusion-wise minimal member of $R(J_1, \dotsc, J_m)$,
then there is some $B'\subsetneq B$ in $R(J_1, \dotsc, J_m)$ and a set $C$ of
$k_\ell$ pages such that $(B'\setminus C) \subsetneq A$ also belongs to $R(I)$.
This implies that $A$ does not belong to $L(I)$. Therefore we know that each
$t$-tuple $A\in L(I)$ is a subset of some $B \in L(J_1, \dotsc, J_m)$.

On the other hand, each $B \in L(J_1, \dotsc, J_m)$
contains precisely $\binom{t+k_\ell}{k_\ell}$ distinct $t$-tuples,
each created by removing $k_\ell$ page from $B$.
If we want all of them to contain a predefined set $P$ of $h$ pages,
then $P$ has to be contained in $B$,
and there can be precisely $\binom{t+k_\ell-h}{k_\ell}$ such $t$-tuples,
each of them equal to $B\setminus C$ for some
$C\subset B\setminus P$ of size $k_\ell$.
Therefore we have $f(\ell,t,h)\leq \binom{t+k_\ell-h}{k_\ell}\,n(\ell,t+1,h)$.
\end{proof}

To bound $n(\ell,t,h)$ with respect to numbers $f(\ell-1,t,h)$,
we use specifications as they are defined the previous subsection,
so that we have
\begin{equation}\label{eq:n(ld,t,h)}
n(\ell,t,h) \leq \sum_{S\in \cS_t} n(\ell, S, h),
\end{equation}
and also
\begin{equation}\label{eq:n(ld,S,h)}
n(\ell,S,h) \leq \sum_{C\in \binom{[t]}{i}}
	2^{t^2} \prod_{i=1}^{n_S} f(\ell-1, t_i, t_i - \bar{d}_i).
\end{equation}
Therefore we can proceed directly to the proof.

\begin{proof}[Proof of Lemma~\ref{lem:n(ld,t,h)}]
Combining equations \eqref{eq:n(ld,t,h)} and \eqref{eq:n(ld,S,h)}, we can bound
$n(\ell, t, h)$ with respect to the values of $f(\ell-1,t',h')$. We get
\begin{equation}\label{eq:n(ld,t,h)-final}
n(\ell,t,h) \leq \sum_{S\in \cS_t} \sum_{C\in \binom{[t]}{i}}
	2^{t^2} \prod_{i=1}^{n_S} f(\ell-1, t_i, t_i - \bar{d}_i).
\end{equation}
In the rest of the proof, we use induction to show
that this inequality together with Observation~\ref{obs:Cd_to_I_list}
implies the desired bound.

In the base case, we have $\ell=2$, and we can use \eqref{eq:n(ld,t,h)-final}
directly with the values of $f(1,t',h')$.
We know that $f(1,t',h') = \binom{t'-h'+k_1}{k_1}$,
for the following reason.
To obtain a tuple of length $t'$ in the request list of a first level interval
$I$, there must be $t'+k_1$ distinct points requested in the input
sequence during this interval.
As $h'$ of them are pre-specified to be contained in each tuple, we are left
with $t'-h'+k_1$ points from which the label for $I$ is chosen, and this label
contains precisely $k_1$ points. Therefore, $L(I)$ can contain at most
$\binom{t'-h'+k_1}{k_1}$ distinct $t'$-tuples which contain the $h'$
predetermined points.
Therefore, for each $i$ in the product in \eqref{eq:n(ld,t,h)-final}, we have
$f(1, t_i, t_i - \bar{d}_i) \leq \binom{t_i - (t_i-\bar{d}_i) + k_1}{k_1}
\leq \binom{\bar{d}_i + k_1}{k_1} \leq (t-h+k_1)^{k_1}$.
However, $\sum_{i=1}^{n_S} \bar{d}_i = t-h$ and $f(1,t_i,t_i-\bar{d}_i)=1$
whenever $\bar{d}_i=0$. Therefore at most $t-h$ factors in that product can be
greater than $1$, and we have
$\prod_{i=1}^{n_S} f(\ell-1, t_i, t_i - \bar{d}_i) \leq (t-h+k_1)^{k_1(t-h)}$.
Recall that there are at most $|\cS_t|\leq 2^{t^2}$ choices for $S$ and at most
$t^h$ choices for $C$. Using the trivial estimate $h\leq t\leq k$, we get
\[ n(2,t,h) \leq 2^{t^2}\, t^h\, 2^{t^2}\, (t-h+k_1)^{k_1(t-h)}
	\leq 2^{t^2 +h\log t + t^2 + k_1(t-h) \log(t-h+k_1)}
	\leq 2^{2\cdot 4k^2 (t-h) (k_1+1)},
\]
where the right-hand side corresponds to the bound claimed by the lemma.

Let us now focus on the inductive step with $\ell>2$.
For a fixed $S$ and $C$, we bound the product inside
equation~\eqref{eq:n(ld,t,h)-final} by an expression independent on $S$ and $C$.
First, let us apply Observation~\ref{obs:fd_vs_n} to each term of the product.
Since
$\binom{t_i+k_{\ell-1} - (t_i-\bar{d}_i)}{k_{\ell-1}}
	= \binom{k_{\ell-1} + \bar{d}_i}{k_{\ell-1}}$,
we have
\[ f(\ell-1,t_i,t_i-\bar{d}_i)
	\leq \binom{k_{\ell-1}+\bar{d}_i}{k_{\ell-1}}
		n(\ell-1, t_i+k_{\ell-1}, t_i-\bar{d}_i).
\]
Let us now consider the logarithm of this inequality. Bounding
$\binom{k_{\ell-1}+\bar{d}_i}{k_{\ell-1}}$ by
$(k_{\ell-1}+t-h)^{k_{\ell-1}}$ and
applying the inductive hypothesis, we get
\[ \log f(\ell-1,t_i,t_i-\bar{d}_i) \leq k_{\ell-1}\log(k_{\ell-1}+t-h)
	+ (\ell-1)\cdot 4k^2\big(t_i+k_{\ell-1} -(t_i-\bar{d}_i)\big)
		\prod_{j=1}^{\ell-2} (k_j+1).
\]
Note that the only term in this bound, which is dependent on $i$, is
$(t_i+k_{\ell-1} -(t_i-\bar{d}_i)) = (\bar{d}_i+k_{\ell-1})$.
Now we would like to bound
$\log \prod_{i=1}^{n_S} f(\ell-1, t_i, t_i - \bar{d}_i)$, which equals to a sum
of $\log f(\ell-1,t_i,t_i-\bar{d}_i)$ over $i=1, \dotsc, n_S$.
First, note that $f(\ell,t_i,t_i-\bar{d}_i)=1$ whenever $\bar{d}_i=0$.
Let us denote $A$ the set of indices $i$ such that $\bar{d}_i>0$.
Then, by the inequality above,
\[ \log \prod_{i\in A} f(\ell-1, t_i, t_i - \bar{d}_i)
	\leq |A|\cdot k_{\ell-1}\log(t+k_{\ell-1}-h)
		+ (\ell-1)\cdot 4k^2\prod_{j=1}^{\ell-2} (k_j+1)
		\cdot \sum_{i\in A}(\bar{d}_i + k_{\ell-1}).
\]
We know that $\sum_{i=1}^{n_S} \bar{d}_i = t-h$, what implies that the size of
$A$ is also at most $t-h$. Therefore we can bound the last sum as follows:
$\sum_{i\in A}(\bar{d}_i + k_{\ell-1}) \leq (t-h)+|A|\cdot k_{\ell-1}
	\leq (t-h)(k_{\ell-1}+1)$.
Since $f(\ell,t_i,t_i-\bar{d}_i)=1$ for each $i\notin A$, we get the following
bound:
\[ \prod_{i=1}^{n_S} f(\ell-1, t_i, t_i - \bar{d}_i)
	= \prod_{i\in A} f(\ell-1, t_i, t_i - \bar{d}_i)
	\leq (t+k_{\ell-1}-h)^{(t-h)k_{\ell-1}}\,
	2^{(\ell-1) \cdot 4k^2(t-h)\prod_{j=1}^{\ell-1} (k_j+1)}.
\]
Now we are almost done.
The preceding bound is universal and independent of $S$ and $C$,
and therefore we can plug it in the equation \eqref{eq:n(ld,t,h)-final} in the
following way:
\[ n(\ell,t,h)\leq |\cS_t|\cdot t^h \cdot 2^{t^2}\cdot
	(t+k_{\ell-1}-h)^{(t-h)k_{\ell-1}}\cdot
	2^{(\ell-1)\cdot 4k^2(t-h)
		\prod_{i=1}^{\ell-1}(k_i+1)}, \]
where the size of $\cS_t$ is at most $2^{t^2}$.
It is enough to show that the last term is much larger and dominates the
smaller terms. To show this, we take the logarithm of this
inequality and we get
\[ \log n(\ell,t,h) \leq t^2 + h\log t + t^2 +
	(t-h) k_{\ell-1}\log(t+k_{\ell-1}-h)
		+ (\ell-1)\cdot 4k^2 (t-h) \prod_{i=1}^{\ell-1}(k_i+1), \]
where each of the four smaller-order terms is smaller than
$k^2(t-h)\prod_{i=1}^{\ell-1}(k_i+1)$.
Therefore, we get the final inequality which concludes the proof:
$n(\ell,t,h)\leq 2^{\ell\cdot 4k^2(t-h) \prod_{i=1}^{\ell-1}(k_i+1)}.$
\end{proof}

\section{Upper bounds for generalized WFA}
\label{sec:ubs}
We now show that the generalized Work
Function Algorithm with $\lambda=0.5$ achieves the bounds claimed in Theorems~\ref{thm:ub} and \ref{thm:ubd}.
Even though Theorem \ref{thm:ub} follows as a special case of Theorem \ref{thm:ubd} (up to lower order terms in the exponent) we describe these results
separately in Sections \ref{sec:ub} and \ref{sec:ubd} as the proof of Theorem \ref{thm:ub} is simpler and highlights the main ideas directly.

\subsection{Upper bound for arbitrary weights}
\label{sec:ub}
We prove Theorem~\ref{thm:ub} by induction on the number of servers. Let $r_k$ denote the bound on the competitive ratio with $k$ servers. We will show that $r_k = O((n_k)^3 r_{k-1})$, where $n_k$ is the constant from the Dichotomy Theorem~\ref{thm:dichotomy}. As $r_1=1$ trivially, this will imply the result.
We begin with some definitions and the basic properties of WFA.

\paragraph{Definitions and Notation.}
Recall the definition of Work functions and the generalized WFA from Section \ref{sec:prelim}. A basic property of work functions is that for any two configurations $C$ and $C'$ and any time $t$, the work function values $\WF_t(C)$ and $\WF_t(C')$ can differ by at most $d(C,C')$. Moreover, at any time $t$, the generalized WFA will always be in some configuration that contains the current request $\sigma_t$. For the rest of this section we focus on $\WFA_{0.5}$ and denote it by $\ALG$.

Let $M_t$ denote the minimum work function value at
time $t$ over all configurations, and
let $\WF_t(p) = \min\{\WF_t(C)\;|\, C(k) = p\}$ denote the minimum work function value over all configurations with the heaviest server $s_k$ at $p$.
We denote $W_i = \sum_{j=1}^i w_i$.
We will assume (by rounding if necessary) that the weights $w_i$ are well-separated and satisfy $W_{i-1} \leq w_i/(20 i n_i)$ for each $i=2,\ldots,k$.
This can increase the competitive ratio by at most $O(k^k \Pi_{i=1}^k n_i) \ll
O(n_k^3)$.  This will ensure that for any two configurations $C$ and $C'$ that
both have $s_k$ at $p$, their work function values differ by at most $W_{k-1}$
which is negligibly small compared to $w_k$.

For a point $p \in U$, we define the {\em ``static'' work function} $\SWF_t(p)$ as the optimal cost to serve requests $\sigma_1, \dotsc, \sigma_t$ while keeping server $s_k$ fixed at point $p$. Note that this function will in general take very different values than the (usual) work function.
However, the local changes of $\SWF(p)$ will be useful in our inductive argument. Intuitively, if $\ALG$ keeps $s_k$ at $p$ during some interval $[t_1,t_2]$ and $\SWF(p)$ rises by $x$ during this period, then the cost incurred by $\ALG$ should be at most $r_{k-1} x$.

For any quantity $X$, we use $\Delta_{t_1}^{t_2} X:=X_{t_2} - X_{t_1}$ to denote the change in $X$ during the  time interval $[t_1,t_2]$. If the time interval is clear from the context, we use $\Delta X$.

We partition the request sequence into {\em phases}, where a phase ends whenever $\ALG$ moves its heaviest server $s_k^{\ALG}$.

\paragraph{Basic Properties of WFA.}
We describe some simple facts that follow from basic properties of $\WFA_{\lambda}$ and work functions. The proofs of the following lemmas are in Appendix~\ref{app_sec:ub}.

\begin{lemma}
\label{cl:phase_start_end}
Consider a phase that starts at time $t_1$ and end at $t_2$, and let $p$ be the location of $s_k^{\ALG}$ during this phase. Then,
\begin{enumerate}[(i)]
\item $M_{t_1}\leq \WF_{t_1}(p) \leq M_{t_1} + W_{k-1} $, and
\item $ w_k/2 - 2W_{k-1} \leq \Delta \WF(p) \leq \Delta M + w_k/2 + 2 W_{k-1}.$
\end{enumerate}
\end{lemma}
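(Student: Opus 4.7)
I apply the $\WFA_{0.5}$ inequality at the three times $t_1$, $t_2-1$, and $t_2$, combined with the standard Lipschitzness of work functions ($|\WF_t(C)-\WF_t(C')|\leq d(C,C')$), the one-step bound $\WF_t(C)\leq \WF_{t-1}(C)+2w_1$, and the monotonicity $M_{t_2}\geq M_{t_1}$.

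For (i), the lower bound $M_{t_1}\leq \WF_{t_1}(p)$ is immediate, since $\WF_{t_1}(p)$ is a minimum over the subset of configurations whose heaviest server is at $p$. For the upper bound I use that $C_{t_1}(k)=p$ (so $\WF_{t_1}(p)\leq \WF_{t_1}(C_{t_1})$) and invoke the $\WFA_{0.5}$ rule at $t_1$ against a minimiser $C^*$ of $M_{t_1}$:
\[
\WF_{t_1}(C_{t_1}) + \tfrac{1}{2}\,d(C_{t_1},C_{t_1-1}) \;\leq\; M_{t_1} + \tfrac{1}{2}\,d(C^*,C_{t_1-1}).
\]
The triangle inequality $d(C^*,C_{t_1-1})\leq d(C^*,C_{t_1})+d(C_{t_1},C_{t_1-1})$ then collapses this to $\WF_{t_1}(C_{t_1})\leq M_{t_1}+\tfrac{1}{2}\,d(C^*,C_{t_1})$. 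The key step is to exhibit a minimiser $C^*$ of $M_{t_1}$ whose heaviest server is at $p$, so that $d(C^*,C_{t_1})\leq W_{k-1}$; this uses the phase-boundary condition (that $\ALG$ has just placed $s_k$ at $p$) together with the weight-separation assumption $W_{k-1}\leq w_k/(20kn_k)$.

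For the lower bound in (ii), $\ALG$ moves $s_k$ at $t_2$, so applying the $\WFA_{0.5}$ rule to compare $C_{t_2}$ against a configuration $C^p$ achieving $\WF_{t_2}(p)$ gives
\[
\WF_{t_2}(C_{t_2}) + \tfrac{1}{2}\,d(C_{t_2},C_{t_2-1}) \;\leq\; \WF_{t_2}(p) + \tfrac{1}{2}\,d(C^p,C_{t_2-1}).
\]
Since $s_k$ actually moves, $d(C_{t_2},C_{t_2-1})\geq w_k$, while $d(C^p,C_{t_2-1})\leq W_{k-1}$ because both configurations have $s_k$ at $p$. Combined with $\WF_{t_2}(C_{t_2})\geq M_{t_2}$ this yields $\WF_{t_2}(p)\geq M_{t_2}+\tfrac{1}{2}w_k-\tfrac{1}{2}W_{k-1}$; subtracting the upper bound from (i) and using $\Delta M\geq 0$ gives $\Delta \WF(p)\geq \tfrac{1}{2}w_k-\tfrac{3}{2}W_{k-1}\geq \tfrac{1}{2}w_k-2W_{k-1}$. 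For the upper bound, $C_{t_2-1}(k)=p$, so $\WF_{t_2}(p)\leq \WF_{t_2}(C_{t_2-1})$. The same triangle-inequality step applied to the $\WFA_{0.5}$ rule at $t_2-1$ gives $\WF_{t_2-1}(C_{t_2-1})\leq M_{t_2-1}+\tfrac{1}{2}(w_k+W_{k-1})$, and combining with the one-step increase $\WF_{t_2}(C_{t_2-1})\leq \WF_{t_2-1}(C_{t_2-1})+2w_1$, with $M_{t_2}\geq M_{t_2-1}$, and with $2w_1\leq \tfrac{1}{2}W_{k-1}$ from weight-separation, yields $\WF_{t_2}(p)\leq M_{t_2}+\tfrac{1}{2}w_k+W_{k-1}$. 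Using $\WF_{t_1}(p)\geq M_{t_1}$ finishes the bound $\Delta \WF(p)\leq \Delta M+\tfrac{1}{2}w_k+2W_{k-1}$.

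\textbf{Main obstacle.} The delicate step is the tight $W_{k-1}$ slack in part (i); a bare application of the $\WFA_{0.5}$ inequality only yields $\tfrac{1}{2}(w_k+W_{k-1})$, so one must exploit the phase-start condition and weight separation to argue that a minimiser $C^*$ of $M_{t_1}$ can be chosen whose heaviest server is already at $p$. Every other step -- the $\WFA_{0.5}$ inequality, the triangle inequality, work-function Lipschitzness, and the weight-separation assumption -- is a standard ingredient of WFA-style analyses.
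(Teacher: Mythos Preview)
Your argument for part~(ii) is essentially correct and matches the paper's. The genuine gap is in part~(i), at precisely the spot you flag as the main obstacle.

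You propose to exhibit a minimiser $C^*$ of $M_{t_1}$ with $C^*(k)=p$, so that $d(C^*,C_{t_1})\leq W_{k-1}$. But this is circular: the existence of such a minimiser is the assertion $\WF_{t_1}(p)=M_{t_1}$, which is \emph{stronger} than what you are trying to prove, and neither the phase-start condition nor weight separation delivers it. Nothing in the setup rules out that every global minimiser of $\WF_{t_1}$ has its heaviest server at some $q\neq p$.

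The actual fix is simpler than you expect, and in fact you already use the relevant idea in your lower bound for~(ii). Do \emph{not} apply the triangle inequality at $t_1$; that step cancels $d(C_{t_1},C_{t_1-1})$ and throws away the crucial information about it. Since the phase \emph{starts} at $t_1$, the algorithm has just moved $s_k$, i.e.\ $C_{t_1-1}(k)\neq p$, so $d(C_{t_1},C_{t_1-1})\geq w_k$. Keeping this term in the $\WFA_{0.5}$ inequality and bounding $d(C^*,C_{t_1-1})\leq w_k+W_{k-1}$ for an \emph{arbitrary} minimiser $C^*$ gives
\[
\WF_{t_1}(C_{t_1})+\tfrac12 w_k \;\leq\; M_{t_1}+\tfrac12(w_k+W_{k-1}),
\]
hence $\WF_{t_1}(p)\leq \WF_{t_1}(C_{t_1})\leq M_{t_1}+\tfrac12 W_{k-1}$. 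This is exactly how the paper proceeds.

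One minor remark on your upper bound in~(ii): the one-step increase $\WF_{t_2}(C_{t_2-1})\leq \WF_{t_2-1}(C_{t_2-1})+2w_1$ is for a fixed configuration, and your subsequent claim $2w_1\leq \tfrac12 W_{k-1}$ fails when $k=2$ (then $W_{k-1}=w_1$). The paper instead bounds $\WF_t(p)=\min_{C(k)=p}\WF_t(C)$ directly, whose one-step increase is only $w_1$; then $w_1\leq W_{k-1}$ suffices for all $k\geq 2$.
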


The next lemma shows that $\WF(p)$ and $\SWF(p)$ increase by similar amount while $s_k^{\ALG}$ remains at point $p$.

\begin{lemma}
\label{lem:swf_wf_phase}
For a phase where $s_k^{\ALG}$ is at point $p$, we have that $
|\Delta \WF(p) -  \Delta \SWF(p) | \leq  W_{k-1}$.
\end{lemma}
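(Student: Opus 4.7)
The key is to bound each side in terms of a common auxiliary quantity. For any configuration $C$ with $C(k)=p$, define $\phi(C)$ as the minimum cost of serving $\sigma_{t_1+1}, \dotsc, \sigma_{t_2}$ starting from $C$ while keeping $s_k$ fixed at $p$ throughout. Because $\phi(C)$ depends only on the positions of the $k-1$ lighter servers, it is essentially a $(k-1)$-server work function on the projected request subsequence (with $p$-requests removed), and hence satisfies the Lipschitz property $|\phi(C) - \phi(C')| \le W_{k-1}$ whenever $C(k) = C'(k) = p$. The whole proof is an exercise in playing this Lipschitz bound against the dynamic-programming decomposition of $\SWF$.

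For the direction $\Delta \WF(p) \le \Delta \SWF(p) + W_{k-1}$, let $C^{\star}$ with $C^{\star}(k)=p$ achieve $\WF_{t_1}(p) = \WF_{t_1}(C^{\star})$. Extending an optimal prefix ending at $C^{\star}$ by the static $s_k=p$ continuation from $C^{\star}$ produces a feasible $\WF_{t_2}(p)$-solution, so $\Delta \WF(p) \le \phi(C^{\star})$. On the $\SWF$ side, the standard DP decomposition
\[
\SWF_{t_2}(p) \;=\; \min_{C:\,C(k)=p}\bigl[\SWF_{t_1}(p,C) + \phi(C)\bigr]
\]
at its minimizer $C^{\dagger}$, combined with $\SWF_{t_1}(p,C^{\dagger}) \ge \SWF_{t_1}(p)$, gives $\Delta \SWF(p) \ge \phi(C^{\dagger})$. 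Chaining with the Lipschitz bound $\phi(C^{\star}) \le \phi(C^{\dagger}) + W_{k-1}$ closes the argument.

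For the reverse direction $\Delta \SWF(p) \le \Delta \WF(p) + W_{k-1}$, I would follow the symmetric template. The same DP decomposition together with the Lipschitz bound $\SWF_{t_1}(p, C^{\dagger}) \le \SWF_{t_1}(p) + W_{k-1}$ yields $\Delta \SWF(p) \le \phi(C^{\dagger}) + W_{k-1}$, so it suffices to lower-bound $\Delta \WF(p)$ by $\phi(C^{\dagger})$. The main obstacle is here: an optimal $\WF_{t_2}(p)$-solution is not forced to keep $s_k$ at $p$ during $[t_1,t_2]$ and could in principle lower its cost by detouring $s_k$ away from $p$ and back. The phase hypothesis is precisely what rules this out: since $s_k^{\ALG}$ remains at $p$ throughout $[t_1,t_2]$, one can invoke the $\WFA_{0.5}$ tradeoff that underlies Lemma~\ref{cl:phase_start_end} to argue that any such detour would incur an $s_k$ round-trip cost of at least $2w_k$ against savings bounded only by a $(k-1)$-server quantity, and amortizing the detour costs against $\phi$'s Lipschitz constant yields the required lower bound on $\Delta \WF(p)$.
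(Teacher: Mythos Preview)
Your first direction is clean and matches the paper. The second direction follows the same template as the paper, but the final step---ruling out that the optimal $\WF_{t_2}(p)$-schedule moves $s_k$ during the phase---is where the real content lies, and your sketch of it has two issues.

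First, a routing slip: aiming for $\Delta\WF(p)\ge\phi(C^{\dagger})$ is the wrong intermediate target. There is no reason the actual intermediate configuration $C$ of the $\WF_{t_2}(p)$-solution satisfies $\phi(C)\ge\phi(C^{\dagger})$, since $C^{\dagger}$ minimizes $\SWF_{t_1}(p,\cdot)+\phi(\cdot)$, not $\phi(\cdot)$ alone. The clean route (which the paper takes) is to apply the DP inequality $\SWF_{t_2}(p)\le\SWF_{t_1}(p,C)+\phi(C)$ at this \emph{actual} $C$ together with the Lipschitz bound on $\SWF_{t_1}(p,\cdot)$ to get $\phi(C)\ge\Delta\SWF(p)-W_{k-1}$; then $\Delta\WF(p)\ge\phi(C)$ (once $s_k$ is known to stay at $p$) gives the stated bound directly.

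Second, and more substantively, the ``round-trip cost of at least $2w_k$'' framing does not cover the relevant case. The optimal $\WF_{t_2}(p)$-schedule need not have $s_k$ at $p$ at time $t_1$ at all, so there may be no round trip---only a single arrival of $s_k$ at $p$ at some $t\in(t_1,t_2]$, costing only $w_k$. The paper handles this by taking $t$ to be the last such arrival and comparing two bounds on $\WF_{t_2}(p)$: on one hand $\WF_{t_2}(p)\ge M_t+w_k+\Delta_t^{t_2}\SWF(p)-W_{k-1}$, and on the other $\WF_{t_2}(p)\le\WF_t(p)+\Delta_t^{t_2}\SWF(p)+W_{k-1}$. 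Together these force $\WF_t(p)\ge M_t+w_k-2W_{k-1}$, contradicting the phase invariant $\WF_t(p)<M_t+w_k/2+W_{k-1}$ that holds for every $t$ in the phase precisely because $\ALG$ did \emph{not} move $s_k$ away from $p$ then (this is the $\WFA_{0.5}$ tradeoff you allude to, made precise inside the proof of Lemma~\ref{cl:phase_start_end}). Your sketch names the right ingredient but does not carry out this contradiction, and ``amortizing the detour costs against $\phi$'s Lipschitz constant'' is not what actually closes the argument.
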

We remark that the preceding lemma does not hold for some $q$ where $s_k^{\ALG}$ is not present.
The following lemma is more general and holds for any point $p \in U$ and for any time interval, even if there are many phases in between.
\begin{lemma}\label{lem:WFvsSWF} For any $t'>t$, $p \in U$,
\[ \WF_{t'}(p) \geq \min\{\WF_t(p) + \Delta_{t}^{t'}\SWF(p)
	- W_{k-1}, M_t + w_k\}.
\]
\end{lemma}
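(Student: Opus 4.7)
The plan is to take an arbitrary optimal trajectory $\beta$ achieving $\WF_{t'}(p)$, which is a schedule of server moves serving $\sigma_1,\ldots,\sigma_{t'}$ and ending with $s_k$ at $p$, and split into two cases based on the behavior of $s_k$ in $\beta$ during $[t, t']$. Either (Case~1) $s_k$ stays at $p$ throughout $[t, t']$, or (Case~2) $s_k$ is located at some $q\neq p$ at some moment in $[t,t']$, in which case the fact that $s_k$ must be back at $p$ at time $t'$ forces at least one move of $s_k$ during $[t, t']$ costing $w_k$.

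In Case~2, the cost of $\beta$ restricted to $[0,t]$ is already at least $M_t$ (it is a valid trajectory serving $\sigma_1,\ldots,\sigma_t$), and the cost during $[t, t']$ is at least $w_k$ by the argument above, yielding $\WF_{t'}(p) \geq M_t + w_k$. In Case~1, write the cost of $\beta$ as (cost up to time $t$) $+$ (cost during $[t,t']$). Since $\beta$'s configuration $C_t$ at time $t$ has $s_k$ at $p$, the first part is at least $\WF_t(C_t) \geq \WF_t(p)$. For the second part, let $X$ denote its cost; then $\beta$ restricted to $[t, t']$ is a schedule serving $\sigma_{t+1},\ldots,\sigma_{t'}$ starting from $C_t$ while keeping $s_k$ fixed at $p$. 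Combining the optimal $\SWF_t(p)$-trajectory (which ends at some state $C_t^\star$ with $s_k$ at $p$) with a reconfiguration of the lighter servers from $C_t^\star$ to $C_t$ (costing at most $W_{k-1}$), and then appending the $[t,t']$-restriction of $\beta$, produces a feasible trajectory certifying $\SWF_{t'}(p) \leq \SWF_t(p) + W_{k-1} + X$. Rearranging gives $X \geq \Delta_t^{t'}\SWF(p) - W_{k-1}$, so Case~1 yields $\WF_{t'}(p) \geq \WF_t(p) + \Delta_t^{t'}\SWF(p) - W_{k-1}$. Taking the minimum over the two cases proves the lemma.

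The main obstacle is the bookkeeping in Case~1 where we need to relate the cost of $\beta$ on $[t,t']$ to the static work function $\SWF$, which is defined globally from time $0$ rather than from the intermediate state $C_t$. The concatenation trick above handles this at the price of a single additive $W_{k-1}$ term, corresponding to the one-shot cost of matching the lighter-server configuration of the $\SWF_t(p)$ optimum to $C_t$; everything else falls out of the two routine lower bounds $\WF_t(p) \leq \textrm{cost of}\ \beta|_{[0,t]}$ and $M_t \leq \textrm{cost of}\ \beta|_{[0,t]}$.
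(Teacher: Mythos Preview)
Your proof is correct and follows essentially the same approach as the paper's own argument: split according to whether the optimal trajectory realizing $\WF_{t'}(p)$ keeps $s_k$ at $p$ throughout $[t,t']$ or not, charge $M_t+w_k$ in the latter case, and in the former case compare the $[t,t']$ segment to the static work function via a concatenation that costs an extra $W_{k-1}$ to align the lighter servers. Your write-up is in fact more explicit than the paper's about why the $W_{k-1}$ term arises in Case~1.
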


\paragraph{Bounding the Performance.}
We are now ready to prove Theorem \ref{thm:ub}. The key lemma will be the following.

\begin{lemma}
\label{lem:main_gen}
Consider any sequence of $m = n_k +1$ consecutive phases. Then, $\Delta M \geq w_k/(8 k \cdot n_k) $ and the cost incurred by $\ALG$ is at most $4 n_k \cdot r_{k-1} \cdot w_k + r_{k-1} \cdot \Delta M $.
\end{lemma}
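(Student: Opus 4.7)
The statement splits into two parts: the cost upper bound and the lower bound on $\Delta M$. I tackle them in that order. The cost bound follows from the induction hypothesis and the basic work-function lemmas; the $\Delta M$ bound is the subtle half and is where Theorem~\ref{thm:dichotomy} is invoked in the manner sketched in Section~\ref{sec:overview}.

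\emph{Cost bound.} I decompose $\cost(\ALG)$ over the $m=n_k+1$ phases into (i)~heavy-server moves (at most $m\,w_k$, since $s_k^{\ALG}$ moves at most once per phase end) and (ii)~light-server cost within each phase. Since during phase $j$ the heavy server sits at $p_j$, the induction hypothesis (applied to the $k-1$ lighter servers against an offline strategy that also pins $s_k$ at $p_j$) bounds the light-server cost in phase $j$ by $r_{k-1}\cdot\Delta_{t_{j-1}}^{t_j}\SWF(p_j)$. Lemma~\ref{lem:swf_wf_phase} replaces $\Delta\SWF(p_j)$ by $\Delta\WF(p_j)+W_{k-1}$, and Lemma~\ref{cl:phase_start_end}(ii) bounds $\Delta\WF(p_j)\leq\Delta_{t_{j-1}}^{t_j}M+w_k/2+2W_{k-1}$. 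Summing over the $m$ phases, adding (i), and applying the weight separation $W_{k-1}\ll w_k/(kn_k)$ together with $r_{k-1}\geq 1$ and $m=n_k+1$ collapses everything into the advertised bound $r_{k-1}\Delta M+4 n_k r_{k-1} w_k$.

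\emph{Lower bound on $\Delta M$.} I argue by contradiction: suppose $\Delta M<w_k/(8kn_k)$. Following the blueprint of Section~\ref{sec:overview}, the plan is to exhibit $n_k+1>n_k$ distinct cheap feasible labels for the top-level interval of a common-refinement service pattern $\cI'=\cI^1\wedge\cdots\wedge\cI^m$ on $[t_0,t_m]$, where each $\cI^j$ is an optimum service pattern for serving $\sigma|_{[t_0,t_m]}$ with $s_k$ fixed at some well-chosen candidate point. Since $m>n_k$, Theorem~\ref{thm:dichotomy} then forces \emph{every} label to be feasible for $\cI'$, and the total cost of $\cI'$ is at most $w_k/4$. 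In particular the ALG-position $p_1$ during phase~1 becomes a feasible label, yielding a service pattern on $[t_0,t_m]$ with $s_k$ pinned at $p_1$ throughout and of cost at most $w_k/4$. Restricting to phase~1 gives $\Delta_{t_0}^{t_1}\SWF(p_1)\leq w_k/4$, which contradicts the lower bound $\Delta_{t_0}^{t_1}\SWF(p_1)\geq w_k/2-3W_{k-1}$ obtained by combining Lemmas~\ref{cl:phase_start_end}(ii) and~\ref{lem:swf_wf_phase} on phase~1.

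The main obstacle is the identification and certification of the $n_k+1$ cheap feasible labels under the smallness assumption on $\Delta M$. The candidates will be produced using the WFA criterion at each phase start (Lemma~\ref{cl:phase_start_end}(i), which places $\WF_{t_{j-1}}(\cdot)$ near $M_{t_{j-1}}$ for the locations chosen by $\WFA_{0.5}$), together with the trivial upper bound $\WF_{t_m}(p)\leq M_{t_m}+w_k$, fed through a careful case analysis inside Lemma~\ref{lem:WFvsSWF}. In that case analysis, either $\WF(p)$ tracks $\SWF(p)$ over a sub-interval of $[t_0,t_m]$---allowing one to invert the inequality to upper-bound $\Delta\SWF(p)$---or $\WF(p)$ saturates at $M_t+w_k$, which the smallness of $\Delta M$ rules out. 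Summing the resulting $\Delta\SWF$ bounds over the $n_k+1$ candidate labels gives a common-refinement cost at most $w_k/4$, producing the service pattern required above.
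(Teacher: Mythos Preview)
Your cost upper bound is correct and is exactly the paper's argument.

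The gap is in the $\Delta M$ lower bound. Your candidate labels are the positions $p_1,\ldots,p_{n_k+1}$ visited by $s_k^{\ALG}$, and you need each $\Delta_{t_0}^{t_m}\SWF(p_j)$ small so that the common refinement over $[t_0,t_m]$ has cost $\le w_k/4$. But this fails: during phase $j$ itself, Lemmas~\ref{cl:phase_start_end}(ii) and~\ref{lem:swf_wf_phase} give $\Delta_{t_{j-1}}^{t_j}\SWF(p_j)\ge w_k/2-3W_{k-1}$, so every $\cI^j$ has cost at least roughly $w_k/2$, and you cannot conclude the refinement is cheap. Nor is the saturation case of Lemma~\ref{lem:WFvsSWF} ruled out over $[t_j,t_m]$: you only have $\WF_{t_m}(p_j)\le M_{t_m}+w_k$, which is perfectly compatible with $\WF_{t_m}(p_j)\ge M_{t_j}+w_k$. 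If instead you apply Lemma~\ref{lem:WFvsSWF} on the sub-interval $[t_0,t_{j-1}]$ (where saturation \emph{is} excluded because $\WF_{t_{j-1}}(p_j)\le M_{t_{j-1}}+W_{k-1}\ll M_{t_0}+w_k$), you bound $\Delta_{t_0}^{t_{j-1}}\SWF(p_j)$ but then $\cI^j$ covers only $[t_0,t_{j-1}]$; the common overlap of these sub-intervals is just phase~1, for which you obtain only $n_k$ cheap labels ($p_2,\ldots,p_{n_k+1}$)---one short of triggering Theorem~\ref{thm:dichotomy}.

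The paper reverses the roles. It applies the dichotomy theorem \emph{per phase} (Lemma~\ref{lem:egalitarians_cost}): the set of ``lucky'' points with $\Delta\SWF<w_k/(4kn_k)$ in a given phase has size $\le n_k$, since otherwise the common refinement over that single phase would make \emph{every} label feasible, contradicting $\Delta\SWF(p_j)\ge w_k/2-3W_{k-1}$ at ALG's own location. Then, with $Q_i$ the points lucky in all of phases $1,\ldots,i$, one shows $p_i\in Q_{i-1}$ (else $\WF_{t_{i-1}}(p_i)$ would already exceed $M_{t_1}+w_k/(5kn_k)$ by Observation~\ref{obs:non-lucky}) yet $p_i\notin Q_i$; hence $|Q_i|\le|Q_{i-1}|-1$ and $Q_{n_k+1}=\emptyset$, forcing $\Delta M$ up. In short, the paper bounds the number of cheap points and shows ALG exhausts them; you attempt to show ALG's visited points are themselves cheap, which they are not.
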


Before proving Lemma \ref{lem:main_gen}, let us see why gives a competitive ratio $r_k = O(n_k^3) \cdot r_{k-1}$, and hence proves Theorem \ref{thm:ub}.

\paragraph*{Proof of Theorem~\ref{thm:ub}}
Let $\cost(\ALG)$ and $\cost(\OPT)$ denote the cost of the algorithm and the optimal cost respectively. We show that $\ALG$ with $k$ servers is strictly $r_k$-competitive, i.e. $\cost(\ALG) \leq r_k \cdot \cost(\OPT)$ for any request sequence, given that $\ALG$ and $\OPT$ start from the same initial configuration.

For $k=1$, $\ALG$ is obviously strictly 1-competitive. Assume inductively that $\ALG$ with $k-1$ servers is strictly $r_{k-1}$-competitive. We now bound $r_k$.

Let $m$ denote the total number of phases. We partition the sequence into $h = \lceil \frac{m}{n_k+1} \rceil $ groups where each group (except possibly the last one) consists of $n_{k}+1$ phases. Note that $\cost(\OPT) = M_T$, where $M_T$ is the minimum work function value at
the end of the request sequence. Thus for each group of phases we can use $\Delta M$ as an estimate of the optimal cost. 

{\em Competitive Ratio:} We first show that $\ALG$ is $r_k$-competitive, since this proof is simple and highlights the main idea. We then give a more careful analysis to show that in fact $\ALG$ is strictly $r_k$-competitive.  
 
By Lemma~\ref{lem:main_gen}, during $i$th group, $ i \leq h-1$, the ratio between the cost of $\ALG$ and $\Delta M$ is at most
\begin{equation}
\label{eq:ratio_group}
\frac{4 n_k \cdot w_k \cdot r_{k-1} + r_{k-1} \Delta M}{\Delta M} \leq
 \frac{4 n_k \cdot w_k \cdot r_{k-1}}{w_k / (8 k \cdot n_k)} + \frac{r_{k-1} \Delta M}{\Delta M} \leq 33 k \cdot n_k^2 \cdot r_{k-1}.
\end{equation}
Due to Lemma~\ref{lem:main_gen}, we have that for the last group of phases the cost of $\ALG$ is at most $ 4 n_k \cdot r_{k-1} \cdot w_k + r_{k-1} \cdot \Delta M $.
Overall, we get that $\cost(\ALG) \leq r_k \cdot  M_T + 4 n_k \cdot r_{k-1} \cdot w_k$, for some $r_k = O((n_k)^3 r_{k-1}) $, i.e. $\ALG$ is $r_k$-competitive.

{\em Strict Competitive Ratio:} In order to prove strict competitive ratio, we need to remove the additive term due to the last group of phases. In case $h \geq 2$, we do that by considering the last two groups together. By a similar calculation as in~\eqref{eq:ratio_group} we get that during groups $h-1$ and $h$, the ratio between cost of $\ALG$ and $\Delta M$ is at most $65k n_k^2 \cdot r_{k-1}$. For $i$th group, $i \leq h-2$ we use inequality~\eqref{eq:ratio_group}. Thus, in case $h \geq 2 $ we get that

\begin{equation}
 \frac{\cost(\ALG)}{M_T} \leq 65k n_k^2 \cdot r_{k-1} = O(n_k^3 r_{k-1}).
\end{equation} 

It remains to consider the case $h=1$, i.e there are no more than $n_k+1$ phases. To this end, we distinguish between two cases.

\begin{enumerate}
\item $\OPT$ moves $s_k$: Then $\cost(\OPT) = M_T \geq w_k$ and by Lemma~\ref{lem:main_gen}, $\cost(\ALG) \leq 4 n_k \cdot r_{k-1} \cdot w_k + r_{k-1} \cdot M_T  $. We get that 
\begin{align*}
 \frac{\cost(\ALG)}{\cost(\OPT)} &\leq  \frac{4 n_k \cdot r_{k-1} \cdot
 w_k}{M_T} + \frac{r_{k-1} \cdot M_T}{M_T} \leq \frac{4 n_k \cdot r_{k-1} \cdot
 w_k}{w_k} + r_{k-1} \ll 65k n_k^2 \cdot r_{k-1}
\end{align*}

\item $\OPT$ does not move $s_k$: In this case, $M_T = \WF_T(p_1)$, where $p_1$ is the initial location of the heaviest server $s_k$. We consider 2 sub-cases. 

\begin{enumerate}
\item First phase never ends: In this case, both $\ALG$ and $\OPT$ use $k-1$ servers and start from the same initial configuration, so by the inductive hypothesis $\cost(\ALG) \leq r_{k-1} \cdot \cost(\OPT)$. 
\item First phase ends: By Lemma~\ref{cl:phase_start_end}, we have that for the
first phase $ \Delta \WF(p_1)  \geq w_k/2 - 2W_{k-1} \geq w_k/4 $. Thus we get
that $\WF_T(p_1) \geq w_k/4$, which by a calculation similar
to~\eqref{eq:ratio_group} gives that $ \cost(\ALG) / \cost(\OPT) \leq  17 n_k
r_{k-1} \ll 65k n_k^2 \cdot r_{k-1}$.
\end{enumerate}
\end{enumerate}
We conclude that for any request sequence
\begin{equation}
r_k \leq \frac{\cost(\ALG)}{M_T} \leq 65k n_k^2 \cdot r_{k-1}.
\end{equation}

{\em Calculating the Recurrence.}
Assuming that $r_{k-1} \leq 2^{2^{k+5\log k }}$, and as $n_k = 2^{2^{k+3 \log k}}$ and $\log 65k < 2^{k + 3\log k}$, it follows that
\[
\pushQED{\qed}
\log r_k \leq \log (65k) + 2^{k+ 3 \log k +1} + 2^{k+5\log k } \leq  2^{k +1 + 5
\log (k+1)}.\qedhere
\popQED
\]

We now focus on proving Lemma \ref{lem:main_gen}.
The crucial part is to lower bound the increase in $\Delta M$ during the $m$ phases. Let $t_1$ and $t_2$ denote the start and end times of the $m$ phase sequence. We will show that for all points $p$, $\WF_{t_2}(p) \geq M_{t_1} + w_k/(8 k \cdot n_k)$.
To do this, we upper bound the number of points $p$ where the increase in $\WF(p)$ could be very small in the first phase (Lemma~\ref{lem:egalitarians_cost}). Then, using Lemma~\ref{cl:phase_start_end} we show that, during the subsequent $m$ phases, $s_k^{\ALG}$ will visits all such points $p$ which would increase $\WF(p)$ significantly for each of them.
We now give the details.

Call a point $q$ {\em lucky} during a phase, if its static work function increases by at most $\Delta \SWF(q) < w_k/(4kn_k)$ during that phase. The next lemma shows that there cannot be too many lucky points during a phase.

\begin{lemma}\label{lem:egalitarians_cost}
Let $L$ be the set of lucky points during some phase. Then, $|L| \leq n_k$.
\end{lemma}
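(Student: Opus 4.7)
The plan is to argue by contradiction using Theorem~\ref{thm:dichotomy}. Fix the phase $[t_1,t_2)$, let $p$ be the position of $s_k^{\ALG}$ throughout it, and suppose for contradiction that $|L|\geq n_k+1$. Pick any distinct $q_1,\dotsc,q_{n_k+1}\in L$. For each $q_i$, the optimal solution realizing $\SWF_{t_2}(q_i)$ keeps $s_k$ at $q_i$ throughout $[0,t_2]$, so its restriction to $[t_1,t_2)$ yields a hierarchical service pattern $\cI^{q_i}$ for the phase whose single level-$k$ interval is $[t_1,t_2)$ labeled by $q_i$, and whose total cost is at most $\Delta\SWF(q_i) < w_k/(4kn_k)$.

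Next I would form the common refinement $\cI'$ of $\cI^{q_1},\dotsc,\cI^{q_{n_k+1}}$, whose intervals at level $\ell$ have as endpoints the union, over $i$, of endpoints of $\cI^{q_i}_\ell$. Since endpoints at level $\ell+1$ are contained in those at level $\ell$ inside each $\cI^{q_i}$, the same is true in $\cI'$, so $\cI'$ is hierarchical and still has a single level-$k$ interval covering the whole phase. Moreover $|\cI'_\ell|-1 \leq \sum_i (|\cI^{q_i}_\ell|-1)$ at each level, hence $\cost(\cI')\leq \sum_i \cost(\cI^{q_i}) < (n_k+1)\cdot w_k/(4kn_k) \leq w_k/(2k)$. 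The feasible labeling of each $\cI^{q_i}$ descends canonically to a feasible labeling of the refinement $\cI'$ that still labels the root $q_i$, so the root of $\cI'$ admits at least $n_k+1$ distinct feasible labels. Theorem~\ref{thm:dichotomy} then forces every point of $U$ to be a feasible label for the root of $\cI'$; in particular, so is $p$.

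Fix a feasible labeling $\alpha'$ of $\cI'$ that puts $p$ at the root. Concatenating the optimal solution for $\SWF_{t_1}(p)$ with $\alpha'$, and paying at most $W_{k-1}$ at time $t_1$ to align the lighter servers with the starting labels of $\alpha'$, we obtain a feasible solution for $\sigma_1,\dotsc,\sigma_{t_2}$ keeping $s_k$ at $p$ throughout, giving $\Delta\SWF(p)\leq W_{k-1}+\cost(\cI') \leq W_{k-1}+w_k/(2k)$. On the other hand, the phase ends with $\ALG$ moving $s_k^{\ALG}$, so Lemma~\ref{cl:phase_start_end}(ii) combined with Lemma~\ref{lem:swf_wf_phase} gives $\Delta\SWF(p)\geq w_k/2 - 3W_{k-1}$. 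The weight-separation hypothesis $W_{k-1}\leq w_k/(20kn_k)$ makes these two estimates incompatible already for $k\geq 2$, producing the required contradiction. The only technical points that need care are verifying that the common refinement remains hierarchical and that feasible labelings of each $\cI^{q_i}$ lift to $\cI'$; both are straightforward from the definitions, so the real content is setting up the refinement so as to invoke the dichotomy theorem on a single-root pattern of total cost $O(w_k/k)$.
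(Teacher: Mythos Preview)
Your argument is essentially identical to the paper's: take $n_k+1$ lucky points, refine their static service patterns into one pattern with a single level-$k$ interval, invoke the dichotomy theorem to conclude that $p$ is also a feasible root label, and contradict the lower bound $\Delta\SWF(p)\geq w_k/2-3W_{k-1}$ coming from Lemmas~\ref{cl:phase_start_end} and~\ref{lem:swf_wf_phase}.

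One small technical slip: you assert that the restriction to $[t_1,t_2)$ of the optimal $\SWF(q_i)$ schedule is already a \emph{hierarchical} service pattern, and then use this to conclude that the common refinement $\cI'$ is hierarchical. Nothing guarantees the optimal static schedule is hierarchical. The paper instead forms the refinement first and then makes it hierarchical at the end, paying the factor~$k$; this yields $\cost(\cI)\leq k\sum_q\cost(\cI^q)\leq w_k/3$ rather than your $w_k/(2k)$. Either bound suffices for the contradiction, so this is a cosmetic rather than substantive gap, and your observation that a refinement of hierarchical patterns is hierarchical is correct and would equally work if you first made each $\cI^{q_i}$ hierarchical (absorbing the factor~$k$ there).
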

\begin{proof}
For the sake of contradiction, suppose that $|L| > n_k$.
Let $Q$  be an arbitrary subset of $L$ such that $|Q| = n_k+1 $. For each $q\in Q$, let $\cI^q$ be the optimal service pattern for the
phase where $s_k$ remained at $q$ throughout. Clearly, $\cost (\cI^q) \leq \Delta \SWF(q)$.

We create a new service pattern $\cI$ that is a {\em refinement} of all $\cI^q$, for $q \in Q$ as follows.
For each $\ell=1, \dotsc, k$, we set
$\cI_\ell = \{ [t_i, t_{i+1})\;|\, \text{ for } i = 1, \dotsc, s-1\}$, where
$t_1 < \dotsb < t_s$ are the times when at least one interval from
$\cI^1_\ell, \dotsc, \cI^{|Q|}_\ell$ ends. This way, each interval $I\in \cI^q_\ell$
is a union of some intervals from $\cI_\ell$.
Let $\cI = \cI_1 \cup \dotsb \cup \cI_k$. Note that any feasible labeling $\alpha$ for any $\cI^q$, extends naturally to a feasible labeling for $\cI$: If an interval $I \in \cI^q $ is partitioned into smaller intervals, we label all of them with $\alpha(I)$.

We modify $\cI$ to be hierarchical, which increases its cost at most by a factor of $k$.
By construction, we have
\begin{equation}\label{eq:cost_refinement}
    \cost(\cI) \leq k \cdot \sum_{q \in Q} \cost (\cI^q)
               \leq k \cdot \sum_{q \in Q} \Delta \SWF(q)
               \leq k(n_k+1) \cdot \frac{w_k}{4kn_k} \leq \frac{w_k}{3}.
\end{equation}

Now the key point is that $\cI$ has only one interval $I$ at level $k$, and all $q \in Q$ can be feasibly assigned to it.
But by the Dichotomy theorem~\ref{thm:dichotomy}, either the number of points which
can be feasibly assigned to $I$ is at most $n(k,1)$, or else
any point can be feasibly assigned there. As $|Q| > n_k \geq  n(k,1)$, this implies that any point can be feasibly assigned to $I$. Let $p$ be the location of $s_k^{\ALG}$ during the phase. One possible way to serve all requests of this phase having $s_k$ at $p$ is to use $\cI$  (with possibly some initial cost of at most $W_{k-1}$ to bring the lighter servers in the right configuration). This gives that,
\begin{equation}
\label{eq:delta_heav_loc}
\Delta \SWF(p) \leq \cost(\cI) + W_{k-1} \leq w_k/3 + W_{k-1}.
\end{equation}
On the other hand, by Lemma~\ref{cl:phase_start_end}, during the phase
$\Delta\WF(p) \geq w_k/2 - 2W_{k-1}$, and by Lemma~\ref{lem:swf_wf_phase}, $ \Delta\SWF(p) \geq \Delta\WF(p) - W_{k-1} $. Together, this gives
$\Delta\SWF(p) \geq  w_k/2 - 3W_{k-1}$
which contradicts \eqref{eq:delta_heav_loc},  as $W_{k-1} \ll w_k/40k$.
\end{proof}

The next simple observation shows that if a point is not lucky during a phase, its work function value must be non-trivially high at the end of the phase.
\begin{observation}
\label{obs:non-lucky}
Consider a phase that starts at time $t$ and ends at $t'$. Let $p$ be a point which is not lucky during that phase. Then, $ \WF_{t'}(p) \geq M_t + w_k/ (5k \cdot n_k) $.
\end{observation}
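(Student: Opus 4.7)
The plan is to deduce the observation directly from Lemma~\ref{lem:WFvsSWF}, using the definition of a lucky point and the well-separation assumption $W_{k-1} \leq w_k/(20 k n_k)$. Since $p$ is not lucky during the phase $[t,t']$, we have by definition that $\Delta_t^{t'}\SWF(p) \geq w_k/(4kn_k)$; this is the only consequence of the hypothesis we will use.

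Lemma~\ref{lem:WFvsSWF} gives two cases for $\WF_{t'}(p)$, and I would handle each separately. In the first case, $\WF_{t'}(p) \geq M_t + w_k$, which is far more than needed since $w_k \geq w_k/(5kn_k)$. In the second case,
\[
\WF_{t'}(p) \geq \WF_t(p) + \Delta_t^{t'}\SWF(p) - W_{k-1}.
\]
Using $\WF_t(p) \geq M_t$ and the non-luckiness bound, this yields $\WF_{t'}(p) \geq M_t + w_k/(4kn_k) - W_{k-1}$. Plugging in $W_{k-1} \leq w_k/(20 k n_k)$ gives $w_k/(4 k n_k) - W_{k-1} \geq w_k/(4kn_k) - w_k/(20 k n_k) = w_k/(5 k n_k)$, as required.

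Since both cases yield at least the claimed bound, the observation follows. There is no real obstacle: the heavy lifting has already been done in Lemma~\ref{lem:WFvsSWF} (which converts a lower bound on the growth of $\SWF(p)$ into one on $\WF(p)$) and in the convention that weights are sufficiently well-separated so that $W_{k-1}$ is negligible compared with $w_k/(k n_k)$. The only care needed is the bookkeeping of constants to verify $w_k/(4 k n_k) - W_{k-1} \geq w_k/(5 k n_k)$.
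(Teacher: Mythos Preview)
Your proof is correct and follows essentially the same approach as the paper: apply Lemma~\ref{lem:WFvsSWF}, dispose of the trivial case $\WF_{t'}(p)\geq M_t+w_k$, and in the remaining case combine $\WF_t(p)\geq M_t$, the non-luckiness bound $\Delta\SWF(p)\geq w_k/(4kn_k)$, and $W_{k-1}\leq w_k/(20kn_k)$ to get the claimed inequality. The only difference is that you make the step $\WF_t(p)\geq M_t$ explicit, which the paper leaves implicit.
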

\begin{proof}
By Lemma~\ref{lem:WFvsSWF} we have either $\WF_{t'}(p) \geq M_{t} + w_k$, in
which case the result is trivially true. 
Otherwise, we have that \[\WF_{t'}(p) \geq \WF_{t}(p) + \Delta_{t}^{t'}\SWF(p) - W_{k-1}.\] 
But as $p$ is not lucky, $\Delta \SWF(p) \geq w_k/(4kn_k)$, and as $W_{k-1} \leq w_k/(20 k \cdot n_k)$, together this gives  have that
$\WF_{t'}(p) \geq \WF_{t}(p) + w_k/ (5k \cdot n_k)$.
\end{proof}

\paragraph*{Proof of Lemma \ref{lem:main_gen}} We first give the upper bound on cost of $\ALG$ and then the lower bound on $\Delta M$.
\vspace{2mm}

{\em Upper Bound on cost of $\ALG$:} We denote by $\cost_i(\ALG)$ the cost of $\ALG$ during $i$th phase. Let $p_i$ be the location of $s_k^{\ALG}$, and $\Delta_i M$ the increase of $M$ during the $i$th phase. We will show that $\cost_i(\ALG) \leq 2 \cdot r_{k-1} \cdot w_k  + r_{k-1} \cdot \Delta_i M$. By summing over all $n_k+1$ phases, we get the desired upper bound.

 During the $i$th phase, $\ALG$ uses $k-1$ servers. Let $C_i^{k-1}$ denote the optimal cost to serve all requests of the $i$th phase starting at the same configuration as $\ALG$ and using only the $k-1$ lightest servers. By the inductive hypothesis of Theorem~\ref{thm:ub}, $\ALG$ using $k-1$ servers is strictly $r_{k-1}$-competitive, thus the cost incurred by $\ALG$ during the phase is at most $r_{k-1} \cdot C_i^{k-1} $. 

Now we want to upper bound $C_i^{k-1}$. By definition of static work function, there exists a schedule $S$ of cost $ \Delta\SWF(p_i)$ that serves all requests of the phase with $s_k$ fixed at $p_i$. Thus, a possible offline schedule for the phase starting at the same configuration as $\ALG$ and using only the $k-1$ lightest servers, is to move them at the beginning of the phase to the same locations as they are in $S$ (which costs at most $W_{k-1}$) and then simulate $S$ at cost $ \Delta\SWF(p_i)$. We get that $C_i^{k-1} \leq \Delta\SWF(p_i) + W_{k-1} $.

Moreover, $\ALG$ incurs an additional cost of $w_k$ for the move of server $s_k$ at the end of the phase. We get that 
\begin{equation}
\label{eq:alg_cost_phase}
\cost_i(\ALG) \leq w_k +  r_{k-1} \cdot(\Delta \SWF(p_i) + W_{k-1}).
\end{equation}
Combining this with Lemmas~\ref{cl:phase_start_end} and~\ref{lem:swf_wf_phase}, and using that $W_{k-1} \leq w_k /(20k \cdot n_k) $, we get 

\begin{align*}
\cost_i(\ALG) & \leq w_k +  r_{k-1} \cdot(\Delta \SWF(p_i) + W_{k-1}) \leq w_k +  r_{k-1} \cdot(\Delta \WF(p_i) + 2W_{k-1})  \\
 & \leq w_k +  r_{k-1} \cdot(\Delta_i M + w_k/2  + 4W_{k-1}) \leq w_k +  r_{k-1} \cdot(\Delta_i M + w_k/2  + (4/20) \cdot w_k) \\
 &  \leq 2 w_k \cdot r_{k-1} + r_{k-1} \cdot \Delta_i M. 
\end{align*}

{\em Lower bound on $\Delta M$:} Let $t_1$ and $t_2$ be the start and the end time of the $m = n_k+1$ phases. For the sake of contradiction, suppose that $M_t < M_{t_1} + w_k/(8 k \cdot n_k)$  for all $t \in [t_1,t_2]$.
By Lemma~\ref{lem:egalitarians_cost}, during first phase there are at most $n_k$ lucky points.
We claim that $s_k^{\ALG}$ must necessarily visit some lucky point in each subsequent phase. For $1 \leq i \leq m$, let $Q_i$ denote the set of points that have been lucky during all the phases $1,\dotsc,i$.
Let $t$ denote the starting time of $i$th phase and $p$ the location of $s_k^{\ALG}$ during this phase, for any $i \geq 2$. By Lemma~\ref{cl:phase_start_end}, we have that
\begin{align*}
\WF_t(p) < M_t + W_{k-1}   \leq M_{t_1} + w_k / (5 k \cdot n_k).
\end{align*}
By Observation \ref{obs:non-lucky}, this condition can only be satisfied by
points $p \in Q_{i-1}$ and hence we get that $p$ was lucky in all previous
phases. Now, by Lemma~\ref{cl:phase_start_end}, $i$th phase $\WF(p)$
rises by at least $w_k/2 - 2W_{k-1}$ during the $i$th phase,
and hence $p$ is not lucky. Therefore, $p \notin Q_i$ and $p \in Q_{i-1}$
and $|Q_{i}| \leq |Q_{i-1}| - 1$.
Since $|Q_1| \leq n_k = m-1$, we get that $Q_m=\emptyset$, which gives the
desired contradiction.
\qed

\subsection{Upper bound for $d$ different weights}
\label{sec:ubd}
For the case of $d$ weight classes
we prove a more refined upper bound.
The general approach is quite similar as before.
However, the proof of the variant of Lemma~\ref{lem:egalitarians_cost}
for this case is more subtle as the number of ``lucky"
locations for the heaviest servers can be infinite. However, we handle this
situation by maintaining posets of lucky tuples. We show that it suffices for
$\ALG$ to traverse all the minimal elements of this poset, and we use
Dichotomy theorem~\ref{thm:dichotomy-d} to bound the number of these minimal
elements.

\paragraph*{Definitions and Notation.}
First, we need to generalize a few definitions which were used until now.
Let $w_1 < \dotsb < w_d$ be the weights of the servers, where
$k_i$ is the number of servers of weight $w_i$, for $i=1, \dotsc, d$.
Henceforth, we assume that the values of $k_1, \dotsc, k_d$ are fixed,
as many constants and functions in this section will implicitly depend on
them. For example, $r_{d-1}$ denotes
the competitive ratio of $\ALG$ with servers of $d-1$ different weights, and
it depends on $k_1, \dotsc, k_{d-1}$.

We denote $W_i = \sum_{j=1}^i w_j k_j$,
and we assume $W_{d-1} \leq w_d/(20kn_d)^{k_d}$, where $n_d$ equals to the value
of $n(d,k_d)$ from Dichotomy theorem~\ref{thm:dichotomy-d}.
This assumption can not affect the competitive ratio by
more than a factor $(20kn_d)^{dk_d}$, what is smaller than our targeted ratio.
We also assume that the universe of pages $U$ contains at least $k$ pages that
are never requested. This assumption is only for the purpose of the
analysis and can be easily satisfied by adding artificial pages to $U$, without
affecting the problem instance.

A configuration of servers is a function
$C\colon \{1, \dotsc, d\} \to 2^U$, such that
$|C(i)| = k_i$ for each $i$.
Servers with the same weight are not distinguishable and we manipulate them in
groups. Let $K_i$ denote the set of servers of weight $w_i$.
For a $k_d$-tuple $A_d$,
we define the minimum work function value over all configurations having the
servers of $K_d$ at $A_d$, i.e.~$\WF_t(A_d) = \min\{\WF_t(C) \;|\, C(d) = A_d\}$.
Similarly, we define $\SWF_t(A_d)$ the {\em static work function} at time $t$
as the optimal cost of serving the requests $\sigma_1,\dotsc, \sigma_t$ while
keeping the servers of $K_d$ fixed at $A_d$.
When calculating the value of the work function and the static work function,
we use the distance $k_i\,w_i$ whenever the optimal solution moved at least one server from
$K_i$.
This work function still estimates the offline
optimum with a factor $k$, and is easier to work with.

As in previous subsection, we use $\Delta_{t_1}^{t_2} X$ to denote the change in
quantity $X$ during time interval $[t_1,t_2]$.  We also use the function
$n(d,t)$ from Theorem~\ref{thm:dichotomy-d}.
Observe that $n(d,t) \leq n_d$ for all $1\leq t \leq k_d$.

\paragraph*{Algorithm.}
We prove the bound for $\WFA_{0.5}$ with slightly
deformed distances between the configurations. More precisely,
we define $d(A,B) = \sum_{i=1}^d k_iw_i \mathbf{1}_{(A(i) \neq B(i))}$,
and denote $\ALG$ the $\WFA_{0.5}$ with this distance function.
In particular, $\ALG$ chooses new configuration for its heaviest
servers without distinguishing between those which differ only in a single
position and those which are completely different.
We call a \textit{phase} the maximal time interval when $\ALG$
does not move \textit{any} server from $K_d$.

\paragraph*{Basic Properties of WFA.}
Here are a few simple properties whose proofs are not very interesting
and are contained in Appendix~\ref{app_sec:ub}.

\begin{lemma}\label{lem:d_phase_start_end}
Consider a phase that starts at time $t_1$ and finishes at time $t_2$. Let $C_d$
be the $k_d$-tuple where the algorithm has its heaviest servers $K_d$ during the
phase. Then,
\vspace{-1ex}
\begin{enumerate}[(i)]
\item $M_{t_1}\leq \WF_{t_1}(C_d) \leq M_{t_1} + W_{d-1} $, and
\item $k_d\,w_d/2 - 2W_{d-1} \leq \Delta\WF(C_d)
	\leq \Delta M + k_d\,w_d/2 + 2W_{d-1}$.
\end{enumerate}
\end{lemma}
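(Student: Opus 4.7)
Lemma~\ref{lem:d_phase_start_end} is the direct $d$-weight counterpart of Lemma~\ref{cl:phase_start_end}, and I would prove it by adapting that argument with the following cosmetic substitutions: the single heaviest server $s_k$ is replaced by the group $K_d$ of $k_d$ heaviest servers; the ``heaviest-move cost'' $w_k$ is replaced by $k_d w_d$, the cost of altering the $K_d$-tuple under the deformed distance $d(A,B) = \sum_i k_i w_i \mathbf{1}_{A(i)\ne B(i)}$; and the lighter correction $W_{k-1}$ is replaced by $W_{d-1} = \sum_{j<d} k_j w_j$, which by the weight-separation assumption $W_{d-1} \le w_d/(20kn_d)^{k_d}$ is negligibly small compared to $k_d w_d$.

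The lower bound in (i) is immediate, since $\WF_{t_1}(C_d)$ is by definition a minimum of $\WF_{t_1}$ over a restricted set of configurations. For the upper bound, I would invoke the $\WFA_{0.5}$ optimality at time $t_1$: letting $C^* = \argmin_C \WF_{t_1}(C)$ and comparing $\ALG$'s configuration $C_{t_1}$ against the alternative configuration that agrees with $C_{t_1}$ on the heaviest positions $C_d$ but places its lighter coordinates at the positions of $C^*$, the WFA inequality combined with the triangle inequality on $d$ and the Lipschitz property $|\WF_{t_1}(A) - \WF_{t_1}(B)| \le d(A,B)$ yields $\WF_{t_1}(C_d) \le \WF_{t_1}(C_{t_1}) \le M_{t_1} + W_{d-1}$.

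For (ii), the lower bound exploits that the phase ends just after $t_2$ because $\ALG$ moves at least one server in $K_d$, which under the deformed distance costs $k_d w_d$. Applying the $\WFA_{0.5}$ rule to that moving step forces $\WF_{t_2+1}(C_{t_2}) - M_{t_2+1} \ge \tfrac{1}{2} k_d w_d$. Transporting this to a bound on $\WF_{t_2}(C_d)$ via Lipschitz over the lighter coordinates (cost at most $W_{d-1}$) and the near-constancy of $M_t$ across a single step, then combining with part (i) and $M_{t_1} \le M_{t_2}$, delivers $\Delta\WF(C_d) \ge \tfrac{1}{2} k_d w_d - 2 W_{d-1}$. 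The matching upper bound is symmetric: at each time $t$ during the phase, the algorithm's choice not to switch the $K_d$-tuple implies $\WF_t(C_d) - M_t \le \tfrac{1}{2} k_d w_d + W_{d-1}$ (otherwise the $\WFA_{0.5}$ rule would have preferred a configuration with different $K_d$); specialising this to $t_2$ and subtracting the trivial $\WF_{t_1}(C_d) \ge M_{t_1}$ yields $\Delta\WF(C_d) \le \Delta M + \tfrac{1}{2} k_d w_d + 2 W_{d-1}$.

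The main subtlety, and the most delicate step, is tracking the additive $W_{d-1}$ correction terms cleanly through the chain of WFA optimality and Lipschitz applications, since the deformed distance and the joint treatment of groups of equally weighted servers make the bookkeeping more involved than in the single-weight case. Fortunately, the weight-separation hypothesis keeps every such correction of order $W_{d-1}$, which is negligible compared to the leading $k_d w_d$ terms, so the claimed constants fall out after a routine computation.
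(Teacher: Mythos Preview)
Your plan is exactly the paper's: it proves Lemma~\ref{lem:d_phase_start_end} by the verbatim substitution $w_k\mapsto k_d w_d$, $W_{k-1}\mapsto W_{d-1}$ in the proof of Lemma~\ref{cl:phase_start_end}, and your sketch of (ii) tracks that argument correctly.

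One small slip in your description of the upper bound in (i): the comparison configuration you name (``agrees with $C_{t_1}$ on the heaviest positions $C_d$ but places its lighter coordinates at the positions of $C^*$'') does not do the job, because its work-function value need not be close to $M_{t_1}$. The paper's argument instead compares directly against $C^*$ in the $\WFA_{0.5}$ inequality and uses the phase-start property: at time $t_1$ the algorithm has just moved $K_d$, so its \emph{previous} configuration $A$ satisfies $A(d)\neq C_d$, hence $d(A,C_{t_1})\geq k_d w_d$ while $d(A,C^*)\leq k_d w_d + W_{d-1}$; the optimality inequality then yields $\WF_{t_1}(C_{t_1})\leq M_{t_1}+W_{d-1}/2$. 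Also, in the upper bound of (ii) the ``not switching'' inequality holds only through $t_2-1$; one extra request then moves $\WF(C_d)$ by at most $w_1\leq W_{d-1}$, which is how the second $W_{d-1}$ enters. With these two tweaks your sketch is complete and identical to the paper's.
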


\begin{lemma}
\label{lem:swf_wf_dlev}
For a phase where $\ALG$ has its servers from $K_d$ at a $k_d$-tuple $C_d$,
we have
\[ \Delta\WF(C_d) - W_{d-1} \leq \Delta\SWF(C_d) \leq \Delta\WF(C_d) + W_{d-1}.
\]
\end{lemma}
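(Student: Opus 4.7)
The plan is to mirror the proof of Lemma~\ref{lem:swf_wf_phase}, where the $k_d$-tuple $C_d$ takes the role of the single point $p$ and $W_{d-1}$ takes the role of $W_{k-1}$. The enabling observation is that any two configurations $A,B$ with $A(d)=B(d)=C_d$ satisfy $d(A,B)\le W_{d-1}$, since only the lighter $k-k_d$ servers (of total weight $W_{d-1}$) can differ. Each of the two inequalities will be established by splicing together three schedules at time $t_1$: an optimal schedule on $[0,t_1]$, a cheap ``transition'' between configurations sharing a $K_d$-assignment, and an optimal schedule on $(t_1,t_2]$.

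For the direction $\Delta\WF(C_d)\le\Delta\SWF(C_d)+W_{d-1}$, I would take an optimal static schedule $S^*$ achieving $\SWF_{t_2}(C_d)$ and let $B$ be its configuration at time $t_1$. Since $S^*|_{[0,t_1]}$ is itself a valid static schedule ending at $B$ with $B(d)=C_d$, its cost on $[0,t_1]$ is at least $\SWF_{t_1}(C_d)$, so the cost of $S^*$ on $(t_1,t_2]$ is at most $\Delta\SWF(C_d)$. Concatenating (i)~the optimal schedule for $\WF_{t_1}(C_d)$, ending at some $A$ with $A(d)=C_d$, (ii)~a transition $A\to B$ of cost at most $W_{d-1}$, and (iii)~the tail $S^*|_{(t_1,t_2]}$, yields a schedule serving $\sigma_1,\dotsc,\sigma_{t_2}$ and ending with $K_d$ at $C_d$, of total cost at most $\WF_{t_1}(C_d)+W_{d-1}+\Delta\SWF(C_d)$; this upper-bounds $\WF_{t_2}(C_d)$ and gives the inequality.

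For the reverse direction $\Delta\SWF(C_d)\le\Delta\WF(C_d)+W_{d-1}$, the construction is symmetric but depends essentially on the phase structure. I would take an optimal schedule $S$ for $\WF_{t_2}(C_d)$ and argue, via an exchange argument that leverages the inaction of $\WFA_{0.5}$ on $K_d$ during $(t_1,t_2]$, that $S$ may be assumed to keep $K_d$ at $C_d$ throughout the phase: any detour of $K_d$ away from $C_d$ would cost at least $k_d w_d$ in movement, dwarfing the $W_{d-1}$ of savings that such a detour could possibly yield on the lighter servers. Granted this structural form of $S$, let $B'$ be its configuration at $t_1$; its cost on $(t_1,t_2]$ is at most $\Delta\WF(C_d)$. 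Splicing the optimal static schedule for $\SWF_{t_1}(C_d)$ (ending at some $A'$ with $A'(d)=C_d$) with a transition $A'\to B'$ of cost at most $W_{d-1}$ and the now-static tail of $S$ on $(t_1,t_2]$ produces a static schedule witnessing $\SWF_{t_2}(C_d)\le\SWF_{t_1}(C_d)+W_{d-1}+\Delta\WF(C_d)$.

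The main obstacle is the exchange argument for the reverse direction: justifying that an optimum for $\WF_{t_2}(C_d)$ may be taken to keep $K_d$ at $C_d$ throughout the phase. This is precisely where the phase structure (the WFA's choice to keep $K_d$ at $C_d$) is used, and it requires combining the standard inequalities $|\WF_t(C)-\WF_t(C')|\le d(C,C')$ with the selection rule of $\WFA_{0.5}$ to certify that at each $t\in(t_1,t_2]$, no alternative $K_d$-assignment beats the current one by more than $O(k_d w_d)$ in work function value. The easier upper direction needs only the triangle inequality applied to configurations sharing a $K_d$-assignment.
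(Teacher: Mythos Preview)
Your plan is correct and is exactly the paper's approach: the paper proves Lemma~\ref{lem:swf_wf_phase} by the two splicing arguments you describe and states that the $d$-weight version follows by replacing $w_k$ with $k_d w_d$ and $W_{k-1}$ with $W_{d-1}$.

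One point to sharpen, since you flag the exchange argument as the main obstacle: the intuition ``a detour costs at least $k_d w_d$, dwarfing the $W_{d-1}$ of savings on the lighter servers'' is not the right reason---a detour can in principle save arbitrarily much on the lighter servers, not merely $W_{d-1}$. The actual contradiction (which your final paragraph correctly points toward) is obtained as follows. If the optimum $S$ for $\WF_{t_2}(C_d)$ last moves $K_d$ to $C_d$ at some time $t\in[t_1,t_2]$, then two splicings give
\[
\WF_{t_2}(C_d)\ \geq\ M_t + k_d w_d + \Delta_t^{t_2}\SWF(C_d) - W_{d-1}
\quad\text{and}\quad
\WF_{t_2}(C_d)\ \leq\ \WF_t(C_d) + \Delta_t^{t_2}\SWF(C_d) + W_{d-1},
\]
hence $\WF_t(C_d)\geq M_t + k_d w_d - 2W_{d-1}$. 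But the selection rule of $\WFA_{0.5}$, together with the fact that $\ALG$ did \emph{not} move $K_d$ at time $t$, forces $\WF_t(C_d) < M_t + k_d w_d/2 + W_{d-1}$, a contradiction since $W_{d-1}\ll k_d w_d$. So the phase structure enters not by capping the savings of a detour at $W_{d-1}$, but by capping $\WF_t(C_d)-M_t$ at roughly $k_d w_d/2$.
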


\begin{lemma}\label{lem:WFvsSWFd}
Let $M_t$ be the minimum value of work function at time $t$. For $t'>t$ and any
$k_d$-tuple $C_d$, we have the following:
\[ \WF_{t'}(C_d) \geq \min\{\WF_t(C_d) + \Delta_{t}^{t'}\SWF(C_d)
	- W_{d-1}, M_t + w_d\}.
\]
\end{lemma}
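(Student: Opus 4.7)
The plan is to mirror the proof of the single-weight analogue Lemma~\ref{lem:WFvsSWF}, adapting the argument to groups of equal-weight servers and to the deformed distance. I would fix an optimal schedule $\pi$ attaining $\WF_{t'}(C_d)$, so that $\pi$ serves $\sigma_1,\dotsc,\sigma_{t'}$, ends with the servers of $K_d$ at positions $C_d$, and has cost exactly $\WF_{t'}(C_d)$. The argument then splits into two cases according to whether $\pi$ moves any server of $K_d$ during $[t,t']$; these will correspond precisely to the two branches of the minimum on the right-hand side.

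The easy case is when $\pi$ moves at least one server of $K_d$ during $[t,t']$. The prefix of $\pi$ up to time $t$ is by itself a valid schedule for $\sigma_1,\dotsc,\sigma_t$ ending in some configuration, hence contributes at least $M_t$ to the total cost. The heavy-server move within $[t,t']$ contributes at least $w_d$ more (in fact $k_d w_d$ under the deformed distance, but the weaker bound suffices). Summing yields $\WF_{t'}(C_d)\geq M_t + w_d$, the second branch.

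The more substantive case is when $\pi$ keeps $K_d$ pinned at $C_d$ throughout $[t,t']$. Let $\pi_t$ denote the configuration of $\pi$ at time $t$; then $\pi_t(d)=C_d$, so the cost of $\pi$ up to time $t$ is at least $\WF_t(\pi_t)\geq \WF_t(C_d)$. To lower bound the cost of $\pi$ during $[t,t']$, I would splice $\pi$ with an optimal schedule $\tau$ for $\SWF_t(C_d)$ as follows: execute $\tau$ up to time $t$ (cost $\SWF_t(C_d)$, ending in some $C^{\star}$ with $C^{\star}(d)=C_d$), then rearrange only the lighter servers to match $\pi_t$ (cost at most $W_{d-1}$, since $C^{\star}$ and $\pi_t$ agree on $K_d$), and finally execute the restriction of $\pi$ to $[t,t']$. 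The resulting schedule serves $\sigma_1,\dotsc,\sigma_{t'}$ while keeping $K_d$ fixed at $C_d$, so its cost is at least $\SWF_{t'}(C_d)$. Rearranging this inequality gives
\[
\bigl(\text{cost of }\pi\text{ during }[t,t']\bigr)\;\geq\;\Delta_t^{t'}\SWF(C_d)-W_{d-1},
\]
and adding to $\WF_t(C_d)$ produces the first branch of the minimum.

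I do not anticipate any real obstacle beyond the bookkeeping for groups of servers. The splicing step relies only on the standard Lipschitz property $|\WF_t(C)-\WF_t(C')|\leq d(C,C')$, together with the observation that $d(C,C')\leq W_{d-1}$ whenever $C(d)=C'(d)$ under the deformed distance; both are immediate. Thus the proof is essentially the same as for Lemma~\ref{lem:WFvsSWF}, the only genuinely new content being the accounting that distinguishes ``moving at least one server of $K_d$'' from ``moving none.''
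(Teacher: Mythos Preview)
Your proposal is correct and follows essentially the same approach as the paper: the paper explicitly says the $d$-weight lemmas are proved by adapting the single-weight proofs, and its proof of Lemma~\ref{lem:WFvsSWF} splits on whether the optimal schedule for $\WF_{t'}$ moves the heaviest server during $[t,t']$, exactly as you do. Your splicing argument for the static-work-function bound in the second case is in fact spelled out more carefully than the paper's one-line assertion, but the content is identical.
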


\paragraph*{Main Lemma.}
The following lemma already implies a competitive ratio of order
$n_d^{O(k_d)}\,r_{d-1}$.

\begin{lemma}\label{lem:main-d}
Let us consider a group of $a_d = (k_d^3\,n_d)^{k_d}$ consecutive phases.
We have, $\Delta M \geq w_d/(10kn_d)^{k_d}$ and
$\cost(\ALG) \leq 2a_d\,r_{d-1} k_d\,w_d + r_{d-1} \Delta M$,
where $r_{d-1}$ is the strict competitive ratio of $\ALG$ with servers
$K_1, \dotsc, K_{d-1}$.
\end{lemma}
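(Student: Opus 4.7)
The plan is to mirror the proof of Lemma~\ref{lem:main_gen}, adapting each step to handle $k_d$-tuples at the top level instead of single points, with the Dichotomy Theorem~\ref{thm:dichotomy-d} in place of Theorem~\ref{thm:dichotomy}. The proof splits into an upper bound on $\cost(\ALG)$ and a lower bound on $\Delta M$.

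\textbf{Part 1: Upper bound on $\cost(\ALG)$.} This part is essentially routine. Consider phase $i$, during which $\ALG$ keeps the servers of $K_d$ fixed at some $k_d$-tuple $C_d^i$ and operates only the servers of weight classes $1, \dotsc, d-1$. By the inductive hypothesis, $\ALG$ restricted to these lighter servers is strictly $r_{d-1}$-competitive, so its in-phase cost is at most $r_{d-1}$ times the optimal cost of serving the phase using only $K_1, \dotsc, K_{d-1}$, starting from $\ALG$'s initial configuration. The latter is at most $\Delta \SWF(C_d^i) + W_{d-1}$ (use the witness for $\SWF$, plus at most $W_{d-1}$ to rearrange the lighter servers at the start). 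Adding the end-of-phase move of $K_d$, which costs $k_d w_d$, and combining with Lemmas~\ref{lem:d_phase_start_end} and~\ref{lem:swf_wf_dlev} together with $W_{d-1} \ll w_d$, we get $\cost_i(\ALG) \leq 2 r_{d-1} k_d w_d + r_{d-1} \Delta_i M$. Summing over the $a_d$ phases gives the stated bound.

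\textbf{Part 2: Lower bound on $\Delta M$.} Assume for contradiction that $\Delta M < w_d/(10kn_d)^{k_d}$. The goal is to define an appropriate notion of \emph{lucky tuple}, bound the number of inclusion-wise minimal lucky tuples via Theorem~\ref{thm:dichotomy-d}, and derive a contradiction through a pigeonhole argument across the $a_d$ phases. Call a tuple $A\subseteq U$ with $|A|\leq k_d$ \emph{lucky} in a given phase if there exists a service pattern $\cI^A$ on that phase, consisting of a single $d$th level interval, of cost at most $w_d/(4kn_d)^{k_d}$, such that some feasible labeling of its top interval assigns a set containing $A$. The key step is the analogue of Lemma~\ref{lem:egalitarians_cost}: if there were more than $n(d,t)$ inclusion-wise minimal lucky $t$-tuples in a single phase, then taking the common refinement of their witnesses (made hierarchical at a factor-$d$ overhead, as in Section~\ref{sec:intervals}) yields a single service pattern of cost $\ll w_d/2$ whose top-level request list contains these tuples; Theorem~\ref{thm:dichotomy-d} then forces $\cQ_1=U$, meaning \emph{every} $k_d$-tuple is feasible for the top interval, including $C_d^i$, which contradicts the lower bound $\Delta \WF(C_d^i)\geq k_d w_d/2 - 2W_{d-1}$ from Lemma~\ref{lem:d_phase_start_end} via Lemma~\ref{lem:swf_wf_dlev}. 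Hence each phase has at most $\sum_{t=1}^{k_d} n(d,t) \leq k_d n_d$ minimal lucky tuples.

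\textbf{Main obstacle and finishing argument.} The hard part, compared with the single-weight case, is that the set of $k_d$-tuples that \emph{extend} the at most $k_d n_d$ minimal lucky tuples can still be very large (even infinite), so $\ALG$ cannot be forced to exhaust a shrinking pool in just $k_d n_d + 1$ phases as before. To handle this, I would use Lemma~\ref{lem:WFvsSWFd} together with the assumed smallness of $\Delta M$ to argue, as in Observation~\ref{obs:non-lucky}, that in every phase $i\geq 2$ the tuple $C_d^i$ must be \emph{cumulatively lucky}, i.e.\ contain some minimal lucky tuple from every preceding phase; otherwise $\WF(C_d^i)$ exceeds $M_{t_1}+w_d/(10kn_d)^{k_d}$, which contradicts $\WF_t(C_d^i)\leq M_t+W_{d-1}$. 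This means the elements of every $C_d^i$ come from the union $P^\ast$ of minimal lucky tuples across all phases, whose total size is bounded by $k_d \cdot k_d n_d = k_d^2 n_d$ (each phase contributes at most $k_d n_d$ tuples of size $\leq k_d$, and the cumulative lucky sets only shrink). A refined pigeonhole on the roughly $\binom{k_d^2 n_d}{k_d} \leq (k_d^3 n_d)^{k_d}/2 = a_d/2$ many candidate $k_d$-tuples, combined with the fact that $\ALG$ changes its $K_d$-configuration every phase and cannot revisit a tuple whose witness has already become infeasible, produces the required contradiction once the number of phases exceeds $a_d$. Nailing down the precise inductive maintenance of $P^\ast$ and the exact shrinking invariant is where most of the technical work lies; everything else is an adaptation of the single-weight case.
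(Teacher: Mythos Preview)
Your Part~1 is correct and matches the paper's argument essentially verbatim.

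Your Part~2 has the right overall shape but the finishing argument contains a genuine gap. The claim that ``the elements of every $C_d^i$ come from the union $P^\ast$ of minimal lucky tuples \dots\ whose total size is bounded by $k_d^2 n_d$'' is not justified, and in fact is false in general. Being cumulatively lucky only forces $C_d^i$ to \emph{contain} some minimal lucky tuple of $\cG_{i-1}$; if that minimal tuple is, say, a singleton $\{p\}$, then the remaining $k_d-1$ coordinates of $C_d^i$ are completely unconstrained. So there is no bounded ground set over which to run a $\binom{k_d^2 n_d}{k_d}$ pigeonhole. Relatedly, ``the cumulative lucky sets only shrink'' is true for the poset $\cG_i$ of lucky $k_d$-tuples, but the number of \emph{minimal} elements of $\cG_i$ can grow from phase to phase: when a closure $\cl(T)$ breaks, it may fragment into many closures of larger tuples.

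The paper's fix is exactly to control this fragmentation. It uses a size-graded notion of lucky (a $t$-tuple $T$ is lucky if $\Delta\SWF(C_d)<w_d/(4kn_d)^t$ for \emph{every} $k_d$-tuple $C_d\supseteq T$), proves via the dichotomy theorem that $\cl(T)$ can break into at most $k_d n_d$ minimal elements at each larger size (Lemma~\ref{lem:Td_decompose}), and then introduces a potential
\[
\Phi(i)=\sum_{j=1}^{k_d}(2k_d)^{k_d-j}(k_d n_d)^{k_d-j}\,L_j(i),
\]
where $L_j(i)$ counts minimal $j$-tuples in $\cG_i$. Each phase breaks at least one closure (the one containing $C_d^i$), the weights are chosen so that breaking always decreases $\Phi$ by at least $1$, and $\Phi(1)<a_d$. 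This potential argument is the missing ingredient; the straight pigeonhole you sketch cannot replace it.
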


The bound for $\cost(\ALG)$ is easy and can be shown using a combination of the
basic properties mentioned above.
Therefore, most of this section focuses on lower bounding $\Delta M$.

Let $t_1$ and $t_2$ denote the beginning and the end of this group of phases.
At each time $t$, we maintain a structure containing
all configurations $C_d$ for the servers in $K_d$ such that
$\WF_t(C_d)$ could still be below $M_{t_1} + w_d/(10kn_d)^{k_d}$.
We call this structure a {\em poset of lucky tuples} and it is defined below.
Then, we show that this poset gets smaller with each phase until it becomes
empty before time $t_2$.

\paragraph{Poset of lucky tuples.}
Let us first consider a single phase. We call a $k_d$-tuple $C_d$ {\em lucky},
if we have $\Delta\SWF(C_d) < w_d/(4kn_d)^{k_d}$ during this phase.
A tuples $T$ of size $t<k_d$ is called lucky,
if $\Delta\SWF(C_d) <w_d/(4kn_d)^t$ for
each $k_d$-tuple $C_d$ containing $T$.
Let $\cQ_i$ be the set of tuples which were lucky during phase $i$.
We denote $(\cL_i, \subseteq) = \bigcup_{T\in \cQ_i} \cl(T)$ and we call it
the poset of lucky tuples during the phase $i$. Here, the closure $\cl(T)$ is a
set of all tuples of size at most $k_d$ which contain $T$ as a subset.
The following lemma bounds the number of its minimal
elements and uses Dichotomy theorem~\ref{thm:dichotomy-d}.

\begin{lemma}\label{lem:dipaying}
Let us consider a poset $\cL$ of tuples which are lucky during one phase,
and let $\cE_t$ the set of its minimal elements of size $t$.
Then we have $|\cE_t| \leq n(d,t)$.
\end{lemma}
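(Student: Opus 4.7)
The plan is to apply Dichotomy Theorem~\ref{thm:dichotomy-d} to a service pattern assembled from the witnesses of luckiness for the tuples in $\cE_t$. I will argue by contradiction: assume $|\cE_t| > n(d,t)$ and pick an arbitrary subset $\cQ_t^\star \subseteq \cE_t$ of size exactly $n(d,t)+1$. The goal is to turn this subset into a set of feasible partial labels of size $t$ for a single top-level interval of a newly constructed service pattern, so that the dichotomy then forces a contradiction.

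To construct the service pattern, for each $T \in \cQ_t^\star$ I will pick a padding $F_T \subseteq U$ of $k_d - t$ pages that are never requested in $\sigma$ (which exists by the standing assumption that $U$ contains at least $k$ unrequested pages) and set $C_d^T := T \cup F_T$. Since $T$ is lucky, $\Delta\SWF(C_d^T) < w_d/(4kn_d)^t$, so there is an optimal service pattern $\cI^T$ of finite cost serving the phase while keeping the heavy servers $K_d$ fixed at $C_d^T$ throughout. I then take $\cI'$ to be the common hierarchical refinement of $\{\cI^T\}_{T \in \cQ_t^\star}$; because every $\cI^T$ has a single top-level interval spanning the phase, $\cI'$ also has a single $d$-th level interval $I$, and any feasible labeling of any individual $\cI^T$ extends to a feasible labeling of $\cI'$ by inheriting labels to the finer sub-intervals.

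The next step is the central observation: each $T \in \cQ_t^\star$ is itself a feasible partial label of size $t$ for $I$ in $\cI'$. The witness completion is $C_d^T$; the labeling of $\cI^T$, inherited onto $\cI'$, serves all requests when $I$ gets label $C_d^T$, and because $F_T$ consists of pages never requested, the partial size-$t$ label $T$ alone already describes a valid labeling of $\cI'$ at this level. I then define the input to Theorem~\ref{thm:dichotomy-d} by $\cQ_t := \cQ_t^\star$ and $\cQ_j := \emptyset$ for $j \neq t$: condition~(i) follows from the previous step, and condition~(ii) is vacuous since $\cQ_j = \emptyset$ for $j > t$. Applying the theorem, since $\cQ_1 = \emptyset \neq U$, we conclude $|\cQ_t| \leq n(d,t)$, contradicting $|\cQ_t^\star| = n(d,t)+1$ and completing the proof.

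The main obstacle will be pinning down exactly why a minimal lucky tuple of size $t < k_d$ behaves as a \emph{partial} feasible label of size $t$ (rather than being forced up to size $k_d$) for the top-level interval of $\cI'$: this is precisely what makes the unrequested-page padding essential, and it also explains why, unlike Lemma~\ref{lem:egalitarians_cost}, the argument here does not need to invoke a cost bound on $\cI'$ and route the contradiction through $\ALG$'s own configuration.
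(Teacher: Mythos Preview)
Your construction of the service patterns $\cI^T$ via padding with unrequested pages and taking the common refinement $\cI'$ is correct and matches the paper. The gap is in your final step, where you invoke Theorem~\ref{thm:dichotomy-d}.

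You read the dichotomy as: ``for any $\cQ$ satisfying (i) and (ii), either \emph{this particular} $\cQ_1$ equals $U$, or $|\cQ_t|\le n(d,t)$''. Under that reading you set $\cQ_1=\emptyset$ and immediately land in the second branch. But the theorem is false under that reading: if the joint request list of the children of $I$ happens to be $\{\emptyset\}$, then \emph{every} tuple of size $\le k_d$ is a feasible label for $I$, so one can take $\cQ_1=\emptyset$ and $\cQ_t$ an arbitrarily large antichain of $t$-tuples, violating both branches. The intended (and provable) statement is structural: either the joint request list is $\{\emptyset\}$ --- equivalently every singleton is a feasible label, which the paper abbreviates as ``$Q_1=U$'' --- or, for every $\cQ$ satisfying (i) and (ii), $|\cQ_t|\le n(d,t)$. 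This is exactly how the paper's own proof of the lemma uses the theorem: from $|\cQ_t|>n(d,t)$ it concludes that every label is feasible for $I$.

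Once you are forced into that first branch, you still need a contradiction, and that is precisely where the cost bound on $\cI'$ and $\ALG$'s configuration $A_d$ enter. Since $A_d$ is now a feasible label for $I$, one gets $\Delta\SWF(A_d)\le \cost(\cI')+W_{d-1}$, and $\cost(\cI')\le k\sum_{T}\cost(\cI^T)<k(n_d+1)\,w_d/(4kn_d)^t$; this contradicts $\Delta\SWF(A_d)\ge k_d w_d/2-3W_{d-1}$ from Lemmas~\ref{lem:d_phase_start_end} and~\ref{lem:swf_wf_dlev}. So the step you explicitly claim to avoid is in fact indispensable, and your proposal as written does not close the argument.
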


The following observation show that if a $k_d$-tuple was unlucky during at
least one of the phases, its work function value must already be
above the desired threshold.

\begin{observation}\label{obs:unluckyd}
Let us consider a phase between times $t$ and $t'$.
If a $k_d$-tuple $C_d$ was not lucky during this phase,
we have $\WF_{t'}(C_d) \geq M_{t} + w_d/(5kn_d)^{k_d}$.
\end{observation}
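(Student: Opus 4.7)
}
The plan is to apply Lemma~\ref{lem:WFvsSWFd} directly to the phase $[t,t']$ and then split on which of the two branches of the minimum is attained. Since the tuple $C_d$ is not lucky during the phase, by the definition preceding Lemma~\ref{lem:dipaying} we have $\Delta_{t}^{t'}\SWF(C_d)\geq w_d/(4kn_d)^{k_d}$, and this is the only property of $C_d$ we will use.

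First, if the minimum in Lemma~\ref{lem:WFvsSWFd} is attained by $M_t+w_d$, then trivially $\WF_{t'}(C_d)\geq M_t+w_d\geq M_t+w_d/(5kn_d)^{k_d}$, which is even stronger than the claimed bound. Otherwise, the lemma gives $\WF_{t'}(C_d)\geq \WF_t(C_d)+\Delta_{t}^{t'}\SWF(C_d)-W_{d-1}$, and since $\WF_t(C_d)\geq M_t$ by the definition of $M_t$, the estimate $\Delta\SWF(C_d)\geq w_d/(4kn_d)^{k_d}$ together with the standing assumption $W_{d-1}\leq w_d/(20kn_d)^{k_d}$ yields
\[
\WF_{t'}(C_d)\ \geq\ M_t\ +\ \frac{w_d}{(4kn_d)^{k_d}}\ -\ \frac{w_d}{(20kn_d)^{k_d}}.
\]

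It remains to verify the purely numerical inequality $\tfrac{1}{(4kn_d)^{k_d}}-\tfrac{1}{(20kn_d)^{k_d}}\geq \tfrac{1}{(5kn_d)^{k_d}}$, which after multiplying by $(20kn_d)^{k_d}$ reduces to $5^{k_d}-1\geq 4^{k_d}$, and this holds for every $k_d\geq 1$ (with equality at $k_d=1$). This is the only real work in the argument, and it is elementary; the substantive content has already been packaged into Lemma~\ref{lem:WFvsSWFd} and into the slack built into the assumption on $W_{d-1}$. I expect no serious obstacle: the observation is essentially the $d$-weight analogue of Observation~\ref{obs:non-lucky} in the previous subsection, and the same case analysis based on the ``static'' work function applies verbatim once the constants are tracked.
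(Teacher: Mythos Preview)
Your proposal is correct and follows essentially the same approach as the paper: apply Lemma~\ref{lem:WFvsSWFd}, split on the two branches of the minimum, and in the nontrivial branch use $\WF_t(C_d)\ge M_t$, the unlucky lower bound on $\Delta\SWF(C_d)$, and the standing bound on $W_{d-1}$. The only difference is cosmetic: you spell out the numerical check $5^{k_d}-1\ge 4^{k_d}$ explicitly, whereas the paper asserts the resulting inequality without comment.
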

\begin{proof}
By Lemma~\ref{lem:WFvsSWFd}, we have either $\WF_{t'}(C_d) \geq M_{t} + w_d$,
in which case the result trivially holds, or
\[ \WF_{t'}(C_d) \geq \WF_{t}(C_d) + \Delta_{t}^{t'}\SWF(C_d).\]
But then  we have that
$\WF_{t'}(C_d) \geq \WF_{t} + w_d/(4kn_d)^{k_d} -W_{d-1}$,
as $C_d$ was unlucky, and this is at least $w_d/(5kn_d)^{k_d}$ as
$W_{d-1} \leq w_d/(20kn_d)^{k_d}$.
\end{proof}

Therefore, we keep track of the tuples which were lucky in all the phases.
We denote $\cG_m = \bigcap_{i=1}^m \cL_i$ the poset of tuples which were lucky
in each phase $1, \dotsc, m$.
Note that we can write $\cG_m = \bigcup_{T\in\cE} \cl(T)$, where $\cE$ is the
set of the minimal elements of $\cG_m$.
If, in phase $m+1$, we get $\cL_{m+1}$ which does not contain some
$\cl(T)\subseteq \cG_m$, then $\cl(T)$ might break into closures of some
supersets of $T$. This is a favourable situation for us because it makes
$\cG_{m+1}$ smaller than $\cG_m$.
The following lemma claims that $\cl(T)$ cannot break into too many
pieces.

\begin{lemma}\label{lem:Td_decompose}
Let $T$ of size $t$ be a fixed minimal tuple in $\cG_{m}$.
If $\cl(T) \nsubseteq \cL_{m+1}$, then
$\cl(T) \cap \cL_m$ contains no tuple of size $t$
and, for $i=1, \dotsc, k_d-t$, it contains at most
$k_d\,n_d$ tuples of size $t+i$.
\end{lemma}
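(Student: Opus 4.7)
For the first assertion, I will use upward-closure of $\cL_{m+1}$. Since $\cL_{m+1}=\bigcup_{S\in\cQ_{m+1}}\cl(S)$ is a union of upward-closed sets, it is itself upward-closed. Hence $T\in\cL_{m+1}$ would force $\cl(T)\subseteq\cL_{m+1}$, contradicting the hypothesis $\cl(T)\nsubseteq\cL_{m+1}$. So $T\notin\cL_{m+1}$, and since $T$ is the only size-$t$ element of $\cl(T)$, the intersection has no size-$t$ tuple. (I read the $\cL_m$ of the statement as a typo for $\cL_{m+1}$: the literal version would be vacuously false, because $T\in\cG_m\subseteq\cL_m$ together with upward-closure gives $\cl(T)\subseteq\cL_m$.)

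For the size-$(t+i)$ bound, let $\mathcal{M}$ denote the set of minimal elements of $\cL_{m+1}$ (equivalently, the inclusion-minimal lucky tuples of phase $m+1$). Lemma~\ref{lem:dipaying} bounds the number of size-$s$ elements of $\mathcal{M}$ by $n(d,s)\le n_d$ for each $s\in\{1,\dotsc,k_d\}$, so $|\mathcal{M}|\le k_d\,n_d$ in total. The key structural observation will be that each size-$(t+i)$ tuple $R$ counted by the lemma has the form $R=T\cup S$ for some $S\in\mathcal{M}$ with $S\nsubseteq T$: the membership $R\in\cL_{m+1}$ supplies an $S\in\mathcal{M}$ with $S\subseteq R$; the conclusion $T\notin\cL_{m+1}$ already established forces $S\nsubseteq T$; and then $T\cup S\subseteq R$ and $T\cup S\in\cl(T)\cap\cL_{m+1}$, which pins down $R=T\cup S$ whenever $R$ is chosen to be one of the new pieces of $\cl(T)$ inside $\cG_{m+1}$. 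The map $R\mapsto S$ thereby injects the size-$(t+i)$ tuples counted by the lemma into $\mathcal{M}$; summing $n(d,s)\le n_d$ over the admissible sizes $s\in\{1,\dotsc,k_d\}$ of $S$ yields the claimed bound of $k_d\,n_d$.

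The main obstacle will be interpreting which size-$(t+i)$ tuples of $\cl(T)\cap\cL_{m+1}$ are actually being counted: a purely set-theoretic reading admits no finite bound, because in a sufficiently large universe any size-$(t+i)$ superset of $T\cup S$ still lies in $\cl(T)\cap\cL_{m+1}$. Reading the lemma in the sense suggested by the surrounding text --- $\cl(T)$ ``breaking into pieces'' when $\cG_m$ transitions to $\cG_{m+1}$ --- the relevant size-$(t+i)$ tuples are exactly the $T\cup S$ with $S\in\mathcal{M}$ and $S\nsubseteq T$, for which the representation $R=T\cup S$ together with Lemma~\ref{lem:dipaying} gives the $k_d\,n_d$ bound size-class by size-class. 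Once the correspondence $R\mapsto S$ is set up, the rest of the proof is routine, with the $k_d$ factor arising from the $\le t+1\le k_d$ possible values of $|S|$ producing $T\cup S$ of the fixed target size $t+i$.
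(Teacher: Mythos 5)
Your proof is correct and follows essentially the same approach as the paper's: both decompose $\cl(T)\cap\cL_{m+1}$ as $\bigcup_{T'}\cl(T\cup T')$ over the inclusion-minimal tuples $T'$ of $\cL_{m+1}$, identify the candidate minimal pieces as the sets $T\cup T'$, and bound their number at each size $t+i$ via Lemma~\ref{lem:dipaying} applied to the possible sizes $|T'|\in\{i,\dotsc,t+i\}$. Your observations that the statement's $\cL_m$ should read $\cL_{m+1}$ (to match the hypothesis and avoid vacuous falsity from $T\in\cG_m\subseteq\cL_m$) and that only inclusion-minimal tuples are being counted (as confirmed by the potential $\Phi$ counting ``minimal $j$-tuples'' in the proof of Lemma~\ref{lem:main-d}) are both accurate readings, and the upward-closure argument you give for the size-$t$ claim makes explicit a step the paper leaves implicit.
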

\begin{proof}
Let $T'$ be some inclusion-wise minimal tuple from $\cL_m$.
It is easy to see that $\cl(T)\cap\cl(T') = \cl(T\cup T')$,
and $T\cup T'$ is the new (potentially) minimal element.
Denoting $\cE$ the set of minimal elements in $\cL_m$, we have
$\cl(T) \cap \cL_m = \bigcup_{T'\in \cE} \cl(T)\cap \cl(T')$.
Therefore, $\cl(T) \cap \cL_m$ contains
at most one minimal element per one minimal tuple from $\cL_m$.

Let us now consider the resulting $T\cup T'$ according to its size.
The size of $T\cup T'$ can be $t+i$ if the size of $T'$ is at least $i$ and at
most $t+i$. Therefore, by Lemma~\ref{lem:dipaying},
we have at most $\sum_{j=i}^{t+i} n(d,j) \leq k_d\,n_d$ minimal elements of size
$t+i$.
\end{proof}

\paragraph{Proof of the main lemma.}
First, let us bound the cost of the algorithm.
During phase $i$ when its heaviest servers reside in $C_d^i$, it
incurs cost $\cost_i(\ALG) \leq k_d\,w_d+ r_{d-1}(\Delta\SWF(C_d^i)+W_{d-1})$.
The first $k_d\,w_d$ is the cost for the single move of servers in $K_d$ at the
end of the phase, and we claim that the second term is due to the movement of
the servers $K_1, \dotsc, K_{d-1}$.

To show this, we use the assumption that $\ALG$ is strictly
$r_{d-1}$-competitive when using servers $K_1, \dotsc, K_{d-1}$.
Let us denote $C^i_1, \dotsc, C^i_{d-1}$ their configuration at the beginning of
the phase. The servers from $K_1, \dotsc, K_{d-1}$
have to serve the request sequence
$\bar{\sigma}^i$, consisting of all requests issued during the phase which do
not belong to $C_d^i$, starting at configuration $C^i_1, \dotsc, C^i_{d-1}$.
We claim that there is such offline solution with cost
$\Delta\SWF(C_d^i)+W_{d-1}$: the solution certifying the value of $\SWF(C_d^i)$
has to serve the whole $\bar{\sigma}^i$ using only $K_1, \dotsc, K_{d-1}$,
although it might start in a different initial position, and therefore we need
additional cost $W_{d-1}$.

Therefore, the cost incurred by $\ALG$ during the phase $i$ is at most
$k_d\,w_d+ r_{d-1}(\Delta\SWF(C_d^i)+W_{d-1})$.
Combining lemmas \ref{lem:d_phase_start_end} and \ref{lem:swf_wf_dlev}, we get
$\Delta\SWF(C_d^i) \leq \Delta_i M + k_d\,w_d/2 + 3W_{d-1}$, and summing this
up over all phases, we get
\begin{align*}
\cost(\ALG)
	&\leq a_d\,k_d\,w_d +  r_{d-1}
		(\Delta M + a_d\cdot k_d\,w_d/2 + a_d\cdot 4W_{d-1})\\
	&\leq a_d\,k_d\,w_d + r_{d-1}\,a_d\,k_d\,w_d/2
		+ r_{d-1}\,a_d\,4W_{d-1} + r_{d-1}\Delta M
	\leq 2a_d\,r_{d-1}\,k_d\,w_d + r_{d-1}\Delta M,
\end{align*}
since $4W_{d-1} < w_d/2$.

Now we bound $\Delta M$.
Clearly, if $M_t \geq M_{t_1} + w_d/(10kn_d)^{k_d}$ for some $t\in [t_1, t_2]$,
we are done.
Otherwise, we claim that the posets $\cG_i$ become smaller with each phase and
become empty before the last phase ends. We define a potential which captures
their size:
\[
\Phi(i) = \sum_{j=1}^{k_d} (2k_d)^{k_d-j} \cdot (k_d\,n_d)^{k_d-j}
	\cdot L_j(i),
\]
where $L_j(i)$ is the number of minimal $j$-tuples in $\cG_i$.

Let $t$ and $t'$ denote the beginning and the end of the $i$th phase,
and $A_d$ be the configuration of $K_d$ during this phase.
By Lemma~\ref{lem:d_phase_start_end},
we have $\WF_t(A_d) \leq M_t+W_{d-1} < M_{t_1} + w_d/(10kn_d)^{k_d} +W_{d-1}$
and $\Delta_t^{t'}\WF(A_d) \geq k_d\,w_d/2 - 2W_{d-1}$.
By Observation~\ref{obs:unluckyd}, this implies that $A_d$ belongs to
$\cG_{i-1}$ and does not belong to $\cG_i$.
Therefore, at least one $\cl(T) \subseteq \cG_{i-1}$
(the one containing $A_d$) must have broken during phase $i$.

Each $\cl(T)$ that breaks into smaller pieces causes a change of the potential,
which we can bound using Lemma~\ref{lem:dipaying}. We have
\[ \Delta \Phi \leq
	- (2k_d)^{k_d-|T|} \, (k_d\, n_d)^{k_d-|T|}
	+ (2k_d)^{k_d-(|T|+1)} \,
		(k_d\,n_d)^{k_d-(|T|+1)} \cdot k_d\cdot k_d\,n_d.
\]
The last term can be bounded by
$k_d (2k_d)^{k_d-(|T|+1)} (k_d\, n_d)^{k_d-|T|}$, what is strictly smaller than
$(2k_d)^{k_d-|T|} \cdot (k_d\, n_d)^{k_d-|T|}$.
So, we have $\Delta\Phi \leq -1$, since the value of $\Phi(i)$ is always
integral.

The value of $\Phi$ after the first phase is
$\Phi(1) \leq k_d \cdot \big((2k_d)^{k_d} (k_d\,n_d)^{k_d-1} \cdot n_d\big)
< a_d$,
by Lemma~\ref{lem:dipaying}, and $\Phi(i)$ becomes zero as soon as $\cG_i$ is
empty. Therefore, no page can be lucky during the entire group of $a_d$ phases.
\qed

\paragraph{Proof of Lemma~\ref{lem:dipaying}.}
We proceed by contradiction.
If the lemma is not true for some $t$,
then there exists a set of $t$-tuples $\cQ_t\subseteq \cE_t$
of size $n(d,t)+1$.
For each $T\in \cQ_t$, we consider a service pattern $\cI^T$ which is
chosen as follows.
For a $k_d$-tuple $A_T$ containing $T$ and $k_d-t$ points which were not
requested during the phase, we have $\Delta\SWF(A_d) < w_d/(4kn_d)^t$.
Therefore there is a service pattern $\cI^T$ of cost smaller than
$w_d/(4kn_d)^t$ such that $T$ is a feasible label for its top-level interval.

We consider a common refinement $\cI$ of all service patterns $\cI^T$, for $T \in Q_t$.
Its cost is less than $k\sum_{T\in\cQ_t} \cost(\cI^T)$, and each
$T\in \cQ_t$ is a feasible label for its single top-level interval $I$.
Common refinement $\cI$ has more than $n(d,t)$ minimal feasible $t$-tuples,
so by Theorem~\ref{thm:dichotomy-d}, $Q_1=U$.
This implies that the configuration $A_d$ of the heaviest servers of $\ALG$
during this phase
is also feasible label for $I$, and therefore
\[
\Delta\SWF(A_d) \leq \cost(\cI^T) + W_{d-1}
	< k(n_d+1) w_d/(4kn_d)^t + W_{d-1}
	\leq \frac14 (1+1/n_d) \cdot \frac{w_d}{(4kn_d)^{t-1}} + W_{d-1}.
\]
This is smaller than $w_d/(4kn_d)^{t-1}$, because $W_{d-1}$ is less than
$w_d/(20kn_d)^{k_d}$.
However, lemmas \ref{lem:d_phase_start_end} and \ref{lem:swf_wf_dlev}
imply that
$\Delta\SWF(A_d) \geq w_d/2  - W_{d-1}$, what gives a contradiction.
\qed

\paragraph*{Proof of Theorem~\ref{thm:ubd}}
We prove the theorem by induction on $d$. For $d=1$ we have the classical paging
problem and it is known that $\ALG$ is
$O(k_1)$-competitive, see \cite{ST85}.

{\em Competitive ratio.}
Since $\cost(\OPT) = M_T/k$, it is enough to compute the ratio between
$\cost(\ALG)$ and $\Delta M$ during one group of $a_d$ phases, getting $1/k$
fraction of the ratio.
The case where the last group contains less than $a_d$ phases can be handled similarly as in proof of Theorem~\ref{thm:ub}.
By the main lemma~\ref{lem:main-d}, we get the following recurrence.
\begin{equation}\label{eq:R_d-rec}
\frac1k r_d \leq \frac{2a_d\,r_{d-1}\,k_d\,w_d}{w_d/(10kn_d)^{k_d}}
	+ \frac{r_{d-1} \Delta M}{\Delta M}
	\leq a_d^3 r_{d-1}.
\end{equation}

{\em Strict competitive ratio.}
It is enough to the same case analysis as in the proof of Theorem~\ref{thm:ub}.
Applying the corresponding variants of the technical lemmas
(\ref{lem:d_phase_start_end}, \ref{lem:swf_wf_dlev}),
it can be shown that in all of those cases, the ratio between the cost of the
algorithm and the cost of the adversary is much smaller than $a_d^3\,r_{d-1}$.

{\em Calculating the recurrence.}
Let us assume that $r_{d-1} \leq 2^{12dk^3 \prod_{j=1}^{d-1}(k_j+1)}$,
and recall that $a_d = (k_d^3\,n_d)^{k_d}$, where
$n_d = n(d,k_d)$ where
$\log n(d,k_d) \leq 4dk^2 k_d \prod_{j=1}^{d-1}(k_j+1)
	\leq 4dk^3\prod_{j=1}^{d-1}(k_j+1)$.
Therefore, taking the logarithm of \eqref{eq:R_d-rec}, we get
\[ \log r_d \leq \log k + 9k_d \log k_d
	+ 3k_d\cdot 4dk^3 \prod_{j=1}^{d-1}(k_j+1)
	+ 12dk^3 \prod_{j=1}^{d-1}(k_j+1).
\]
The last two terms are quite similar, and we can bound them by
$(k_d+1)\cdot 12dk^3 \prod_{j=1}^{d-1}(k_j+1)$.
Moreover, the first two terms are
smaller than $12k^3$. Therefore we get the final bound
\[\pushQED{\qed}
r_k \leq 2^{(k_d+1)\cdot 12(d+1)k^3 \prod_{j=1}^{d-1}(k_j+1)}
	\leq 2^{12(d+1)k^3 \prod_{j=1}^d(k_j+1)}.\qedhere
\popQED
\]

\section{Concluding Remarks}
\label{sec:conc}

There are several immediate and longer-term research directions.
First, it seems plausible that using randomization a singly exponential (i.e.~logarithmic in the deterministic bound) competitive ratio against oblivious adversaries can be achieved. We are unable to show this, since our loss factor from Lemma~\ref{lem:egalitarians_cost} is much higher due to the refinement technique.

Another natural question is to consider weighted $k$-server for more general metrics. As discussed in Section~\ref{sec:rel_work}, nothing is known even for the line beyond $k=2$. Obtaining any upper bound that is only a function of $k$ would be very interesting, as it should lead to interesting new insights on the generalized work-function algorithm (which seems to be the only currently known candidate algorithm for this problem).

Finally, the generalized k-server problem, described in Section~\ref{sec:rel_work}, is a far reaching generalization of the weighted $k$-server problem for which no upper bound is known beyond $k=2$, even for very special and seemingly easy cases. For example, when all metrics are uniform, Koutsoupias and Taylor \cite{KT04} showed a lower bound of $2^k-1$, but no upper bounds are known. 
We feel that exploring this family of problems should lead to very interesting techniques for online algorithms.

\section*{Acknowledgments}
We are grateful to Ren\'e Sitters for first bringing the problem to our attention.
We would like to thank Niv Buchbinder, Ashish Chiplunkar and Janardhan Kulkarni for several useful discussions during the initial phases of this project.
Part of the work was done when NB and ME were visiting the Simons Institute at Berkeley and we thank them for their hospitality.

\newpage

\appendix

\section{Lower Bound for general metric spaces}
\label{sec:general_lb}
We now show that our lower bound from Theorem~\ref{thm:lb} naturally extends to
any metric space. We use the same notation for constants and strategies as in
Section~\ref{sec:lb}.

\paragraph{High-level idea.}
Our strategy consists of an arbitrary number of executions of the strategy
$S_{k-1}$.
We define $n_k$ adversaries, each having $s_k$ at a different page, and
we compare $\ALG$ to their average cost.
Recall that $n_k \geq 2^{2^{k-4}}$.

\begin{theorem}
Let $(U,d)$ be an arbitrary metric space with at least $n_k+1$ points.
No deterministic algorithm for the weighted $k$-server problem can be
better than $\Omega(2^{2^{k-4}})$-competitive on $U$.
\end{theorem}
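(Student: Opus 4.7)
The plan is to reuse the strategy $S_{k-1}$ from Section~\ref{sec:lb}, drop the outer layer $S_k$, and replace the single adversary by $n_k+1$ adversaries, each of which fixes the heaviest server at a different point. I would first fix $n_k+1$ distinct points $p_0,\dotsc,p_{n_k}\in U$; by scaling the metric (which preserves the competitive ratio) we may assume pairwise distances lie in $[1,D]$. The weights are chosen well-separated with respect to both $n_k$ and $D$: $w_1=1$ and $w_{i+1}\geq n_k D \sum_{j=1}^i w_j$. The adversarial request sequence is an unbounded succession of executions of $S_{k-1}(P)$, where at the start of each execution $P:=\{p_0,\dotsc,p_{n_k}\}\setminus\{q\}$ with $q$ the current position of $s_k^{\ALG}$ (any $n_k$-subset works when $s_k^{\ALG}\notin\{p_0,\dotsc,p_{n_k}\}$). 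An execution ends when $\ALG$ moves $s_k^{\ALG}$, at which point a new execution with updated $P$ starts. I then define $n_k+1$ adversaries $A_0,\dotsc,A_{n_k}$, where $A_j$ keeps $s_k^{A_j}$ permanently at $p_j$; whenever $p_j\in P$ in the current execution, $A_j$ routes its $k-1$ lighter servers according to the feasible labeling with $p_j$ at the (virtual) top level provided by Lemma~\ref{lem:pages_useful}.

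The lower bound on $\cost(\ALG)$ is immediate: every one of the $N$ executions ends with a move of $s_k^{\ALG}$ at cost at least $w_k$, so $\cost(\ALG)\geq N w_k$. For the upper bound on the sum of adversary costs I would reuse the accounting of Section~\ref{sec:lb_thm}: inside each execution with $p_j\in P$, Property~(i) applies to $A_j$, and the enhanced weight separation absorbs the multiplicative $D$ blow-up from general-metric distances. This yields $\cost(A_j,\text{good})\leq O(\cost(\ALG)/n_k)$, where ``good'' aggregates the executions with $p_j\in P$. For the ``bad'' adversary of each execution (the unique $A_q$ whose fixed $s_k$ matches $\ALG$'s current position), the per-execution cost is at most $\cost_e(\ALG,\text{lighter})$, since both $\ALG$ and $A_q$ effectively serve the execution with $k-1$ lighter servers; summing over executions gives $\sum_j\cost(A_j,\text{bad})\leq\cost(\ALG)$. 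Altogether
\[
\sum_{j=0}^{n_k}\cost(A_j)\leq (n_k+1)\cdot O(\cost(\ALG)/n_k)+\cost(\ALG)\leq O(\cost(\ALG)),
\]
so $\min_j\cost(A_j)\leq O(\cost(\ALG)/n_k)$ and the competitive ratio is $\Omega(n_k)=\Omega(2^{2^{k-4}})$ by \eqref{eq:nklb}.

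The main obstacle is making sure the Section~\ref{sec:lb_thm} accounting actually goes through inside a single general-metric execution of $S_{k-1}$ and, more importantly, controlling the ``bad'' adversary's contribution. The first part is standard: every server move costs at most $D$ times its weight, and this $D$ factor is exactly what the enhanced separation $w_{i+1}\geq n_k D \sum_j w_j$ absorbs; similarly the repositioning cost $W_{k-1}$ paid by $A_j$ across execution boundaries is dominated by $w_k$. The bad-adversary argument is the more delicate step: in isolation a single bad execution could be as expensive for $A_q$ as it is for $\ALG$'s lighter servers, but summed over all executions the total bad-adversary cost is bounded by $\cost(\ALG)$, and dividing by $n_k+1$ adversaries gives the desired factor. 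The degenerate case in which $\ALG$ never moves $s_k^{\ALG}$ (so the single $S_{k-1}$ execution runs forever) is handled by observing that $\ALG$ then behaves as a $(k-1)$-server algorithm on $S_{k-1}$, and the $(k-1)$-server analysis of Theorem~\ref{thm:lb} applied against any good adversary gives the bound directly.
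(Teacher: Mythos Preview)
Your approach is correct but differs from the paper's in one design choice. The paper uses only $n_k$ adversaries (not $n_k+1$) and maintains the invariant that $\ALG$ and the adversaries always have their heaviest servers at $n_k+1$ \emph{distinct} points of $P$: whenever $s_k^{\ALG}$ moves from $p$ to $q$, the unique adversary whose $s_k$ is at $q$ moves it to $p$. This swap guarantees that every adversary's heaviest server is always at some point of the current argument set $P\setminus\{p\}$, so Lemma~\ref{lem:pages_useful} applies to \emph{all} adversaries at \emph{all} times---there is never a ``bad'' adversary to handle---and it yields the exact identity $\sum_i \cost(s_k^{\ADV_i}) = \cost(s_k^{\ALG})$. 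Combined with the lighter-server accounting (precisely your ``good'' case) this gives $\sum_i \cost(\ADV_i)\leq 2\cost(\ALG)$ in two lines. Your version with $n_k+1$ stationary-$s_k$ adversaries works too: the bad-adversary mimicking and the bound $\sum_j \cost(A_j,\text{bad})\leq \cost(\ALG)$ are valid (modulo a per-execution $O(DW_{k-1})$ repositioning term that the weight separation absorbs), and in exchange your adversaries never pay to move $s_k$ beyond the initial setup. Both routes deliver $\Omega(n_k)$; the paper's swap trick trades a small adversary $s_k$-cost for eliminating the bad-adversary case analysis, which makes the write-up shorter.
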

\begin{proof}
Let $\ALG$ be a fixed algorithm. We choose a set $P \subseteq U$ of $n_k+1$
points. Without loss of generality, the minimum distance between two points of
$P$ is 1, and we denote $D$ the maximum distance.

In the constructed instance, the weights of the servers are chosen as follows:
$w_1 = 1$, and $w_i = n_k \cdot D \cdot \sum_{j=1}^{i-1} w_j$,
for $2 \leq i \leq k$.
Let $\cost(s_j^{\ADV_i})$ denote the cost due to moves of server $s_j$ of
adversary $\ADV_i$. Similarly, $\cost(s_j^{\ALG})$ denotes the cost of server
$s_j^{\ALG}$. Let $A_{k-1} = \sum_{i=1}^{n_k} \sum_{j=1}^{k-1}
\cost(s_j^{\ADV_i})$ denote the total cost incurred by the $k-1$ lighter servers
of the adversaries. This way, we have
\begin{equation}
\label{eq:sum_adv}
\sum_{i=1}^{n_k} \cost(\ADV_i) = A_{k-1} + \sum_{i=1}^{n_k} \cost(s_k^{\ADV_i}).
\end{equation} 

We maintain the following invariant: at any given time, $\ALG$ and each of the
adversaries $\ADV_1,\dotsc,\ADV_{n_k}$ have their heaviest server $s_k$ at a
different point of $P$. This way, for each point $p \in P$ either $\ALG$ or some
adversary has its heaviest server at $p$. To achieve this, initially
all adversaries move $s_k$ to a different point. The cost of those moves
is fixed and does not affect the competitive ratio, so we can ignore it. Then,
whenever $\ALG$ moves $s_k^{\ALG}$ from point $p$ to $q$, the adversary which
has its heavy server at $q$ moves it to $p$. The adversaries do not move their
heaviest server $s_k$ at any other time. This way we ensure that
\begin{equation}
\label{eq:advs_sk}
\sum_{i=1}^{n_k} \cost(s_k^{\ADV_i}) = \cost(s_k^{\ALG}).
\end{equation}

It remains to show that we can create a request sequence, such that the cost of
moves of $k-1$ lighter servers of all adversaries is at most cost of $\ALG$,
i.e. $A_{k-1} \leq \cost(\ALG)$.

We create the request sequence using the adaptive strategies defined in
Section~\ref{sec:lb}. The whole sequence consits of arbitrary number of
executions of $S_{k-1}$: if $s_k^{\ALG}$ is located at $p\in P$,
we run the strategy $S_{k-1}(P\setminus \{p\})$. 
Whenever $s_k^{\ALG}$ moves from $p$ to $q$, we terminate the current 
execution of $S_{k-1}(P\setminus \{p\})$ and start
$S_{k-1}(P\setminus \{q\})$.

Each execution of $S_{k-1}(P\setminus\{p\})$ we call a phase. For each phase, we
create a service pattern $\cI$ as in Section~\ref{sec:lb}.
Clearly, $\cI$ has only one $k$th level interval denoted by $I$.
Due to Lemma~\ref{lem:pages_useful}, any point $q \in P\setminus\{p\}$
is a feasible label for $I$. Since each $s_k^{\ADV_i}$ is located at some point
from $P\setminus \{p\}$, all
adversaries can serve the requests of the phase using the service pattern $\cI$.

By Lemma~\ref{lem:prta}, we know that each adversary does not move a server
$s_j$, unless $\ALG$ moves a server $s_i^{\ALG}$, for $i>j$.
Therefore, whenever $\ALG$ moves server $s_i$, the total cost incured by
all adversaries is at most $n_k \cdot D\sum_{j=1}^{i-1} w_j$, which is at most $w_i$ thanks to the weight separation.
Therefore, we get
\begin{align}\label{eq_advs_k-1}
A_{k-1} \leq \cost(\ALG).
\end{align}

Now, combining \eqref{eq:sum_adv},\eqref{eq:advs_sk} and\eqref{eq_advs_k-1},
we get
\begin{align*}
\sum_{i=1}^{n_k} \cost(\ADV_i) &= A_{k-1} + \sum_{i=1}^{n_k} \cost(s_k^{\ADV_i}) \leq \cost(\ALG) + \cost(s_k^{\ALG}) \leq 2 \cdot \cost(\ALG),
\end{align*} 
which implies the lower bound of $n_k/2$.
\end{proof}

\section{Omitted proofs from Section~\ref{sec:ubs}\label{app_sec:ub}}
Here we prove the basic properties of $\WF$ and $\SWF$.
We prove them explicitly only for the general case of arbitrary weights,
however, the proofs can be adapted easily to the case of $d$ different weights
by replacing $w_k$ by $k_d\,w_d$ and $W_{k-1}$ by $W_{d-1}$.

{\bf Lemma~\ref{cl:phase_start_end}. } {\em
Consider a phase that starts at time $t_1$ and finishes at time $t_2$. Let $p$
be the point where the algorithm has its heaviest server $s_k^{\ALG}$ during the
phase. Then,
\begin{enumerate}[(i)]
\item $M_{t_1}\leq \WF_{t_1}(p) \leq M_{t_1} + W_{k-1} $, and
\item $ w_k/2 - 2W_{k-1} \leq \Delta \WF(p) \leq \Delta M + w_k/2 + 2 W_{k-1}.$ 
\end{enumerate}
}

\begin{proof}
{\em (i)}
The fact that $M_{t_1}\leq \WF_{t_1}(p)$ is obvious, as $M_{t_1}$ is the minimum
work function value at time $t_1$. It remains to show that
$\WF_{t_1}(p) \leq M_{t_1} + W_{k-1}$.

Let us suppose that $\ALG$ moved from configuration $A$ to $B$.
Since it is the beginning of this phase, we have $A(k)\neq B(k) = p$
and therefore $d(A,B) \geq w_k$. Let $C$ be a configuration such that
$\WF_{t_1}(C) = M_{t_1}$. Surely, $d(A,C) \leq w_k + W_{k-1}$ and, since $\ALG$
prefered to move to $B$ instead of $C$, we get
\[ \WF_{t_1}(B) + w_k/2 \leq \WF_{t_1}(C) + (w_k+W_{k-1})/2, \]
and therefore $\WF_{t_1}(p) \leq \WF_{t_1}(B) \leq M_{t_1} + W_{k-1}$.

{\em (ii)}
First, we show that $\WF_{t_2}(p) \leq M_{t_2} + w_k/2 + 2W_{k-1}$. Together
with {\em (i)}, this implies $\Delta\WF(p) \leq M_{t_2}-M_{t_1} + w_k/2 +
2W_{k-1}$. For any time $t \in (t_1,t_2)$, if we have $\WF_t(p) > M_t + w_k/2 + W_{k-1}/2$, then $\ALG$ would prefer to
move to some configuration $C$ with $\WF_{t}(C) = M_t$.
Therefore, at time $t'= t_2-1$ we have 
\[ \WF_{t'}(p) \leq M_{t'} + w_k/2 + W_{k-1}/2. \]
Given a request, the value of $\WF(p)$ can increase by at most $w_1$ (since one possible way to serve the request is by using the lightest server), therefore  $\WF_{t_2}(p) \leq M_{t'} + w_k/2 + W_{k-1}/2 + w_1 \leq M_{t_2} + w_k/2 + 2W_{k-1}$.

To get the lower bound for $\Delta\WF(p)$, we claim that
$\WF_{t_2}(p) \geq M_{t_2} + w_k - W_{k-1}$. Together with {\em (i)}, this
already implies $\Delta\WF(p) \geq \Delta M + w_k/2 - 2W_{k-1}$.
Suppose we had $\WF_{t_2}(p) < M_{t_2} + w_k/2 - W_{k-1}$.
Let $A$ be the configuration of $\ALG$ and let $B$ be a configuration such that
$B(k)=p$ and
$\WF_{t_2}(B) < M_{t_2} + w_k/2 - W_{k-1}$. Since $d(A,B) \leq W_{k-1}$, we have
\[ \WF_{t_2}(B) + d(A,B)/2 < M_{t_2} + w_k/2. \]
On the other hand, for any configuration $C$ such that $C(k)\neq p$,
we have $\WF_{t_2}(C)+d(A,C) \geq M_{t_2} + w_k/2$, what means that $\ALG$ would
not prefer to move $s_k$ at time $t_2$, a contradiction.
\end{proof}

\medskip\noindent
{\bf Lemma~\ref{lem:swf_wf_phase}. } {\em
For a phase of $\ALG$, where $s_k^{\ALG}$ is at point $p$, we have that
\begin{enumerate}[(i)]
\item $\Delta\SWF(p) \geq \Delta \WF(p) - W_{k-1}$, and
\item $\Delta\SWF(p) \leq \Delta\WF(p) +  W_{k-1}. $ 
\end{enumerate}
} 
\begin{proof}
{\em (i)} Let $\cI$ be the optimal feasible service pattern to serve all requests of the phase, with a single $k$th level interval
assigned to $p$. Clearly, $\Delta \SWF(p) \geq \cost(\cI)$. Also, $ \Delta
\WF(p) \leq \cost(\cI) + W_{k-1} $, since one possible way to serve requests of
the phase is to use some feasible labeling $\alpha $ of $\cI$: it costs at most
$W_{k-1}$ to move to the initial configuration of $\alpha$, and then serve
requests according to $\alpha$, paying $\cost(\cI)$. Combining this,
we get $\Delta \WF(p) - \Delta \SWF(p) \leq W_{k-1}$.

{\em (ii)} Let $t_1$ and $t_2$ be start and end time of this phase, and $M_t$
the minimum work function value at time $t$. Consider the solution $P$ that
determines the value of $\WF_{t_2}(p)$. We claim that, according to $P$, $s_k$
stays at $p$ during the entire interval $[t_1, t_2]$.
This implies
\begin{align*}
\WF_{t_2}(p) & \geq \WF_{t_1}(p)  +\Delta\SWF(p) - W_{k-1},
\end{align*} 
and thus $ \Delta\SWF(p) \leq \Delta\WF(p) + W_{k-1}$. 

Thus it remains to show that, according to $P$, $s_k$ stays at $p$ during
$[t_1, t_2]$. For contradiction, let $t\in [t_1, t_2]$ be the
last time when $P$ moved $s_k$ to $p$. Then,
\begin{equation}
\label{eq:wf_phase}
\WF_{t_2}(p) \geq M_t + w_k + \Delta_{t}^{t_2} \SWF(p) - W_{k-1}.
\end{equation}
The term $W_{k-1}$ is because the $k-1$ lighter servers of the state defining $M_t$ at time $t$ could be at different locations than in $P$.
Moreover, by the definition of work-function,
\begin{equation}
\label{eq:dwf_phase}
 \WF_{t_2}(p) \leq \WF_t(p) + \Delta_{t}^{t_2}\SWF(p) + W_{k-1}.
\end{equation}
Combining \eqref{eq:wf_phase} and \eqref{eq:dwf_phase},
we get $\WF_t(p) \geq M_t + w_k - 2 W_{k-1}$.
However, by construction of $\ALG$, we have
$\WF_t(p) < M_t + w_k/2 + W_{k-1}$, which is a contradiction, since
$W_{k-1} \leq w_k/(20kn_k)$.
\end{proof}

\medskip\noindent
{\bf Lemma~\ref{lem:WFvsSWF}. } {\em  Let $M_t$ be the minimum value of work function at time $t$. For $t'>t$ and any
$p \in U$ we have the following:
\[ \WF_{t'}(p) \geq \min\{\WF_t(p) + \Delta_{t}^{t'}\SWF(p)
	- W_{k-1}, M_t + w_k\}.
\]
}
\vspace{-6ex}
\begin{proof}
Let us consider the offline optimal schedule serving the requests $\sigma_1,\dotsc, \sigma_{t'}$ and ending at a configuration $C'$ such that $C'(k) = p$. Let $C_t$ denote the configuration of the servers at time $t$ according to that schedule. Since $M_t$ was the
minimum work function value at time $t$ over all possible states, we have
$\WF_t(C_t) \geq M_t$.
There are two cases to consider:
\begin{itemize}
\item $C_t(k) \neq p$: Then $\WF_{t'}(p) \geq M_t + w_k$, because $s_k$ has to move
	to $p$ until time $t'$.
\item $C_t(k) = p$: If $s_k$ moved during this time, then
	$\WF_{t'}(p) \geq M_t + w_k$, otherwise
	$\WF_{t'}(p) \geq \WF_t(p) + \Delta\SWF(p) - W_{k-1}$.
\end{itemize}
\end{proof}

\end{document}